\numberwithin{equation}{section}
\pgfplotsset{compat=newest}
\tikzset{vertex/.style={circle, draw, fill,inner sep=0pt, minimum width=4pt},
	cross/.style={cross out, draw=black, minimum size=2*(#1-\pgflinewidth), inner sep=0pt, outer sep=0pt},
	cross/.default={3pt},}
\DeclareMathOperator{\EX}{\mathbb{E}}
\DeclareMathOperator{\PR}{\mathbb{P}}
\newcommand{\ALG}{\mathrm{ALG}}
\newcommand{\OPT}{\mathrm{OPT}}
\newcommand{\opt}{\OPT}
\newcommand{\alg}{\ALG}
\newcommand{\minset}{{\sc MinSet}\xspace}
\newcommand{\maxset}{{\sc MaxSet}\xspace}
\newcommand{\fs}{\ensuremath{\mathcal{S}}}
\newcommand{\ui}{\ensuremath{\mathcal{I}}}
\newcommand{\mincover}{{\sc MinCover}\xspace}
\newcommand{\mms}{\ensuremath{\mathcal{M}}}
\newcommand{\mmu}{\ensuremath{\mathcal{U}}}
\newcommand{\setcover}{{\sc SetCover}\xspace}
\newcommand{\gs}{\ensuremath{g_s}} 
\newcommand{\gc}{\ensuremath{g_c}} 
\newcommand{\ogs}{\ensuremath{\bar{g}_s}} 
\newcommand{\ogc}{\ensuremath{\bar{g}_c}} 
\newcommand{\alphagreedy}{\textsc{($\alpha,\beta,\gamma$)-Greedy}\xspace} 
\newcommand{\grset}{\ensuremath{\rho}} 
\newcommand{\grsetu}{\ensuremath{\bar{\rho}}} 
\newcommand{\lw}{0.5mm}
\newcommand{\intervalcolor}[5]{
	\draw[#5] (#2, #4) node[#5,anchor=east]{#1} -- (#3, #4);
	\draw[#5] (#2, #4-0.1) -- (#2, #4+0.1);
	\draw[#5] (#3, #4-0.1) -- (#3, #4+0.1);
}
\newcommand{\interval}[4]{
	\intervalcolor{#1}{#2}{#3}{#4}{black}
}
\newcommand{\drawreal}[2]{
	\draw[black!40!olive] (#1, #2) circle (3pt);
}
\newcommand{\intervalr}[5]{
	\interval{#1}{#2}{#3}{#4}
	\drawreal{#5}{#4}
}
\newcommand{\vertDOTS}{\begin{array}{lll}
		\bullet\\
		\bullet\\
		\bullet\\
	\end{array}
}
\theoremstyle{theorem}
\newtheorem{theorem}{Theorem}[section]
\newtheorem{lemma}[theorem]{Lemma}
\newtheorem{coro}[theorem]{Corollary}
\newtheorem{definition}[theorem]{Definition}
\newcommand{\nnew}[1]{#1}
\newcommand{\jnew}[1]{#1}
\begin{document}

\title{Set Selection under Explorable Stochastic Uncertainty via Covering Techniques}
\author{Nicole Megow\inst{1}\orcidID{0000-0002-3531-7644} \and
	Jens Schl\"oter\inst{1}\orcidID{0000-0003-0555-4806}}

\author{%
	Nicole Megow\thanks{University of Bremen, Faculty of Mathematics and Computer Science, \texttt{\{nicole.megow,jschloet\}@uni-bremen.de}} \and Jens Schl{\"o}ter\footnotemark[1]
}

\maketitle

\begin{abstract}
Given subsets of uncertain values, we study the problem of identifying the subset of minimum total value (sum of the uncertain values) by querying as few values as possible.
This \emph{set selection problem} falls into the field of \emph{explorable uncertainty} and is of intrinsic importance therein 
as it implies strong adversarial lower bounds for a wide range of interesting combinatorial problems such as knapsack and matchings.
We consider a stochastic problem variant and give algorithms that, in expectation, improve upon these adversarial lower bounds.
The key to our results is to prove a strong structural connection to a seemingly unrelated covering problem with uncertainty in the constraints via a linear programming formulation.
We exploit this connection to derive an algorithmic framework that can be used to solve both problems under uncertainty, obtaining nearly tight bounds on the competitive ratio. This is the first non-trivial stochastic result concerning the sum of unknown values without further structure known for the set. With our novel methods, we lay the foundations for solving more general problems in the area of explorable uncertainty.
\end{abstract}

\section{Introduction}

\noindent In the setting of  \emph{explorable uncertainty}, we consider optimization problems with uncertainty in the numeric input parameters. 
Instead of having access to the  numeric values, we are given \emph{uncertainty intervals} that contain the precise values. 
Each uncertainty interval can be queried, which reveals the corresponding precise value. 
The goal is to adaptively query intervals until we have sufficient information to optimally (or approximately) solve the underlying optimization problem, while minimizing the number of queries.

This paper mainly considers the \emph{set selection problem} (\minset) under explorable uncertainty.
In this problem, we are given a set of $n$ uncertain values represented by uncertainty intervals $\ui = \{I_1,\ldots,I_n\}$ and a family of $m$ sets $\fs = \{S_1,\ldots,S_m\}$ with $S \subseteq \ui$ for all $S \in \fs$. 
A value $w_i$ lies in its uncertainty interval $I_i$, is initially unknown, and can be revealed via a query. 
The value of a subset $S$ is $w({S}) = \sum_{I_i \in S} w_i$. Our goal is to determine a subset of minimum value as well as the corresponding value by using a minimal number of queries.
It can be seen as an optimization problem with uncertainty in the coefficients of the objective function: 
\begin{equation}
	\label{eq:minset}
	\tag{{\sc SetSelIP}} 
	\begin{array}{llll}
		\min &\sum_{j=1}^m x_j &\sum_{I_i \in S_j} w_i\\
		\text{s.t. }& \sum_{j=1}^m x_j &= 1 &\\
		& \hfill x_j & \in \{0,1\}& \forall j \in \{1,\ldots,m\}.
	\end{array}
\end{equation}
Since the precise $w_i$'s are uncertain, we do not always have sufficient information to just compute an optimal solution to~\eqref{eq:minset}  and instead might have to execute queries in order to determine such a solution. An algorithm for \minset under uncertainty can adaptively query intervals until it has sufficient information to determine an optimal solution to~\eqref{eq:minset}. Adaptivity in this context means that the algorithm can take previous query results into account to decide upon the next query.

In this paper, we consider the stochastic problem variant, where we assume that all values $w_i$ are drawn independently at random from their intervals $I_i$ according to unknown distributions $d_i$. 
Since there are instances that cannot be solved without querying the entire input, we analyze an algorithm $\alg$ in terms of its {\em competitive ratio}: for the set of problem instances $\mathcal{J}$, it is defined as $\max_{J \in \mathcal{J}} {\EX[\alg(J)]}/{\EX[\opt(J)]}$, where $\alg(J)$ is the number of queries needed by $\alg$ to solve instance $J$, and $\opt(J)$ is the minimum number of queries necessary to solve the instance.

\minset is a fundamental problem and of intrinsic importance within the field of explorable uncertainty. 
The majority of existing works
considers the \emph{adversarial setting}, where query outcomes are not stochastic but returned in a worst-case manner.
Selection type problems have been studied in the adversarial setting and constant (matching) upper and lower bounds are known, e.g., for selecting the minimum~\cite{kahan91queries}, 
the $k$-th smallest element~\cite{kahan91queries,feder03medianqueries}, 
a minimum spanning tree~\cite{erlebach08steiner_uncertainty,erlebach14mstverification,megow17mst,Erlebach22Learning}, sorting~\cite{halldorsson19sortingqueries} and geometric problems~\cite{bruce05uncertainty}. However,  these problems essentially boil down to comparing {\em single} uncertainty intervals and identifying the minimum of two unknown values.
Once we have to compare two (even disjoint) sets 
and the corresponding {\em sums} of unknown values, 
no deterministic algorithm can have a better adversarial competitive ratio than~$n$, the number of uncertainty intervals. This has been shown by Erlebach et al.~\cite{erlebach16cheapestset} for \minset, and it
implies adversarial lower bounds for classical combinatorial problems, 
such as, knapsack~\cite{meissner18querythesis} and~matchings~\cite{meissner18querythesis}, and solving ILPs with uncertainty in the cost coefficients~\cite{meissner18querythesis} as in~\eqref{eq:minset} above.
Thus, solving \minset under stochastic uncertainty is an important step towards obtaining improved results for this range of problems. As a main result, we provide substantially better algorithms for \minset under stochastic uncertainty. This is a key step for breaching adversarial lower bounds for a wide range of problems.

For the {\em stochastic setting}, the only related results we are aware of concern sorting~\cite{ChaplickHLT21} and
the problem of finding the minimum in each set of a given collection of sets~\cite{BampisDEdLMS21}.
Asking for the {\em sum} of unknown values is substantially different.

\subsection{The Covering Point of View}

Our key observation is that we can view \minset as a covering problem with uncertainty in the constraints.
To see this, we focus on the structure of the uncertainty intervals and how a query affects it. 
We assume that each interval $I_i \in \ui$ is either open (\emph{non-trivial}) or \emph{trivial}, i.e., $I_i=(L_i,U_i)$ or $I_i=\{w_i\}$; 
a standard technical assumption in explorable uncertainty.
In the latter case, $L_i = U_i = w_i$.
We call $L_i$ and $U_i$ \emph{lower and upper limit}.
If $S$ contains only trivial uncertainty intervals, then we define $I_S = [L_S,U_S] = \{w(S)\}$ and call $I_S$ \emph{trivial}.
Otherwise, we define $I_S = (L_S, U_S)$ .
Clearly, the value $w(S)$ of a set $S \in \fs$ is contained in the interval $I_S$, i.e., $w(S) \in I_S$. We call $I_S$ the \emph{uncertainty interval} of set $S$. See~\Cref{fig:prelim:setsel-example-1} for an example.

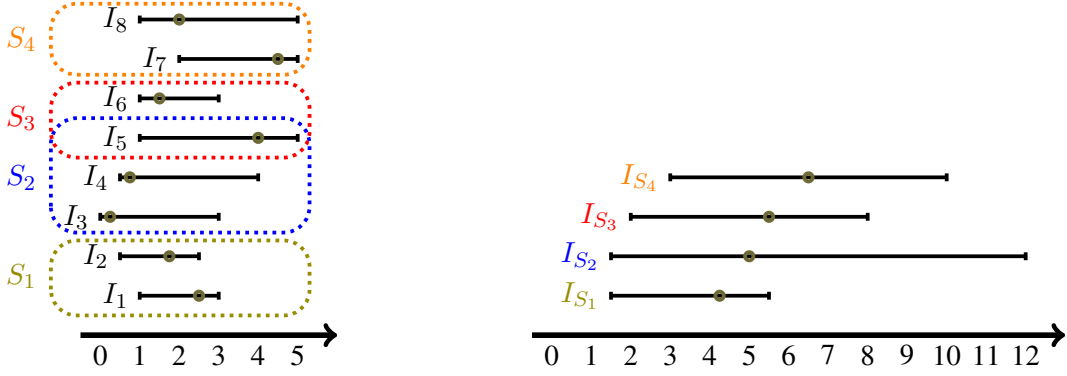
\begin{figure}[tb]
	\centering		
	\begin{subfigure}{0.4\textwidth}
		\begin{tikzpicture}[thick,line width = \lw, scale=0.525]
			\draw[->, line width = 0.75mm] (-0.5,-0.5) -- (6,-0.5);

			\node[] (l1) at (-2,1) {\textcolor{olive}{$S_1$}};
			\path [rounded corners=2ex] (-1.25, 0) [draw,dotted, color= olive] rectangle (5.3,1.9);
			\intervalr{$I_1$}{1}{3}{0.5}{2.5}
			\intervalr{$I_2$}{0.5}{2.5}{1.5}{1.75}

			\node[] (l1) at (-2,3.5) {\textcolor{blue}{$S_2$}};
			\path [rounded corners=2ex] (-1.25, 2.1) [draw,dotted, color= blue] rectangle (5.3,5);
			\intervalr{$I_3$}{0}{3}{2.5}{0.25}
			\intervalr{$I_4$}{0.5}{4}{3.5}{0.75}
			\intervalr{$I_5$}{1}{5}{4.5}{4}

			\node[] (l1) at (-2,5) {\textcolor{red}{$S_3$}};
			\path [rounded corners=2ex] (-1.25,4) [draw,dotted, color= red] rectangle (5.3,5.9);
			\intervalr{$I_6$}{1}{3}{5.5}{1.5}

			\node[] (l1) at (-2,7) {\textcolor{orange}{$S_4$}};
			\path [rounded corners=2ex] (-1.25,6.1) [draw,dotted, color= orange] rectangle (5.3,7.9);
			\intervalr{$I_7$}{2}{5}{6.5}{4.5}			
			\intervalr{$I_8$}{1}{5}{7.5}{2}
			
			\node[] (l1) at (0,-1) {0};
			\node[] (l1) at (1,-1) {1};
			\node[] (l1) at (2,-1) {2};
			\node[] (l1) at (3,-1) {3};
			\node[] (l1) at (4,-1) {4};
			\node[] (l1) at (5,-1) {5};
		\end{tikzpicture}  
	\end{subfigure}
	\begin{subfigure}{0.55\textwidth}
		\begin{tikzpicture}[thick,line width = \lw, scale=0.525]
			\draw[->, line width = 0.75mm] (-0.5,-0.5) -- (13,-0.5);
			\node[] (l1) at (0,-1) {0};
			\node[] (l1) at (1,-1) {1};
			\node[] (l1) at (2,-1) {2};
			\node[] (l1) at (3,-1) {3};
			\node[] (l1) at (4,-1) {4};
			\node[] (l1) at (5,-1) {5};
			\node[] (l1) at (6,-1) {6};
			\node[] (l1) at (7,-1) {7};
			\node[] (l1) at (8,-1) {8};
			\node[] (l1) at (9,-1) {9};
			\node[] (l1) at (10,-1) {10};
			\node[] (l1) at (11,-1) {11};
			\node[] (l1) at (12,-1) {12};

			\intervalr{$\textcolor{olive}{I_{S_1}}$}{1.5}{5.5}{0.5}{4.25}
			\intervalr{$\textcolor{blue}{I_{S_2}}$}{1.5}{12}{1.5}{5}
			\intervalr{$\textcolor{red}{I_{S_3}}$}{2}{8}{2.5}{5.5}
			\intervalr{$\textcolor{orange}{I_{S_4}}$}{3}{10}{3.5}{6.5}
			
			\phantom{				\path [rounded corners=2ex] (-1.25,6.1) [draw,dotted, color= orange] rectangle (5.3,7.9);}
			
		\end{tikzpicture}
	\end{subfigure}
	\caption{Instance for set selection under explorable uncertainty with intervals $\ui = \{I_1,I_2,\ldots, I_8\}$ and sets $\fs = \{S_1,S_2,S_3,S_4\}$ with $S_1 = \{I_1,I_2\}$, $S_2 = \{I_3,I_4,I_5\}$, $S_3 = \{I_4,I_5,I_6\}$ and $S_4 =  \{I_7,I_8\}$ (left) and the corresponding uncertainty intervals $I_{S_j}$ for the sets $S_j$ with $j \in \{1,\ldots,4\}$ (right). Green circles illustrate the precise values.}
	\label{fig:prelim:setsel-example-1}
\end{figure}

Since the intervals $(L_{S},U_{S})$  of the sets $S \in \fs$ can overlap, we might have to execute queries to determine the set of minimum value.
A query to an interval $I_i$ reveals the precise value $w_i$ and, thus, replaces both, $L_i$ and $U_i$, with $w_i$.
In a sense, a query to an $I_i \in S$ reduces the range $(L_S,U_S)$ in which $w(S)$ might lie by increasing $L_{S}$ by $w_i - L_i$ and decreasing $U_{S}$ by $U_i-w_i$. See~\Cref{fig:prelim:setsel-example-3} for an example.
We use $L_{S}$ and $U_{S}$ to refer to the initial limits and $L_{S}(Q)$ and $U_{S}(Q)$ to denote the limits of a set $S \in \fs$ \emph{after} querying a set of intervals~$Q \subseteq \ui$.

\begin{figure}[b]
	\begin{tikzpicture}[thick,line width = \lw, scale=0.85]
		\draw[black,dotted] (0, 0) node[black,anchor=east]{$I_i$}  -- (3, 0);
		\draw[black] (0, -0.1) -- (0, 0.1);
		\draw[black] (3, -0.1) -- (3, 0.1);
		\drawreal{1.25}{0}
		
		\draw[decoration= {brace, amplitude = 5 pt, aspect = 0.5}, decorate] (1.2,-0.25) -- (0,-0.25);
		\draw[decoration= {brace, amplitude = 5 pt, aspect = 0.5}, decorate] (3,-0.25) -- (1.3,-0.25);
		
		\node[] (l1) at (0.635,-0.75) {\small$w_i-L_i$};
		\node[] (l1) at (2.375,-0.75) {\small$U_i-w_i$};

		\draw[black] (6+1.25, 0.25)  -- (10-1.75, 0.25);
		\draw[black] (6+1.25, 0.25-0.1) -- (6+1.25, 0.25+0.1);
		\draw[black] (10-1.75, 0.25-0.1) -- (10-1.75, 0.25+0.1);
		
		\draw[black,dotted] (6, 0.25) node[black,anchor=east]{$I_{S_1}$} -- (10, 0.25);
		\draw[black] (6, 0.25-0.1) -- (6, 0.25+0.1);
		\draw[black] (10, 0.25-0.1) -- (10, 0.25+0.1);

		\draw[black] (7+1.25, -0.25)  -- (14-1.75, -0.25);
		\draw[black] (7+1.25, -0.25-0.1) -- (7+1.25, -0.25+0.1);
		\draw[black] (14-1.75, -0.25-0.1) -- (14-1.75, -0.25+0.1);
		
		\draw[black,dotted] (7, -0.25) node[black,anchor=east]{$I_{S_2}$}  -- (14, -0.25);
		\draw[black] (7, -0.25-0.1) -- (7, -0.25+0.1);	content...
		\draw[black] (14, -0.25-0.1) -- (14, -0.25+0.1);

		\draw[decoration= {brace, amplitude = 5 pt, aspect = 0.5}, decorate] (7+1.25,-0.5) -- (7,-0.5);
		\draw[decoration= {brace, amplitude = 5 pt, aspect = 0.5}, decorate] (14,-0.5) -- (14-1.75,-0.5);
		
		\node[] (l1) at (7+1.25*0.5,-1) {\small$w_i-L_i$};
		\node[] (l1) at (14-1.75*0.5,-1) {\small$U_i-w_i$};
	\end{tikzpicture}
	\caption{Example of how a query to an interval $I_i$ changes the intervals of two sets $S_1,S_2$ with $I_i \in S_1 \cap S_2$ in the set selection problem under explorable uncertainty.}
	\label{fig:prelim:setsel-example-3}
\end{figure}
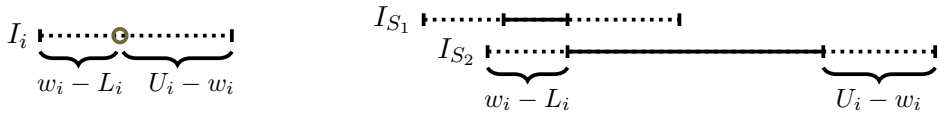

Let $w^* = \min_{S \in \fs} w(S)$ be the initially uncertain minimum value. 
To solve the problem, we have to adaptively query a set of intervals $Q$ until $U_{S^*}(Q) = L_{S^*}(Q) = w^*$ holds for some $S^* \in \fs$ and $L_S(Q) \ge w^*$ holds for all $S \in \fs$.
Only then, we know for sure that $w^*$ is indeed the minimum set value and that $S^*$ achieves this value. \Cref{fig:prelim:setsel-example-2} shows the structure of an instance that has been solved.
\begin{figure}[t]
	\centering		
	\begin{subfigure}{0.4\textwidth}
		\begin{tikzpicture}[thick,line width = \lw, scale=0.525]
			\draw[->, line width = 0.75mm] (-0.5,-0.5) -- (6,-0.5);

			\node[] (l1) at (-2,1) {\textcolor{olive}{$S_1$}};
			\path [rounded corners=2ex] (-1.25, 0) [draw,dotted, color= olive] rectangle (5.3,1.9);
			
			\node[] (l1) at (1.75,0.5) {$I_1$};
			\drawreal{2.5}{0.5}
			
			\node[] (l1) at (1,1.5) {$I_2$};
			\drawreal{1.75}{1.5}

			\node[] (l1) at (-2,3.5) {\textcolor{blue}{$S_2$}};
			\path [rounded corners=2ex] (-1.25, 2.1) [draw,dotted, color= blue] rectangle (5.3,5);
			\intervalr{$I_3$}{0}{3}{2.5}{0.25}
			\intervalr{$I_4$}{0.5}{4}{3.5}{0.75}
			
			\node[] (l1) at (3.25,4.5) {$I_5$};
			\drawreal{4}{4.5}
			
			\node[] (l1) at (-2,5) {\textcolor{red}{$S_3$}};
			\path [rounded corners=2ex] (-1.25,4) [draw,dotted, color= red] rectangle (5.3,5.9);
			\intervalr{$I_6$}{1}{3}{5.5}{1.5}

			\node[] (l1) at (-2,7) {\textcolor{orange}{$S_4$}};
			\path [rounded corners=2ex] (-1.25,6.1) [draw,dotted, color= orange] rectangle (5.3,7.9);

			\node[] (l1) at (3.75,6.5) {$I_7$};
			\drawreal{4.5}{6.5}
			
			\intervalr{$I_8$}{1}{5}{7.5}{2}
			
			\node[] (l1) at (0,-1) {0};
			\node[] (l1) at (1,-1) {1};
			\node[] (l1) at (2,-1) {2};
			\node[] (l1) at (3,-1) {3};
			\node[] (l1) at (4,-1) {4};
			\node[] (l1) at (5,-1) {5};
		\end{tikzpicture}  
	\end{subfigure}
	\begin{subfigure}{0.55\textwidth}
		\begin{tikzpicture}[thick,line width = \lw, scale=0.525]
			\draw[->, line width = 0.75mm] (-0.5,-0.5) -- (13,-0.5);
			\node[] (l1) at (0,-1) {0};
			\node[] (l1) at (1,-1) {1};
			\node[] (l1) at (2,-1) {2};
			\node[] (l1) at (3,-1) {3};
			\node[] (l1) at (4,-1) {4};
			\node[] (l1) at (5,-1) {5};
			\node[] (l1) at (6,-1) {6};
			\node[] (l1) at (7,-1) {7};
			\node[] (l1) at (8,-1) {8};
			\node[] (l1) at (9,-1) {9};
			\node[] (l1) at (10,-1) {10};
			\node[] (l1) at (11,-1) {11};
			\node[] (l1) at (12,-1) {12};

			\node[] (l1) at (3.5,0.5) {$\textcolor{olive}{I_{S_1}}$};
			\drawreal{4.25}{0.5}
			
			\intervalr{$\textcolor{blue}{I_{S_2}}$}{1.5+3}{12-1}{1.5}{5}
			\intervalr{$\textcolor{red}{I_{S_3}}$}{2+3}{8-1}{2.5}{5.5}
			\intervalr{$\textcolor{orange}{I_{S_4}}$}{3+2.5}{10-0.5}{3.5}{6.5}
			
			\phantom{				\path [rounded corners=2ex] (-1.25,6.1) [draw,dotted, color= orange] rectangle (5.3,7.9);}
			
		\end{tikzpicture}
	\end{subfigure}
	\caption{Instance of~\Cref{fig:prelim:setsel-example-1} after querying $Q = \{I_1,I_2,I_5,I_7\}$: Updated uncertainty intervals $\ui$ (left) and updated set uncertainty intervals (right).}
	\label{fig:prelim:setsel-example-2}
\end{figure}
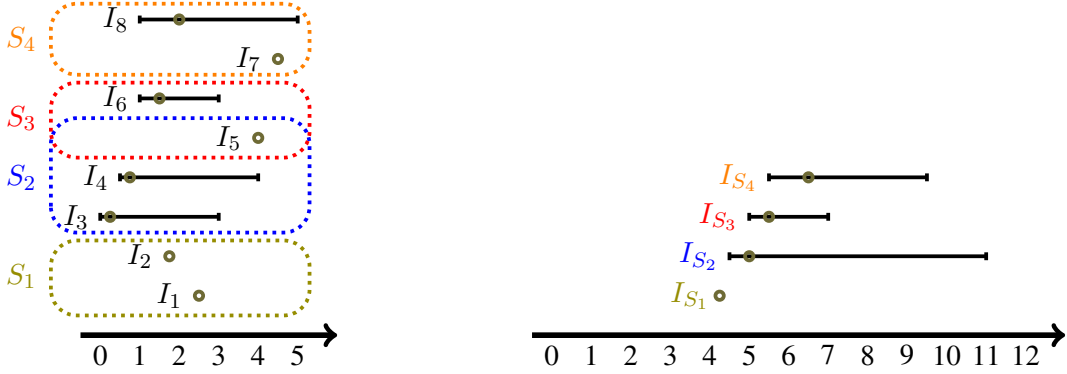
For an instance $(\ui, \fs)$ of \minset, the following integer linear program (ILP) with $a_i = w_i - L_i$ for all $I_i \in \ui$ and $b_S = w^* - L_S$ for all $S \in \fs$ formulates this problem:
\begin{equation}\tag{{\sc MinSetIP}}\label{eq:SetSelection}
	\begin{array}{lll}
		\min &\sum_{I_i \in \ui} x_i\\
		\text{s.t. }& \sum_{I_i \in S} x_i \cdot a_i \ge b_S &\forall S \in \fs\\
		& x_i \in \{0,1\}& \forall I_i \in \ui
	\end{array}
\end{equation}
Here, the variable $x_i$, $I_i \in \ui$, indicates whether interval $I_i$ is selected to be queried ($x_i = 1$) or not ($x_i = 0$) and our objective is to minimize the number of queries.

Observe that this ILP is a special case of the \emph{multiset multicover problem} (see, e.g.,~\cite{rajagopalan1998}).
If $a_i = w_i-L_i = 1$ for all $I_i \in \ui$ and $b_S = w^* - L_{S} = 1$ for all $S \in \fs$, then the problem is exactly the classical \setcover problem with $\ui$ corresponding to the \setcover \emph{sets} and $\fs$ corresponding to the \setcover \emph{elements}.

The optimal solution to~\eqref{eq:SetSelection} is the optimal query set for the  corresponding \minset instance; this is not hard to see but we also formally prove it in~\Cref{app:equivalence}. 
Under uncertainty however, the coefficients $a_i = w_i - L_i$ and right-hand sides $b_S = w^* - L_S$ are unknown to us. We only know that $a_i \in (L_i-L_i, U_i-L_i)= (0,U_i-L_i)$ as $a_i = (w_i-L_i)$ and $w_i \in (L_i,U_i)$. 
Only once we query an interval $I_i$, the precise value $w_i$ and, thus, the coefficient $a_i$ is revealed to us.
In a sense, to solve \minset under uncertainty, we have to solve~\eqref{eq:SetSelection} with uncertainty in the coefficients
and with irrevocable decisions.
For the rest of the paper, we interpret \minset under uncertainty in exactly that way: We have to solve~\eqref{eq:SetSelection} without knowing the coefficients in the constraints.
Whenever we \emph{irrevocably} add an interval $I_i$ to our solution (i.e., set $x_i$ to $1$), the information on the coefficients (in form of $w_i$) is revealed to us. 
Our goal is to add elements to our solution until it becomes feasible for~\eqref{eq:SetSelection}, and to minimize the number of added elements.
In this interpretation, the terms \enquote{querying an element} and \enquote{adding it to the solution} are interchangeable, and we use them as such.

Our main contribution is an algorithmic framework that exploits techniques for classical covering problems and adapts them to handle uncertainty in the coefficients $a_i$ and the right-hand sides $b_S$.
This framework not only allows us to obtain improved results for \minset under stochastic uncertainty but also to solve other covering problems with uncertainty in the constraints. 

\subsection{Our Results}

We design a polynomial-time algorithm for \minset under stochastic uncertainty with competitive ratio $\mathcal{O}(\frac{1}{\tau} \cdot \log^2 m)$, where $m$ is the number of sets (number of constraints in~\eqref{eq:SetSelection}) and parameter $\tau$ characterizes how \enquote{balanced} the distributions of values within the given intervals are. More precisely, $\tau = \min_{I_i \in \ui} \tau_i$ and $\tau_i$ is the probability that $w_i$ is larger than the center of $I_i$ (e.g., for uniform distributions $\tau = \frac{1}{2}$).
\jnew{All our results assume $\tau > 0$.}
This is the first stochastic result in explorable uncertainty concerning the sum of unknown values and it builds on new methods that shall be useful for solving more general problems in this field.
The ratio is independent of the number of elements, $n$. In particular for a small number of sets,~$m$, this is a significant improvement upon the adversarial lower bound of $n$~\cite{erlebach16cheapestset}.
Dependencies on parameters such as $\tau$ are quite standard and necessary
~\cite{MaeharaY20,Blum2020,GoemansV06,Vondrak07,Behnezhad2022}. For example, in~\cite{MaeharaY20} the upper bounds depend on the probability to draw the largest value of the uncertainty interval, which is an even stricter assumption that does not translate to open intervals.


We remark that the hidden constants in the performance bounds depend on the upper limits of the given intervals.
Assuming those to be constant is also a common assumption; see, e.g.,~\cite{maehara2020}.
Even greedy algorithms for covering problems similar to~\eqref{eq:SetSelection} \emph{without} uncertainty have such dependencies~\cite{rajagopalan1998,vazirani2001,dobson1982}.
While there exist non-greedy algorithms for covering problems without such dependencies~\cite{Kolliopoulos2002,Kolliopoulos2005}, it remains open whether they can be adjusted to the setting with uncertainty and, in particular, irrevocable decisions.

As \minset contains the classical \setcover problem, an approximation factor better than $\mathcal{O}(\log m)$ is unlikely, unless P$=$NP~\cite{Dinur2014}. 
We show that this holds also in the stochastic setting, even with uniform distributions. 
We further show that \jnew{$\frac{2}{\tau}$} is a lower bound for both problems under stochastic explorable uncertainty, even if the sets are pairwise disjoint. 
Hence, the dependencies on $\log m$ and \jnew{$\frac{2}{\tau}$} in our upper bounds are necessary. 

In the special case that all given sets are disjoint, we provide a simpler algorithm with competitive ratio $\frac{2}{\tau}$\jnew{, which matches the lower bound}. 
This is a gigantic improvement compared to the adversarial setting, where the lower bound of $n$ holds even for disjoint sets~\cite{erlebach16cheapestset}.

We remark that all our results for \minset translate to the maximization variant of the problem, where we have to determine the set of maximum value (cf.~\Cref{app:maxset}).

Algorithmically, we exploit the covering point of view to introduce a class of greedy algorithms that use the same basic strategy as the classical \setcover greedy algorithm~\cite{chvatal1979}.
However, we do not have sufficient information to compute and query an exact greedy choice under uncertainty as this choice depends on uncertain parameters.
Instead, we show that it is sufficient to query a small number of elements that together achieve a similar greedy value to the exact greedy choice.
If we do this repeatedly and the number of queries per iteration is small in expectation, then we achieve guarantees comparable to the approximation factor of a greedy algorithm with full information.
It is worth noting that this way of comparing an algorithm to the optimal solution is a novelty in explorable uncertainty as all previous algorithms for adversarial explorable uncertainty (\minset and other problems) exploit \emph{witness sets}. A witness set is a set of queries $Q$ such that each feasible solution has to query at least one element of $Q$, which allows to compare an algorithm with an optimal solution.

Our results translate to different covering problems under uncertainty. 
In particular, we consider the variant of~\eqref{eq:SetSelection}  under uncertainty with \emph{deterministic right-hand sides}. 
We give a simplified algorithm with improved competitive ratio~$\mathcal{O}(\frac{1}{\tau} \cdot \log m)$.
For a different balancing parameter, this holds even for the more general variant, where a variable can have different coefficients for the different constraints each with an individual uncertainty interval and distribution.  
For this problem, adding an element to the solution reveals \emph{all} corresponding coefficients.

\subsection{Further Previous Work}

Since \minset under uncertainty can be interpreted as both, a query minimization problem and a covering problem with uncertainty, we in the following summarize previous work from both fields.

\subsubsection{Previous Work on Query Problems}

For adversarial \minset under uncertainty, Erlebach et al.~\cite{Erlebach2016} show a (best possible) competitive ratio of $2d$, where $d$ is the cardinality of the largest set. In the lower bound instances, $d\in \Omega(n)$.
The algorithm repeatedly queries disjoint witness sets of size at most $2d$. This result was stated for the setting, in which it is not necessary to determine the value of the minimal set; if the value has to be determined, the bounds change to $d$. 

Further related work on \minset includes the result by~Maehara and Yamaguchi~\cite{maehara2020}, who consider packing ILPs with (stochastic) uncertainty in the cost coefficients, which can be queried.
They present a framework for solving several problems and bound the absolute number of iterations that it requires to solve them, instead of the competitive ratio.
However, we show in Appendix~\ref{app:yamaguchi} that their algorithm has competitive ratio $\Omega(n)$ for \minset under uncertainty, even for uniform distributions. Thus, it does not improve upon the adversarial lower bound.

Also Wang et al.~\cite{Wang22} consider selection-type problems in a somewhat related model. 
In contrast to our setting, they consider different constraints on the set of queries that, in a way, imply a budget on the number of queries. 
They 
solve optimization problems with respect to this budget, which has a very different flavor than our setting of minimizing the number of queries.

Furthermore, there is related work in a setting, where a query reveals the {\em existence} of entities instead of numeric values, e.g., the existence of edges in a graph,~c.f.~\cite{Blum2020,GoemansV06,Vondrak07}.
For example, Behnezhad et al.~\cite{Behnezhad2022} showed that vertex cover can be approximated within a factor of $(2+\epsilon)$ with only a constant number of queried edges per vertex. 
As edges define constraints, the result considers uncertainty only in the right-hand sides.

\subsubsection{Previous Work on Covering Problems with Uncertainty}

We continue by summarizing previous work on covering problems in different adversarial and stochastic settings.\todo{This is probably too long. We can decide later what to keep.}

In the \emph{online} version of \setcover~\cite{alon2003}, we are given a ground set of elements and a family of subsets of these elements. In contrast to offline \setcover, we do not necessarily have to cover all elements of the ground set. Instead, the members of the ground set that we do actually have to cover arrive online in an adversarial manner. Whenever an element arrives, we have to cover it by irrevocably adding a set containing the element to our solution, unless a previously added set already contains the element.
In a sense, online \setcover is a variant of~\eqref{eq:SetSelection} under uncertainty, where only the right-hand sides are uncertain in $\{0,1\}$ and all left-hand side coefficients are known and either one or zero. In contrast to \minset under uncertainty, the adversary for online \setcover is in a sense more powerful when selecting the right-hand sides as they do not depend on a common value $w^*$. Because of these differences, online \setcover has a very different flavor to \minset under uncertainty. The same holds for the stochastic version of online \setcover~\cite{Gupta2023,Grandoni2013}, where the subset of elements to be covered is drawn from a probability distribution.

A different stochastic variant of \setcover considers a \emph{two-stage} version of the problem~\cite{ShmoysS04}. In the first stage, we do not yet know which members of the ground set actually need to be covered. After the first stage, the elements to be covered are drawn from a probability distribution and in the second stage we have full knowledge of the elements to be covered. The crux of this two-stage variant is that adding sets to the solution in the first stage can be cheaper than adding them to the solution during the second stage. This again leads to a very different flavor than our setting.

While these \setcover variants consider uncertainty in the set of elements that need to be covered, Goemans and Vondr{\'{a}}k~\cite{Goemans2006} consider a variant where the elements to be covered are certain but there is uncertainty in which elements are covered by the sets. 
For each set, a vector describing the elements that are covered by the set is drawn according to a probability distribution.
This corresponds to a variant of~\eqref{eq:SetSelection}, where all right-hand sides are one but the left-hand side coefficients are uncertain in $\{0,1\}$.
Even in comparison to \minset under uncertainty with deterministic right-hand sides, there are several further difference besides the restriction of the coefficients to values in $\{0,1\}$.
For one,~\cite{Goemans2006} assumes access to the probability distributions. In particular, their algorithms are able to compute certain expected values. For our stochastic setting, we do not have sufficient information to compute expected values and the adversary still has some power in the selection of the unknown distributions as long as it respects the balancing parameter.
On the other hand, their \setcover variant allows some distributions that are not possible in \minset. In particular, an interval $I_i$ in \minset has the same coefficient $a_i = (w_i-L_i)$ in each constraint for a set $S$ with $I_i \in S$. Such a restriction does not exist in the problem considered in~\cite{Goemans2006}.
This in a sense makes their problem incomparable to \minset under uncertainty.
Furthermore,~\cite{Goemans2006} analyzes the approximation ratio instead of the competitive ratio. That is, they compare the expected objective value of an algorithm against the expected objective of the best possible algorithm instead of the expected optimum. 
To that end, they give an $m$-approximation for the stochastic \setcover variant. If sets can be added to the solution multiple times while each time drawing a new realization from the same distribution, they give a $\mathcal{O}(\log m)$-approximation. 

Besides related work on stochastic \setcover variants, there is previous related work on the more general \emph{(stochastic) submodular covering problem} (cf., e.g.,~\cite{Wolsey82,AgarwalAK19,Ghuge2021}). In the \emph{submodular covering problem}, we are given a ground set of elements $E$ and a submodular function $f\colon 2^E \rightarrow \mathbb{N}_+$. The goal is to find a subset $S \subseteq E$ of minimum cardinality such that $f(S) = f(E)$. 
This non-stochastic submodular covering problem contains offline~\minset~\cite{Wolsey82}, i.e., \eqref{eq:SetSelection} with full knowledge of all coefficients and right-hand sides. To see this, consider an instance $(\ui,\fs)$ of \minset. We can interpret the intervals as the ground set of elements, i.e., $E = \ui$ and use the submodular function $f(Q) = \sum_{S \in \fs} \min\{\sum_{I_i \in S \cap Q}  w_i - L_i, w^* - L_S\}$ for $Q \subseteq E$. Then, $f(E)$ is the sum of right-hand sides of~\eqref{eq:SetSelection} and $f(Q) = f(E)$ holds for a subset $Q \subseteq \ui = E$ if and only if $Q$ is feasible for~\eqref{eq:SetSelection}.
The best-known algorithm for the submodular covering problem achieves an approximation ratio of $\mathcal{O}(\log(f(E)))$~\cite{Wolsey82} and no polynomial-time algorithm can be better unless P=NP~\cite{Dinur2014}.

In the \emph{stochastic submodular covering problem}, we are given random variables $X_1,\ldots,X_n$ that independently realize to subsets of $E$ according to known probability distributions.
The task is to sequentially and irrevocably add random variables $X_i$ to the solution $\mathcal{X}$ until $f(\bigcup_{X_i \in \mathcal{X}} X_i) = f(E)$.
Whenever a random variable $X_i$ is added to the solution, the realization of the variable is revealed.
While this general setting is similar to \minset under uncertainty, there are some differences. 
In the stochastic submodular covering problem, the value $f(\bigcup_{X_i \in \mathcal{X}} X_i)$ only depends on the realizations of the random variables in $\mathcal{X}$. For the submodular function defined above for a \minset instance, the function value $f(Q)$ depends also on the uncertain $w^*$ and, therefore, on elements outside of $Q$. Furthermore, as the intervals $\ui$ in a \minset instance are continuous, modeling them as a stochastic submodular covering instance would require some form of discretization.
Independent of these differences, all results on the stochastic submodular covering problem (to our knowledge) assume known distributions and actively use them, which is in contrast to our stochastic setting. Furthermore, all these results analyze the approximation ratio instead of the competitive ratio. Thus, existing results for the stochastic submodular covering problem cannot directly be applied to our stochastic setting. 
This also holds for a range of problem variants that have been considered in the literature (see, e.g., \cite{GolovinK11,KambadurNN17,ImNZ16,DeshpandeHK16,NavidiKN20}).

\subsection{Outline}

To start the paper, we, in~\Cref{sec:Disjoint}, consider the special case of \minset under uncertainty with pairwise disjoint sets.  We give a lower bound of $\frac{2}{\tau}$ and a matching upper bound on the competitive ratio for this special case. These bounds nicely illustrate the challenges caused by the uncertainty and the techniques that we use to tackle them, also later on for the general problem. 

Afterwards, in~\Cref{sec:framework}, we move on to the general \minset and discuss the hardness of approximation as well as approximations of the offline problem variant. Based on observations for the offline problem, we introduce an algorithmic framework that can be used to solve \minset under uncertainty and other covering problems with uncertainty in the constraints. 

For the remaining paper, we show how to implement the framework for \minset under uncertainty with deterministic right-hand sides (\Cref{sec:mincover}), for more general covering problems under uncertainty with deterministic right-hand sides (\Cref{sec:mincover}) and, finally, for the general \minset (\Cref{sec:minset}). Using these implementations and our observations for the special case of disjoint sets, we prove our algorithmic results.

\section{Disjoint \minset}
\label{sec:Disjoint}

Consider the special case of \minset where all sets are pairwise disjoint, i.e., $S \cap S' = \emptyset$ for all $S,S' \in \fs$ with $S \not= S'$. 
We call this special case \emph{disjoint \minset}.
Disjoint \minset is of particular interest as it gives 
lower bounds for several problems under adversarial explorable uncertainty, cf.~\cite{erlebach16cheapestset,meissner18querythesis}.
To illustrate the challenges posed by having stochastic uncertainty in the input, we give the following lower bound.
Recall that the balancing parameter is defined as $\tau = \min_{I_i \in \ui} \tau_i$, where $\tau_i$ is the probability that $w_i$ is larger than the center of $I_i$. We use $\ALG$ and $\OPT$ to refer to an algorithm and an optimal solution, respectively. Slightly abusing the notation, we use the same terms to refer also to the corresponding numbers of queries.

\jnew{First, we show the following lower bound that even holds for known probability distributions. Afterwards, we prove a slightly stronger bound exploiting unknown distributions.}\todo{Maybe just remove the weaker bound?}

\begin{theorem}
	\label{thm:lb:known-dists}
	No deterministic algorithm for \minset under uncertainty has a competitive ratio better than $\frac{1}{\tau}$, even if all given sets are pairwise disjoint and \jnew{the distributions are known.}
\end{theorem}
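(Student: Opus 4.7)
The plan is to exhibit a parametrized family of instances on two disjoint sets for which the ratio $\EX[\ALG]/\EX[\OPT]$ tends to $1/\tau$. For each integer $n$, consider the instance $J_n$ with $S_1 = \{I_1,\ldots,I_n\}$, where each $I_i = (0,1)$ is non-trivial, and $S_2 = \{I_{n+1}\}$, where $I_{n+1} = \{c\}$ is trivial with value $c = 1 - 1/(2n)$. The (known) distribution on each $I_i$ with $i\leq n$ is the two-point law $w_i = a := 1/n$ with probability $1-\tau$ and $w_i = b := 1 - 1/(4n)$ with probability $\tau$, independently across $i$. For $n\geq 3$ one checks $b > 1/2 > a$, so $\tau_i = \tau$; and one also verifies $b > c$, $(n-1)a < c$, and $na > c$. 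These three inequalities drive the entire analysis.

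I would next observe that $S_2$ is the unique minimum set in every realization, since $w(S_1) \geq na > c = w(S_2)$. Via the covering viewpoint of Section~1.1, solving $J_n$ therefore reduces to certifying $\sum_{I_i\in S_1} w_i \geq c$, i.e.\ picking $Q\subseteq S_1$ with $L_{S_1}(Q)\geq c$. Because $b > c$, a single query to any $I_i$ with $w_i = b$ suffices; because $(n-1)a < c$, if every $w_i = a$ all $n$ intervals of $S_1$ must be queried. Hence $\OPT = 1$ when some $w_i = b$ (probability $1-(1-\tau)^n$) and $\OPT = n$ otherwise, which yields $\EX[\OPT] = 1 + (n-1)(1-\tau)^n$. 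Choosing $n$ large relative to $1/\tau$ (e.g.\ $n = \lceil 2\tau^{-1}\ln(1/\tau)\rceil$) makes $(n-1)(1-\tau)^n \to 0$, so $\EX[\OPT] \to 1$.

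For any deterministic algorithm, the $w_i$ are i.i.d.\ across $I_1,\dots,I_n$ and the stopping criterion depends only on the multiset of queried values, so by an exchangeability argument the expected number of queries equals that of the fixed-order strategy that queries $I_1,I_2,\dots$ and stops as soon as the running sum reaches $c$. In this strategy the stopping time is $T = \min(K,n)$ where $K$ is the index of the first realization equal to $b$, and the tail-sum identity $\EX[T]=\sum_{k=1}^{n} \Pr[T\geq k] = \sum_{k=1}^{n} (1-\tau)^{k-1}$ gives $\EX[\ALG] = \EX[T] = (1-(1-\tau)^n)/\tau$, which tends to $1/\tau$. Combining the two estimates yields $\EX[\ALG]/\EX[\OPT]\to 1/\tau$ as $n\to\infty$, so taking $n$ sufficiently large shows that no deterministic algorithm achieves competitive ratio better than $1/\tau$. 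The main obstacle I anticipate is cleanly justifying the exchangeability step for adaptive algorithms, namely that, conditioned on any history consisting only of small-value observations, the unqueried intervals remain i.i.d.\ Bernoulli-$\tau$ mixtures, so no adaptive rule can do better than the fixed-order analysis.
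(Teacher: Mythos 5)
Your proposal is correct and follows essentially the same route as the paper's proof: a trivial singleton set against $n$ i.i.d.\ two-point intervals, where $\OPT=1$ unless every value realizes small (probability $(1-\tau)^n$, forcing $\OPT=n$), while any deterministic algorithm's query count is a truncated geometric with mean $(1-(1-\tau)^n)/\tau$, giving a ratio tending to $1/\tau$. The only cosmetic difference is that the paper makes the large set the true minimum in the all-small realization (so it must be fully queried to determine $w^*$), whereas you tune the constants so that $n-1$ small coefficients just fail to certify the covering constraint; both yield $\OPT=n$ in that branch, and your exchangeability remark correctly handles adaptivity, matching the paper's observation that indistinguishable i.i.d.\ intervals reduce any deterministic algorithm to a fixed-order scan.
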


\begin{proof}
	Consider an instance with the set of uncertainty intervals $\ui = \{I_0,I_1,\ldots,I_n\}$ with $I_0 = \{0.65\}$ and $I_i = (0,1)$ for all $i \not= 0$, and sets $\fs = \{S_1,S_2\}$ with $S_1 = \{I_0\}$ and $S_2 = \ui \setminus \{I_0\}$. 
	See~\Cref{fix:prelim:lb-setsel} for an illustration.
	Define the distributions $d_{i}$ with $i\not= 1$ as $d_{i}(a) = (1-\tau)$ if $a = \epsilon$,  $d_{i}(a) = \tau$ if $a = 0.7$ and $d_i(a) = 0$ otherwise, for some infinitesimally small $\epsilon > 0$.
	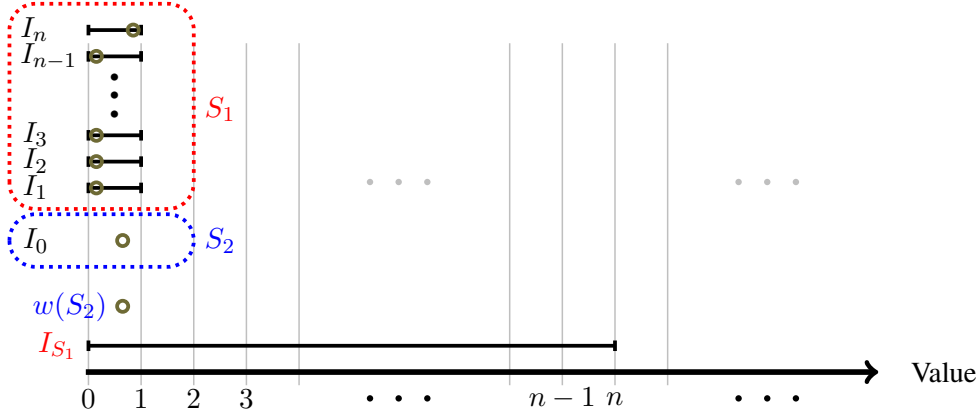
\begin{figure}[t]
		\centering	
		\begin{tikzpicture}	[thick,line width = \lw, scale=0.7]
			
			\draw[->, line width = 0.75mm] (0-0.5mm,-1) -- (15,-1);
			\node[] (l1) at (16.25,-1) {Value};
			
			\node[] (l1) at (0,-1.5) {$0$};
			\node[] (l1) at (1,-1.5) {$1$};
			\node[] (l1) at (2,-1.5) {$2$};
			\node[] (l1) at (3,-1.5) {$3$};
			
			\node[] (l1) at (6,-1.5) {\Huge$\dots$};
			\node[] (l1) at (13,-1.5) {\Huge$\dots$};

			\node[] (l1) at (9,-1.5) {$n-1$};
			\node[] (l1) at (10,-1.5) {$n$};

			\begin{scope}[on background layer]
				\draw[line width = 0.2mm,,lightgray] (0,-1.25) -- (0,5.25);
				\draw[line width = 0.2mm,,lightgray] (1,-1.25) -- (1,5.25);
				\draw[line width = 0.2mm,,lightgray] (2,-1.25) -- (2,5.25);
				\draw[line width = 0.2mm,,lightgray] (3,-1.25) -- (3,5.25);
				\draw[line width = 0.2mm,,lightgray] (4,-1.25) -- (4,5.25);
				
				\node[lightgray] (l1) at (6,2.6125) {\Huge$\dots$};
				
				\draw[line width = 0.2mm,,lightgray] (8,-1.25) -- (8,5.25);
				\draw[line width = 0.2mm,,lightgray] (9,-1.25) -- (9,5.25);
				\draw[line width = 0.2mm,,lightgray] (10,-1.25) -- (10,5.25);
				\draw[line width = 0.2mm,,lightgray] (11,-1.25) -- (11,5.25);
				
				\node[lightgray] (l1) at (13,2.6125) {\Huge$\dots$};
			\end{scope}

			\path [rounded corners=2ex] 
			(-1.5,2.1) [draw,dotted, color= red] rectangle (2,6);
			\node[] (l1) at (2.5,4) {\textcolor{red}{$S_1$}};
			
			\intervalr{$I_{n\phantom{-1}}$}{0}{1}{5.5}{0.85}
			\intervalr{$I_{n-1}$}{0}{1}{5}{0.15}
			\node[] (l1) at (0.5,4.25) {\tiny$\vertDOTS$};
			\intervalr{$I_{3\phantom{-1}}$}{0}{1}{3.5}{0.15}
			\intervalr{$I_{2\phantom{-1}}$}{0}{1}{3}{0.15}
			\intervalr{$I_{1\phantom{-1}}$}{0}{1}{2.5}{0.15}
			
			\interval{$\textcolor{red}{I_{S_1}}$}{0}{10}{-0.5}
			
			\drawreal{0.65}{1.5}
			\node[] (l1) at (-1,1.5) {$I_0$};
			\node[] (l1) at (2.5,1.5) {\textcolor{blue}{$S_2$}};
			\path [rounded corners=2ex] (-1.5,1) [draw,dotted, color= blue] rectangle (2,2);
			
			\drawreal{0.65}{0.25}
			\node[] (l1) at (-0.35,0.25) {$\textcolor{blue}{w(S_2)}$};
			
		\end{tikzpicture}
		\caption{Lower bound example for the set selection problem under explorable uncertainty consisting of the intervals $\ui = \{I_0,\ldots,I_n\}$ and the sets $\fs = \{S_1,S_2\}$ with $S_1 = \{I_1,\ldots,I_n\}$, $S_2 = \{I_0\}$, $I_0 = \{0.65\}$ and $I_i=(0,1)$ for $i\in \{1,\ldots,d\}$.}
		\label{fix:prelim:lb-setsel}
	\end{figure}

	If there exists some $I_i \in S_2$ with $w_i = 0.7$, then $\opt=1$ as a query to that interval already proves that $S_1$ is the set of minimum value since $w(S_1) = 0.65 < 0.7$.
	Otherwise, $\opt=n$.
	Therefore, $\EX[\opt] = (1 - (1-\tau)^{n}) + (1-\tau)^{n} \cdot n$ and $\lim_{n \rightarrow \infty} \EX[\opt] = 1$.

	Since $I_0$ is trivial and all $I_i$ with $i \not= 0$ are identical with the same distribution, each deterministic algorithm $\alg$ will just query the elements of $S_2$ in some order until it either reaches an $I_i$  with $w_i = 0.7$ or has queried all intervals.
	This implies that $\alg$ is a geometrical distribution with success probability $\tau$ and, therefore, $\EX[\alg] =\min\{n,\frac{1}{\tau}\}$.
	For $n$ towards infinity, we get
	$$
	\lim_{n \rightarrow \infty} \frac{\EX[\alg]}{\EX[\opt]} =\frac{1}{\tau}.
	$$	
\end{proof}

\jnew{Next, we show a slightly stronger bound for unknown distributions. The lower bound instance heavily exploits that, for unknown distributions, the adversary still has some power when selecting the probability distributions.}

\begin{theorem}
	\label{thm:lb:unknown-dists}
	\jnew{No deterministic algorithm for \minset under uncertainty has a competitive ratio better than $\frac{2}{\tau}$, even if all given sets are pairwise disjoint.}
\end{theorem}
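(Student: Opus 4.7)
I will modify the construction of~\Cref{thm:lb:known-dists} so that each query contributes to $\tau_i$ with the required probability $\tau$, but a single useful realization is typically not sufficient on its own. The plan is to take $\ui = \{I_0, I_1, \ldots, I_n\}$ and $\fs = \{S_1, S_2\}$ with $I_0 = \{0.6\}$ trivial, $I_i = (0,1)$ for $i \geq 1$, $S_1 = \{I_0\}$, and $S_2 = \ui \setminus \{I_0\}$. Fix an infinitesimal $\epsilon > 0$ with $\epsilon < (0.6 - 0.55)/n$, and a vanishing parameter $\delta \in (0,\tau)$ to be chosen later. The adversary assigns each $d_i$, $i \geq 1$, the three-point distribution $\PR[w_i = \epsilon] = 1-\tau$, $\PR[w_i = 0.55] = \tau-\delta$, $\PR[w_i = 1-\epsilon] = \delta$. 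Since the center of $(0,1)$ is $0.5$, $\PR[w_i > 0.5] = (\tau-\delta) + \delta = \tau$, so $\tau_i = \tau$. Crucially, the ``medium'' realization $0.55$ lies above the center and thus contributes to $\tau_i$, but it lies \emph{below} $w_0 = 0.6$, so a single medium query does not yet prove that $S_1$ is the minimum. This is exactly the freedom that unknown distributions grant the adversary: under a known distribution, the algorithm could exploit the fact that ``useful'' realizations all equal $0.7$ as in~\Cref{thm:lb:known-dists}.

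For large $n$, $w(S_2) > w_0$ holds with overwhelming probability, so $S_1$ is the unique minimum and any algorithm must produce a queried set $Q$ with $\sum_{I_i \in Q} w_i \geq 0.6$. The choice of $\epsilon$ ensures that a single medium combined with any number of $\epsilon$-values never reaches $0.6$; hence the algorithm must query until either some $w_i$ equals $1-\epsilon$, or at least two queried $w_i$'s equal $0.55$. Because $I_0$ is trivial and the intervals $I_1,\ldots,I_n$ are identically distributed, any deterministic algorithm may be assumed to query them in the fixed order $I_1, I_2, \ldots$ until the partial sum reaches $0.6$. Writing $T$ for this stopping time, the event $\{T > k\}$ equals $\{$no ``high'' and at most one ``medium'' among the first $k$ queries$\}$, so
\begin{equation*}
\PR[T > k] = (1-\tau)^k + k(\tau-\delta)(1-\tau)^{k-1}.
\end{equation*}
Summing the tail yields $\EX[T] = \tfrac{1}{\tau} + \tfrac{\tau-\delta}{\tau^2} = \tfrac{2}{\tau} - \tfrac{\delta}{\tau^2}$, and truncation at $n$ removes only an exponentially small tail, so $\EX[\alg] \geq 2/\tau - \delta/\tau^2 - o(1)$.

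For the optimum, any single realization $w_i = 1-\epsilon$ already exceeds $w_0$, so $\opt = 1$ on every realization that contains at least one high value. The probability that no high occurs is $(1-\delta)^n$, and the remaining bad events (fewer than two mediums among $n$ independent draws) contribute only exponentially small terms to $\EX[\opt]$ for fixed $\tau$. Choosing $\delta$ with $\delta \to 0$ and $\delta n \to \infty$ as $n \to \infty$, for instance $\delta = 1/\ln n$, therefore drives $\EX[\opt] \to 1$ while keeping the correction $\delta/\tau^2$ in $\EX[\alg]$ negligible, so $\EX[\alg]/\EX[\opt] \to 2/\tau$ and the claimed lower bound follows. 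The main obstacle is precisely this two-sided tuning of $\delta$: it must vanish quickly enough that a medium is the typical useful outcome encountered by $\alg$ (so that $\EX[\alg]$ approaches $2/\tau$ rather than $1/\tau$), yet slowly enough in $n$ that $\opt$ reliably sees at least one high realization (so that $\EX[\opt]$ stays at $1$).
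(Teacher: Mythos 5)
Your proof is correct, but it takes a genuinely different route from the paper's. The paper keeps the two-point distributions of \Cref{thm:lb:known-dists} but makes them \emph{heterogeneous}: intervals $I_1,\dots,I_{n-1}$ take the half-useful value $0.51$ with probability $\tau$, while a single hidden interval $I_n$ deterministically takes the decisive value $0.7$; the argument then leans entirely on the fact that with unknown distributions the algorithm cannot locate $I_n$, whereas $\OPT$ queries it first. Your construction instead uses genuinely i.i.d.\ three-point distributions in which the typical above-center outcome ($0.55$) certifies only half of the required lower-limit increase, and a rare high outcome (probability $\delta$) gives $\OPT$ a one-query certificate; the negative-binomial-style tail sum and the two-sided tuning $\delta\to 0$, $\delta n\to\infty$ then do the work. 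Both arguments are sound, but yours buys something extra that the paper's does not: because your intervals are exchangeable, every deterministic (or even randomized) algorithm has the same query-count distribution \emph{regardless of whether the distributions are known}, so your instance in fact establishes the $\frac{2}{\tau}$ lower bound in the known-distribution model as well. This strengthens \Cref{thm:lb:known-dists} from $\frac{1}{\tau}$ to $\frac{2}{\tau}$ and shows that the upper bound of \Cref{thm:disjoint} is tight even with full distributional knowledge, which sharpens the paper's closing remark in \Cref{sec:Disjoint} that the known- and unknown-distribution ratios differ by at most a factor of two. Your framing of the medium value as "exactly the freedom that unknown distributions grant the adversary" therefore undersells your own construction: unlike the paper's, it does not actually need that freedom. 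One cosmetic point: the paper defines $\tau_i=\PR[w_i\ge \frac{U_i+L_i}{2}]$ with a weak inequality, while you write $\PR[w_i>0.5]$; since your distributions place no mass at $0.5$ this is immaterial.
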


\begin{proof}
	\jnew{Consider the same instance as in the proof of~\Cref{thm:lb:known-dists} but with different, now unknown distributions. Since the distributions are unknown, an algorithm cannot distinguish the intervals $I_1,\ldots,I_n$ even if they have different distributions. This means that the adversary still has some power and can set the distributions in a worst-case manner for the algorithm, as long as the distributions respect balancing parameter~$\tau$.
		
		To that end, consider a fixed value $\tau$ and an arbitrary deterministic algorithm $\alg$. As $\alg$ cannot distinguish the intervals $I_1,\ldots,I_n$, we can assume w.l.o.g.~that it queries the intervals in order of their indices until the instance is solved. For all $0 < i < n$, the adversary sets the distribution to $d_i(a) = \tau$ for $a=0.51$, $d_i(a)= (1-\tau)$ for $a=\epsilon$ and $d_i = 0$ otherwise, for some infinitesimally small $\epsilon > 0$. Finally, the adversary sets distribution $d_n$ to $d_n(a) = 1$ for $a=0.7$ and $d_n(a)=0$ otherwise. These distributions clearly respect the balancing parameter $\tau$.
		
		For these distributions, we always have $w(S_2) > w(S_1)$ as $w_n > w(S_1)$ holds with a probability of one. Thus, every algorithm has to query until the lower limit of set $S_2$ increases to a value of at least $w(S_1)$. For $\alg$, this is the case once it found two intervals $I_i$ with $i < n$ and $w_i = 0.51$ or once it queries interval $I_n$ in case no two such intervals exist. Thus, the expected query cost is $\EX[\alg] = \min\{\frac{2}{\tau},n\}$. The optimal solution on the other hand only queries $I_n$ and is done after a single query. Therefore, $\EX[\opt] = 1$ and the competitive ratio of $\alg$ is at least $\min\{\frac{2}{\tau},n\}$. We can conclude the theorem by picking a sufficiently large value for $n$.
	}
\end{proof}

We continue by giving a quite simple algorithm for disjoint \minset~\jnew{that matches the lower bound of~\Cref{thm:lb:unknown-dists}.} 

In disjoint \minset, each $I_i$ occurs in exactly one constraint for one set $S$ in the corresponding~\eqref{eq:SetSelection}. 
Thus, each set $S$ defines a disjoint subproblem and
the optimal solution $\OPT$ of the 
instance is the union of 
optimal solutions 
for the 
subproblems.
The optimal solution for a subproblem $S$ is to query the elements of $I_i \in S$ in order of non-decreasing $(w_i-L_i)$ until the sum of those coefficients is at least $(w^* - L_{S})$.

Under uncertainty, we adapt this strategy and query in order of non-decreasing $(U_i-L_i)$.
While this does not guarantee that we query the interval with maximum $(w_j-L_j)$ in $S$, it gives us a probability of $\tau$ to query an interval $I_i$ such that $(w_i-L_i)$ is at least half the maximum $(w_j-L_j)$. We will prove that this is sufficient to achieve the guarantee.
Since we do not know $w^*$, we do not know when to stop querying in a subproblem.
We 
handle this by only querying in the set $S$ of minimum current lower limit as the subproblem for this set is clearly not yet solved. 
Algorithm~\ref{alg:disjoint} formalizes this approach.

\begin{algorithm}[tb]
	\KwIn{Instance of \minset under uncertainty with pairwise disjoint sets.}
	$Q \gets \emptyset$\; 
	\While{the problem is not solved}{
		$S_{\min} \gets \arg\min_{S \in \fs} L_{S}(Q)$\;
		\Repeat{$w_i-L_i \ge \frac{1}{2} \cdot (U_i-L_i)$ or $S_{\min}$ has been completely queried}{
			$I_i \gets \arg\max_{I_j \in S_{\min}\setminus Q} U_j-L_j$;\ 
			Query $I_i$;\  $Q \gets Q \cup \{I_i\}$\;
		}
	}
	\caption{Algorithm for disjoint \minset under uncertainty.}
	\label{alg:disjoint}
\end{algorithm}

\begin{restatable}{theorem}{thmDisjoint}
	\label{thm:disjoint}
	There is an algorithm for disjoint \minset under uncertainty with competitive ratio at most $\frac{2}{\tau}$.
\end{restatable}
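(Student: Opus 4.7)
The plan is to exploit disjointness to reduce the analysis to a per-set bound, and then to bound the expected work \alg{} does in each set by $\frac{2}{\tau}$ times the expected optimum work in that set.

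\textbf{Step 1 (decomposition).} Because the sets in $\fs$ are pairwise disjoint, every interval $I_i \in \ui$ belongs to at most one set. Hence the queries of any algorithm split over the sets without interaction: writing $\ALG_S$ and $\OPT_S$ for the number of queries \alg{} and \opt{} perform inside $S$, we obtain $\ALG = \sum_{S \in \fs} \ALG_S$ and $\OPT = \sum_{S \in \fs} \OPT_S$. By linearity of expectation, it suffices to show $\EX[\ALG_S] \le \frac{2}{\tau} \EX[\OPT_S]$ for every $S \in \fs$.

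\textbf{Step 2 (the minimum-value set).} Let $S^*$ be the set with $w(S^*) = w^*$. To certify $U_{S^*}(Q) = L_{S^*}(Q) = w^*$, every feasible query set must contain every non-trivial interval of $S^*$, so $\OPT_{S^*}$ equals this number, while \alg{} trivially queries at most this many intervals in $S^*$. Hence $\ALG_{S^*} \le \OPT_{S^*} \le \frac{2}{\tau}\OPT_{S^*}$.

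\textbf{Step 3 (the remaining sets).} Fix $S \ne S^*$ and write $\rho := w^* - L_S > 0$. Index the intervals of $S$ in non-increasing order of $(U_i-L_i)$ as $I_{\pi(1)}, I_{\pi(2)}, \ldots$, and let $u_j = U_{\pi(j)} - L_{\pi(j)}$ and $x_j = w_{\pi(j)} - L_{\pi(j)}$. Algorithm~\ref{alg:disjoint} queries the intervals of $S$ in exactly this order, organized into \emph{blocks}: each block is one execution of the inner repeat-loop and ends either when an interval yields $x_j \ge u_j/2$ (a \emph{successful} query) or when $S$ is exhausted. Let $B_S$ be the number of blocks \alg{} runs in $S$. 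I rely on two facts:
\begin{compactitem}
  \item \textbf{Block length.} Since each query is successful with probability at least $\tau$ independently of the others, the expected length of each block is at most $\tfrac{1}{\tau}$; together with the independence of the blocks' increments from past lengths, this yields $\EX[\ALG_S] \le \tfrac{1}{\tau}\EX[B_S]$.
  \item \textbf{Number of blocks.} I claim $\EX[B_S] \le 2\,\EX[\OPT_S]$. The idea is that a successful query at position $\sigma_b$ contributes at least $\tfrac{1}{2}u_{\sigma_b}$ to $L_S(Q)-L_S$, and \alg{} ends a block as soon as the cumulative contribution reaches $\rho$. On the other hand, $\opt$ queries the top-$\OPT_S$ intervals by $(w_i-L_i)$ and their $(w_i-L_i)$-values sum to at least $\rho$. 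Because the $k$-th order statistic of $(U_i-L_i)$ dominates the $k$-th order statistic of $(w_i-L_i)$ pointwise (the former is at least the latter for every rank $k$), the top-$\OPT_S$ values of $(U_i-L_i)$ in $S$ already sum to at least $\rho$. Combining this with the per-success gain of $\tfrac{1}{2}u_{\sigma_b}$ and the fact that \alg{}'s $b$-th success lies in the prefix $\pi(1),\ldots,\pi(\sigma_b)$, a counting argument gives $B_S \le 2\,\OPT_S$ per realization in the typical regime; the contribution from atypical realizations (where a block ends by exhaustion without finding a success) is absorbed into the expectation via the $\tfrac{1}{\tau}$ block bound.
\end{compactitem}

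Putting the two pieces together yields $\EX[\ALG_S] \le \tfrac{1}{\tau}\EX[B_S] \le \tfrac{2}{\tau}\EX[\OPT_S]$.

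\textbf{Step 4 (summing up).} Summing Steps~2 and 3 over all sets gives $\EX[\ALG] \le \tfrac{2}{\tau}\EX[\OPT]$, as claimed.

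\textbf{Main obstacle.} The delicate step is the bound $\EX[B_S] \le 2\,\EX[\OPT_S]$ in Step 3. The factor $2$ reflects the loss of the greedy surrogate $(U_i-L_i)/2$ for the unknown gain $(w_i-L_i)$, but the argument has to reconcile two different orderings: \alg{} processes intervals by non-increasing $(U_i-L_i)$ while $\opt$ uses non-increasing $(w_i-L_i)$. The order-statistic inequality above is what lets one compare the two cleanly, and care is needed (i) to bound the ``overshoot'' within the last block, where cumulative gain may cross $\rho$ mid-block before the success terminates it, and (ii) to handle blocks that end by exhausting $S$ rather than by a successful query; both error terms are absorbed into the $\tfrac{1}{\tau}$ block bound so that the final ratio remains $\tfrac{2}{\tau}$.
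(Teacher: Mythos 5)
Your overall architecture matches the paper's: both bound the expected length of each repeat-block by $\tfrac{1}{\tau}$ via a geometric distribution and then try to show that the number of blocks charged to a set $S$ is at most twice the size of the minimum prefix $P_S^*$ of $S$ (ordered by non-increasing $w_i-L_i$) whose $(w_i-L_i)$-values sum to $w^*-L_S$. The gap is in your justification of the block-count bound $\EX[B_S]\le 2\,\EX[\OPT_S]$, which is exactly the delicate step and is not established by the ingredients you list. Crediting the $b$-th block only with $\tfrac{1}{2}u_{\sigma_b}$ for its terminating success discards the gain of the unsuccessful queries inside the block, and that gain is essential: since the successes occur at positions $\sigma_1<\sigma_2<\cdots$ with $\sigma_b\ge b$ and $u$ is non-increasing, you only get $\tfrac12\sum_{b\le 2\OPT_S}u_{\sigma_b}\le\tfrac12\sum_{b\le 2\OPT_S}u_b$, so the fact that the top $\OPT_S$ values of $(U_i-L_i)$ sum to at least $\rho$ gives no lower bound on your credited progress — if the large-$(U_i-L_i)$ intervals realize unsuccessfully, the successes land at late positions with small $u$ and the credited sum can stay far below $\rho$ after $2\OPT_S$ blocks. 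Your order-statistic comparison is therefore pointing in the wrong direction, and the phrase ``a counting argument gives $B_S\le 2\OPT_S$'' is asserting precisely what needs to be proved.

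The paper closes this gap differently: it lower-bounds the \emph{total} gain of a block $G_j$ by $\tfrac12\max_{I_i\in S\setminus Q_j}(w_i-L_i)$ via a dichotomy — either the interval maximizing $w_i-L_i$ among the not-yet-queried ones is itself queried during the block (contributing its full value), or the block's final, successful interval $I_{i'}$ satisfies $w_{i'}-L_{i'}\ge\tfrac12(U_{i'}-L_{i'})\ge\tfrac12\max_{I_i\in S\setminus Q_j}(w_i-L_i)$ because the block processes intervals in non-increasing $(U_i-L_i)$ order. One then compares against the prefix $P_S^*$ taken with respect to $w_i-L_i$ (not $U_i-L_i$) and runs a charging argument over blocks (splitting them according to whether they query an element of $P_S^*$) to conclude that after $2|P_S^*|$ blocks the accumulated gain reaches $\sum_{I_i\in P_S^*}(w_i-L_i)\ge w^*-L_S$, after which $S$ can no longer be $S_{\min}$. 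If you want to keep your per-set decomposition, you should replace your per-success accounting with this per-block bound and the $P_S^*$ benchmark; as written, Step 3 does not go through. (Your Steps 1, 2 and 4 and the $\tfrac{1}{\tau}$ block-length bound are fine, modulo stating the latter carefully as a conditional expectation given that the block is executed, as the number of blocks is itself a random stopping time.)
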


\begin{proof}
	Consider a fixed realization of values $w_i$ and the corresponding~\eqref{eq:SetSelection} instance.
	For each $S \in \fs$, a feasible solution $Q$ must satisfy $\sum_{I_i \in S \cap Q} (w_i-L_i) \ge (w^* - L_{S})$.
	This implies $|Q \cap S| \ge |P_S^*|$ for the minimum cardinality prefix $P_S^*$ of $I_1,\ldots,I_k$ with $\sum_{I_i \in P_S^*} (w_i-L_i) \ge w^* - L_{S}$, where $S = \{I_1,\ldots,I_k\}$ and all $I_i$ are indexed by non-increasing $w_i - L_i$. 
	As the sets are disjoint, we get $\OPT = \sum_{S \in \fs} |P_S^*|$.
	
	Using this, we show that Algorithm~\ref{alg:disjoint} satisfies the theorem.
	To that end, let $X_j$ be a random variable denoting the number of queries in iteration $j$ of the outer while-loop of Algorithm~\ref{alg:disjoint} and let $Y_j$ be an indicator variable indicating whether iteration $Y_j$ is actually executed ($Y_j=1$) or not ($Y_j = 0$).
	
	We prove the theorem by separately showing  $\sum_j \PR[Y_j = 1] \le 2 \cdot \EX[\OPT]$ and $\EX[X_i \mid Y_j = 1] \le \frac{1}{\tau}$.
	Since 
	\begin{align*}
		\EX[\ALG] &= \sum_j \EX[X_j] = \sum_j \PR[Y_j = 0] \EX[X_j \mid  Y_j = 0] + \sum_j \PR[Y_j = 1] \EX[X_j \mid Y_j = 1]
		\\ &= \sum_j \PR[Y_j = 1] \EX[X_j \mid  Y_j = 1]
	\end{align*}
	follows from $\EX[X_j \mid  Y_j = 0] = 0$ and the law of total expectations, the two inequalities imply the theorem.
	
	Note that $\sum_j \PR[Y_j = 1]$ is just the expected number of iterations of the algorithm. Thus, if we show for each realization of precise values that the number of iterations is at most  $2 \cdot \OPT$, we directly get $\sum_j \PR[Y_j = 1] \le 2 \cdot \EX[\OPT]$.
	
	Consider a fixed realization.
	For each $S$, let $h_S$ denote the number of iterations with $S_{\min} = S$.
	We claim that $h_S \le 2 \cdot |P_S^*|$. 
	Then, $\OPT \ge \sum_{S_\ell \in \fs} |P_S^*|$ implies $\sum_j \PR[Y_j = 1] \le 2 \cdot \EX[\OPT]$.
	
	Let $j$ be an iteration with $S_{\min} = S$, let $G_j$ denote the queries of this iteration and let $Q_j$ denote the set of all previous queries.
	Observe that $P_S^* \setminus Q_j \not= \emptyset$.
	Otherwise, the definition of $P_S^*$ would either imply that the lower limit of $S$ after querying $Q_j$ is larger than $w^*$, which contradicts $S_{\min} = S$, or 
	that the lower limit is equal to $w^*$, which  implies that the problem is already solved.
	
	We argue that $\sum_{I_i \in G_j} w_i - L_i \ge \frac{1}{2} \cdot \max_{I_i \in S \setminus Q_j} w_i-L_i$.
	In case that interval $I_i = \arg\max_{I_i \in S \setminus Q_j} w_i-L_i$ is contained in $G_j$, the inequality clearly holds.
	Otherwise, let $I_{i'}$ denote the last element that is queried in the iteration. 
	Then, $w_{i'}-L_{i'} \ge \frac{1}{2} \cdot (U_{i'} - L_{i'}) \ge  \frac{1}{2} \cdot \max_{I_i \in S \setminus Q_j} w_i-L_i$, where the first inequality holds as $I_{i'}$ is the last query of the iteration and the second inequality holds by the order in which the elements of $S$ are queried by the algorithm.
	Thus, this last interval $I_{i'}$ alone satisfies the inequality.
	
	The inequality suffices to conclude that $\sum_{I_i \in Q_j \cap S} w_i - L_i \ge \sum_{I_i \in P_S^*} w_i - L_i$ holds after at most $2 \cdot |P_S^*|$ iterations with $S_{\min} = S$.
	As $S_{\min} \not= S$ holds for all following iterations, the claim $h_S \le 2 \cdot |P_S^*|$ follows.
	Next, we show the second inequality, $\EX[X_j \mid Y_j = 1] \le \frac{1}{\tau}$.
	If an iteration $j$ of the outer while-loop is executed ($Y_j = 1$), the repeat statement queries intervals $I_i$ until either $(w_i-L_i) \ge \frac{1}{2} \cdot (U_i-L_i)$ or $S_{\min} \subseteq Q$.
	Thus, it terminates at the latest when it finds an $I_i$ with $(w_i-L_i) \ge \frac{1}{2} \cdot (U_i-L_i)$.
	The number of queries until such an interval occurs is described by a geometric distribution with success probability at least $\tau$.
	So, in expectation, this number is at most $\frac{1}{\tau}$ and
	we can conclude $\EX[X_j \mid Y_j = 1] \le \frac{1}{\tau}$.	
\end{proof}

\jnew{We remark that~\Cref{thm:lb:unknown-dists,thm:disjoint} imply that, even with full knowledge of the distributions, the competitive ratio for disjoint \minset cannot be improved by more than a factor of two compared to the ratio with unknown distributions.}

\section{Algorithmic framework}
\label{sec:framework}

In the previous section, we have seen an algorithm for disjoint \minset under uncertainty with a tight competitive ratio. The key observation that allowed us to achieve that ratio was the simple characterization of an (offline) optimal solution. 
In this section, we consider the offline variant of the general \minset and give inapproximability results that prevent such simple characterizations for optimal solutions of the general problem.
Thus, we need alternative algorithms and, based on observations for the offline problem, present an algorithmic framework that can be used to solve \minset and other covering problems under uncertainty.

\subsection{Offline Problems and Hardness of Approximation}

We refer to the problem of solving~\eqref{eq:SetSelection} with full knowledge of the precise values $w_i$ (and $w^*$) as \emph{offline} problem.
This means that we have full knowledge of all coefficients of the ILP.
For \minset under uncertainty, we say that a solution is optimal, if it is an optimal solution for the corresponding offline problem.
We use $\OPT$ to refer to an optimal solution and, slightly abusing the notation, to its objective value.

Offline \minset contains the classical \setcover problem and, thus, it is as hard to approximate.
This result transfers to the stochastic setting, even for uniform distributions. 
Results by Dinur and Steurer~\cite{Dinur2014} imply the following, as we formally prove in Appendix~\ref{app:inapprox}.

\begin{restatable}{theorem}{inapprox}
	\label{thm:hardness}
	For any fixed $\alpha > 0$, it is NP-hard to compute a query strategy that is $(1-\alpha) \cdot \ln m$-competitive for \minset under uncertainty even if the precise value $w_i$ of each $I_i$ is drawn independently and uniformly at random from $(L_i,U_i)$.
	The same inapproximability holds also for offline \minset.
\end{restatable}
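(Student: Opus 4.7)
The plan is to give an approximation-preserving reduction from \setcover to \minset and then invoke the $(1-\alpha)\ln m$-inapproximability of \setcover due to Dinur and Steurer~\cite{Dinur2014}. The paper already observes (in the paragraph following~\eqref{eq:SetSelection}) that offline \minset contains \setcover as the special case $a_i = b_S = 1$, so the offline statement essentially follows via a routine instantiation of that equivalence. The only new content of the theorem is to show that the same hardness persists in the stochastic model with uniform distributions, and my plan is to use a single construction that handles both variants simultaneously.

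Given a \setcover instance with elements $\{e_1,\dots,e_m\}$ and sets $T_1,\dots,T_n$, I would construct the following \minset instance. For each set $T_i$, introduce an interval $I_i=(0,2)$ whose precise value $w_i$ is drawn uniformly from $(0,2)$. For each element $e_j$, introduce a \minset set $S_j=\{I_i : e_j \in T_i\}$. Finally, add one trivial interval $I_0=\{\delta\}$ placed in its own singleton set $S_0=\{I_0\}$, where $\delta=1/\mathrm{poly}(n,m)$ is a sufficiently small positive constant. Since $w(S_0)=\delta$ deterministically and $w(S_j)>\delta$ almost surely for tiny $\delta$, we have $w^*=\delta$ with probability at least $1-\mathcal{O}(m\delta)$, and the covering constraint in~\eqref{eq:SetSelection} for $S_j$ simplifies to $\sum_{I_i\in S_j\cap Q} w_i \geq \delta$. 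Because each $w_i>0$ almost surely, every feasible query set $Q$ must intersect every $S_j$, so $\{T_i : I_i\in Q\}$ is a valid set cover of size at most $|Q|$. Conversely, for any fixed minimum set cover $C$ a union bound over the $m$ constraints shows that the induced query set of size $|C|$ is feasible for \minset except on an event of probability $\mathcal{O}(m\delta)$. Writing $\OPT_S$ for the optimum \setcover size and $\OPT$ for the optimum \minset size, these two facts give $\OPT\geq \OPT_S$ pointwise and $\EX[\OPT]\leq \OPT_S\cdot(1+o(1))$. Consequently, any polynomial-time $(1-\alpha)\ln m$-competitive algorithm for \minset under stochastic uncertainty would output, in expectation, a set cover of size at most $(1-\alpha+o(1))\ln m \cdot \OPT_S$, which by standard amplification contradicts the Dinur--Steurer bound. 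Applying the same construction with the $w_i$'s already revealed yields the offline inapproximability.

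The main obstacle will be quantifying the stochastic slack precisely so that the $(1+o(1))$ factor between $\EX[\OPT]$ and $\OPT_S$ is negligible compared to the $(1-\alpha)\ln m$ gap we want to preserve. This requires choosing $\delta$ polynomially small in $n$ and $m$, bounding the number of queries in the rare ``bad'' realizations by the trivial upper bound $n$, and checking that the resulting additive error is $o(1)$ while $\OPT_S \geq 1$. Once these estimates are in place, the $(1-\alpha)$ guarantee is retained at the cost of a marginally smaller constant, and the reduction is complete.
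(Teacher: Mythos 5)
Your proposal is correct and follows essentially the same route as the paper: an approximation-preserving reduction from \setcover in which \setcover sets become intervals with lower limit $0$, \setcover elements become \minset sets, and a trivial singleton set supplies a minimum value $w^*$ that is tiny relative to the interval lengths, so that under uniform distributions a single query per constraint suffices with high probability and $\EX[\OPT]$ collapses to the optimal cover size. The only difference is cosmetic parametrization (you shrink the threshold $\delta$ polynomially while the paper sends $w_r\to 0$ or the interval length to infinity), and your quantification of the error terms is, if anything, slightly more explicit than the paper's.
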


On the positive side, we can approximate offline \minset by adapting covering results (see, e.g.,\cite{chvatal1979,dobson1982,rajagopalan1998,Kolliopoulos2002,Kolliopoulos2005}).
In particular, we want to use greedy algorithms that iteratively and irrevocably add elements to the solution that are selected by a certain greedy criterion.
Recall that \enquote{adding an element to the solution} corresponds to both, setting the variable  $x_i$ of an interval $I_i \in \ui$ in~\eqref{eq:SetSelection} to one and querying $I_i$.
While we are technically not restricted to greedy algorithms when solving \emph{offline} \minset, our goal is to later on generalize the offline algorithm to the setting with uncertainty and irrevocable decisions.
\jnew{Hence, greedy algorithms seem to be a suitable choice.}

Since the greedy criterion for adding an element depends on previously added elements, we define a version of~\eqref{eq:SetSelection} that is parametrized by the set $Q \subseteq \ui$ of elements that have already been added to the solution and adjust the right-hand sides to the remaining covering requirement after adding $Q$.
Recall that $a_i = w_i - L_i$ and $b_S = w^* - L_S$.
Here, $b_S(Q) =  \max\{b_S - \sum_{I_i \in Q\cap S} a_i ,0\}$ and $b(Q) = \sum_{S \in \fs} b_S(Q)$.

\begin{equation}
	\tag{{\sc MinSetIP-Q}}
	\label{eq:SetSelection:param}
	\begin{array}{llll}
		\min &\sum_{I_i \in \ui\setminus Q} x_i\\
		\text{s.t. }& \sum_{I_i \in S\setminus Q} x_i \cdot a_i &\ge b_S(Q) &\forall S \in \fs\\
		& x_i & \in \{0,1\}& \forall I_i \in \ui \setminus Q\\
	\end{array}
\end{equation}
Based on this ILP, we adjust the algorithm by Dobson~\cite{dobson1982} for the multiset multicover problem to our setting (cf.~Algorithm~\ref{alg_verification_covering_1}).
The algorithm scales the coefficients such that all non-zero left-hand side coefficients are at least $1$.
We refer to such instances as \emph{scaled}.
Then it greedily adds the element to the solution that reduces the right-hand sides the most, i.e., the interval $I_i \in \ui\setminus Q$ that maximizes $\gc(Q,I_i) = b'(Q) - b'(Q\cup\{I_i\})$ ($a'$ and $b'$ indicate scaled coefficients). 
For a subset $G \subseteq \ui$, we define $\gc(Q,G) = b'(Q) - b'(Q \cup G)$. 

After $b_S'(Q) < 1$ for all $S \in \fs$, we can exploit that all scaled non-zero coefficients $a'_i$ are at least one. 
This means that adding an element $I_i \in \ui\setminus Q$ satisfies all remaining constraints of sets $S$ with $I_i \in S$.
Thus, the remaining problem reduces to a \setcover instance, which can be solved by using the classical greedy algorithm by Chvatal~\cite{chvatal1979}.
This algorithm greedily adds the element $I_i \in \ui\setminus Q$ that maximizes $\gs(Q,I_i) = A(Q) - A(Q \cup \{I_i\})$ with $A(Q) = |\{S \in \fs \mid b'_S(Q) > 0\}|$, i.e., the element that satisfies the largest number of constraints that are not already satisfied by $Q$.
For a subset $G \subseteq \ui$, we define $\gs(Q,G) =  A(Q) - A(Q \cup G)$.

During the course of this paper, we refer to $\gc(Q,I_i)$, $\gs(Q,I_i)$, $\gc(Q,G)$ and $\gs(Q,G)$ as the \emph{greedy values} of $I_i$ and $G$, respectively.

\begin{theorem}[Follows from Dobson~\cite{dobson1982}]
	\label{thm:dobson}
	Algorithm~\ref{alg_verification_covering_1} is a polynomial-time $\mathcal{O}(\log m)$-approximation for offline \minset. The precise approximation factor is $\grset(\gamma) = \lceil\ln(\gamma \cdot m\cdot \max_S (w^*-L_{S}))\rceil + \lceil\ln(m)\rceil$ with $s_{\min} =  \min_{I_i \in \ui \colon (w_i-L_i) >0} (w_i-L_i)$, $\gamma = 1/s_{\min}$ and $m =  |\fs|$.
\end{theorem}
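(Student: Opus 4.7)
The plan is to adapt the classical analysis of Dobson's greedy algorithm for multiset multicover to our formulation, treating the two phases of Algorithm~\ref{alg_verification_covering_1} separately and bounding the number of queries each phase performs in terms of $\OPT$. The starting point is that, after rescaling by $\gamma = 1/s_{\min}$, every nonzero left-hand side coefficient satisfies $a_i' \ge 1$, while the scaled right-hand sides become $b_S' = \gamma \cdot (w^* - L_S)$ and $b'(\emptyset) = \sum_{S \in \fs} b_S' \le \gamma \cdot m \cdot \max_S (w^* - L_S)$. Scaling affects neither feasibility nor the value of $\OPT$ since we only rescale coefficients inside each constraint uniformly.

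For the first phase, the key inequality to establish is that in every iteration the greedy choice reduces $b'(Q)$ by at least $b'(Q)/\OPT$. First, I would argue that the optimal solution, restricted to $\ui \setminus Q$, is still feasible for the residual instance~\eqref{eq:SetSelection:param}, and therefore covers the remaining total right-hand side $b'(Q)$ using at most $\OPT$ elements. Summing the reductions $\gc(Q,I_i)$ over those elements yields at least $b'(Q)$, so an averaging argument produces one element whose reduction is at least $b'(Q)/\OPT$; the greedy pick does no worse. Hence $b'(Q_{k+1}) \le b'(Q_k)\bigl(1 - \tfrac{1}{\OPT}\bigr) \le b'(\emptyset)\, e^{-k/\OPT}$. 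The loop terminates once $b'_S(Q) < 1$ for all $S$, which is implied by $b'(Q) < 1$, giving at most $\OPT \cdot \lceil \ln(\gamma \cdot m \cdot \max_S (w^* - L_S)) \rceil$ queries in phase one.

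For the second phase, after phase one every unsatisfied constraint has residual $b_S'(Q) < 1$ while every scaled nonzero coefficient $a_i' \ge 1$, so adding a single element $I_i$ fully satisfies every unsatisfied constraint containing it. The residual problem is therefore literally a \setcover instance on at most $m$ elements, and $\gs(Q,I_i)$ is exactly the number of newly covered sets. Chvátal's analysis of the greedy \setcover algorithm~\cite{chvatal1979} yields an $H_m \le \lceil \ln m \rceil$ overhead relative to the optimum of the residual instance, which is itself upper bounded by $\OPT$. Summing the two bounds gives exactly $\grset(\gamma) \cdot \OPT$. Polynomial runtime is immediate, since each iteration evaluates $\gc$ or $\gs$ for every candidate interval and every constraint, and the total number of iterations is polynomial in $n$, $m$, and $\log(\gamma \cdot \max_S(w^* - L_S))$.

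The main obstacle is formalizing the per-iteration reduction inequality for phase one in the presence of fractional multiplicities that the classical multiset multicover argument handles through an LP/dual-fitting view; one must be careful that our 0/1 variables and the way a single interval simultaneously appears in several constraints still allow the averaging step to go through. A clean way to handle this is to work with $\OPT$ as a set of intervals, observe that $\sum_{I_i \in \OPT \setminus Q} \gc(Q,I_i) \ge b'(Q)$ because the optimum, together with $Q$, is feasible for the original instance, and then take the maximum over $I_i \in \OPT \setminus Q$. Once this step is in place, the remainder is a straightforward exponential-decay calculation combined with the standard \setcover bound, yielding the stated $\grset(\gamma) \in \mathcal{O}(\log m)$ approximation factor.
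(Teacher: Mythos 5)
Your proposal is correct and takes essentially the same route as the paper: the paper states this theorem as following from Dobson and carries out precisely your two-phase averaging-plus-exponential-decay argument in generalized form in \Cref{lem:greedy:alpha} and \Cref{lem:greedy:approx1}, of which your proof is the $\alpha=1$, singleton-group instantiation. One cosmetic fix: $H_m \le \lceil\ln m\rceil$ is false for small $m$ (e.g.\ $m=2$), so in phase two you should apply the same averaging-and-decay argument directly to $A(Q)$ — exactly as you did for $b'(Q)$ — which gives the $\lceil\ln m\rceil$ bound without the detour through harmonic numbers.
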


During the remaining course of the paper, we will state the competitive ratios of our algorithms in terms of $\grset$.
To that end, define $\grsetu(\gamma) = \lceil\ln(\gamma \cdot m\cdot \max_{S,S'} (U_{S}-L_{S'})\rceil + \lceil\ln(m)\rceil$, which is an upper bound on $\grset(\gamma)$.
Under uncertainty, we compare against $\grsetu$ to avoid the random variable $w^*$. For constant $U_i$'s, $\grsetu$ and $\grset$ are asymptotically the same.

We remark again that the approximation ratio of Algorithm~\ref{alg_verification_covering_1} has dependencies on the numerical input parameter $s_{\min}$ and $\max_S(w^*-L_S)$. While there exist algorithms that achieve an approximation ratio of $\mathcal{O}(\log m)$ for the offline problem without such dependencies~\cite{Kolliopoulos2002,Kolliopoulos2005}, these algorithms are not greedy and it remains open whether there exist algorithms with this improved ratio that execute irrevocable decisions, even with full knowledge of the coefficients. Thus, we consider Algorithm~\ref{alg_verification_covering_1} and aim at extending it for the setting under uncertainty.

\begin{algorithm}[tb]
	\KwIn{An instance of offline~\minset, i.e., an instance of~\eqref{eq:SetSelection}}
	$s_{\min} = \min_{I_i \in \ui \colon a_i > 0} a_i$;
	$\forall S \in \fs \colon b'_S = \frac{b_S}{s_{\min}}$; 
	$\forall I_i \in \ui \colon a'_i = \frac{a_i}{s_{\min}}$\;
	\While{$\exists S \in \fs \colon b'_S(Q) \ge 1$}{			
		$I_i \gets \arg\max_{I_j \in \ui\setminus Q} \gc(Q,I_j)$; Query $I_i$; $Q \gets Q \cup \{I_i\}$\;			
	}
	\While{the problem is not solved}{
		$I_i \gets \arg\max_{I_j \in \ui\setminus Q} \gs(Q,I_j)$; Query $I_i$; $Q \gets Q \cup \{I_j\}$\;
	}
	\caption{Greedy algorithm by Dobson~\cite{dobson1982} for offline~\minset.}
	\label{alg_verification_covering_1}
\end{algorithm}

\subsection{Algorithmic framework}

We introduce our algorithmic framework that we use to solve \minset under uncertainty.
Ideally, we would like to apply the offline greedy algorithm.
However, since the coefficients $a_i = w_i-L_i$ and $b_S = w^*-L_S$ are unknown, we cannot apply Algorithm~\ref{alg_verification_covering_1} to solve \minset under uncertainty as we cannot compute the element that maximizes the greedy value $\gc$ or $\gs$.


While we cannot precisely compute the greedy choice, our strategy is to \emph{approximate} it and to show that approximating the greedy choice is sufficient to obtain the desired guarantees.
To make this more precise, consider an \emph{iterative} algorithm for~\eqref{eq:SetSelection}, i.e., an algorithm that iteratively adds pairwise disjoint subsets $G_1,\ldots,G_h$ of $\ui$ to the solution.
For each $j$, let $Q_j = \bigcup_{1 \le j' \le j-1} G_{j'}$, i.e., $Q_j$ contains the  elements that have been added to the solution before $G_j$.
If the combined greedy value of $G_j$ is within a factor of $\alpha$ to the best greedy value for the problem instance \emph{after} adding $Q_j$, then we say that $G_j$ $\alpha$-approximates the greedy choice.
The following technical definition makes this more precise and the subsequent lemma connects the definition to the actual greedy values while taking into account that there are two different greedy values $\gc$ and $\gs$ (cf.~Algorithm~\ref{alg_verification_covering_1}). 

\begin{definition}
	\label{def:greedyalpha}
	For a \eqref{eq:SetSelection} instance with scaled coefficients and optimal solution $\OPT$,
	let $\alpha \in \mathbb{R}_{\ge 1}$ and consider the corresponding instance of \eqref{eq:SetSelection:param} for some $Q \subseteq \ui$.
	A set $G \subseteq \ui\setminus Q$ \emph{$\alpha$-approximates} the current greedy choice after adding $Q$ if either 
	\begin{enumerate}
		\item $A(Q\cup G) \le (1-\frac{1}{\alpha \cdot \OPT}) \cdot A(Q)$ or
		\item $b'(Q) \ge 1$ and $b'(Q\cup G) \le (1-\frac{1}{\alpha \cdot \OPT}) \cdot b'(Q)$.
	\end{enumerate}	
\end{definition}

Intuitively, the two conditions of the following lemma seem like a more appropriate definition of approximating a greedy choice. 
While the conditions of the lemma imply that the definition above is satisfied, in our proofs it will sometimes be easier to directly show that the definition is satisfied, without using the lemma.
Therefore, we use the more technical~\Cref{def:greedyalpha} \jnew{but the lemma captures the intuition behind the definition.}

\begin{restatable}{lemma}{lemGreedyAlpha}
	\label{lem:greedy:alpha} For a scaled instance of~\eqref{eq:SetSelection}, $Q \subseteq \ui$, $\alpha \ge 1$ and $G \subseteq \ui\setminus Q$:
	\begin{enumerate}
		\item If $b_S'(Q) < 1$ for all $S \in \fs$ and $\gs(Q,G) \ge \frac{1}{\alpha} \cdot \max_{I_i \in \ui\setminus Q} \gs(Q,I_i)$, then $G$ satisfies the first condition of~\Cref{def:greedyalpha} and, thus, $\alpha$-approximates the greedy choice. 
		\item If $b'(Q) \ge 1$ and $\gc(Q,G) \ge \frac{1}{\alpha} \cdot \max_{I_i \in \ui\setminus Q} \gc(Q,I_i)$, then $G$ satisfies the second condition of~\Cref{def:greedyalpha} and, thus, $\alpha$-approximates the greedy choice. 
	\end{enumerate}
\end{restatable}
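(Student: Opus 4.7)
The plan is to prove each of the two implications by an averaging argument over $\OPT \setminus Q$, together with the hypothesis that $G$ attains an $\frac{1}{\alpha}$-fraction of the best single-element greedy value. The key preliminary fact I would establish first is that $\OPT \setminus Q$ is feasible for the residual instance~\eqref{eq:SetSelection:param}: since $\OPT$ is feasible for the unscaled/scaled original~\eqref{eq:SetSelection}, we have $\sum_{I_i \in \OPT \cap S} a'_i \ge b'_S$ for every $S \in \fs$, and because $\sum_{I_i \in \OPT \cap Q \cap S} a'_i \le \sum_{I_i \in Q \cap S} a'_i$, subtracting yields $\sum_{I_i \in (\OPT \setminus Q) \cap S} a'_i \ge b'_S(Q)$. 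In particular, $|\OPT \setminus Q| \le \OPT$.

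For the second part, I would bound $\sum_{I_i \in \OPT \setminus Q} \gc(Q, I_i)$ from below by $b'(Q)$. Rewriting $\gc(Q, I_i) = \sum_{S : I_i \in S} \bigl(b'_S(Q) - b'_S(Q \cup \{I_i\})\bigr) = \sum_{S : I_i \in S} \min\{a'_i, b'_S(Q)\}$ and swapping the order of summation reduces the claim to showing $\sum_{I_i \in (\OPT \setminus Q) \cap S} \min\{a'_i, b'_S(Q)\} \ge b'_S(Q)$ for each $S$. I would split into two cases: either some $I_i \in (\OPT \setminus Q) \cap S$ already satisfies $a'_i \ge b'_S(Q)$, contributing exactly $b'_S(Q)$ on its own, or every such $I_i$ has $a'_i < b'_S(Q)$, in which case the cap is inactive and the inequality follows from the feasibility fact above. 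An averaging step produces some $I_i$ with $\gc(Q, I_i) \ge b'(Q)/\OPT$, and invoking the hypothesis together with the identity $\gc(Q, G) = b'(Q) - b'(Q \cup G)$ gives $b'(Q \cup G) \le \bigl(1 - \tfrac{1}{\alpha \cdot \OPT}\bigr) \cdot b'(Q)$, which is the second condition of~\Cref{def:greedyalpha}.

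For the first part, I would exploit scaling: every non-zero $a'_i$ satisfies $a'_i \ge 1$, whereas the hypothesis gives $b'_S(Q) < 1$ for all $S$. Consequently, a single non-zero element $I_i$ that lies in an unsatisfied constraint $S$ already drives $b'_S(Q \cup \{I_i\})$ to zero. The feasibility of $\OPT \setminus Q$ shown above guarantees, for each $S$ with $b'_S(Q) > 0$, the existence of at least one $I_i \in (\OPT \setminus Q) \cap S$ with $a'_i > 0$, which therefore singlehandedly satisfies $S$. Counting such contributions yields $\sum_{I_i \in \OPT \setminus Q} \gs(Q, I_i) \ge A(Q)$, so averaging produces some $I_i$ with $\gs(Q, I_i) \ge A(Q)/\OPT$; the hypothesis and $\gs(Q, G) = A(Q) - A(Q \cup G)$ then give the first condition of~\Cref{def:greedyalpha}.

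The main obstacle is the truncation built into $b'_S(Q) = \max\{b'_S - \sum_{I_i \in Q \cap S} a'_i, 0\}$ combined with the cap $\min\{a'_i, b'_S(Q)\}$ in the definition of $\gc$: a single large $a'_i$ can ``overshoot'' a residual right-hand side, so the naive inequality $\sum a'_i \ge b'_S(Q)$ does not translate directly into a lower bound on $\sum \min\{a'_i, b'_S(Q)\}$. The two-case split described above is exactly designed to cope with this subtlety; everything else is routine arithmetic.
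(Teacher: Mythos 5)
Your proposal is correct and follows essentially the same route as the paper's own proof: both rest on an averaging argument showing that some single element of an optimal residual solution (you use $\OPT\setminus Q$, the paper uses the residual optimum $\OPT_Q$; both are at most $\OPT$) must achieve greedy value at least $b'(Q)/\OPT$ resp.\ $A(Q)/\OPT$, after which the hypothesis on $G$ yields the multiplicative decrease required by~\Cref{def:greedyalpha}. Your explicit two-case treatment of the truncation $\min\{a'_i, b'_S(Q)\}$ is a detail the paper leaves implicit, but it is the same argument.
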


\begin{proof}
	First, assume that $b'_S(Q) < 1$ for all $S \in \fs$ and consider a set $G \subseteq \ui \setminus Q$ with
	$\gs(Q,G) \ge \frac{1}{\alpha} \cdot \max_{I_i \in \ui\setminus Q} \gs(Q,I_i)$.
	
	Let $I^* = \arg\max_{I_i \in \ui\setminus Q} A(Q) - A(Q \cup \{I_i\}) = \arg\max_{I_i \in \ui\setminus Q} \gs(Q,I_i)$.
	By assumption $b'_S(Q) < 1$ for all $S \in \fs$ and, as we consider a scaled instance, $a'_i \ge 1$ for all $I_i \in \ui$.
	Thus, the remaining instance is a set cover instance as adding an interval $I_i$ to the solution satisfies all constraints $S$ with $I_i \in S$ that have not already been satisfied by $Q$.
	
	Let $\OPT_Q$ denote the optimal solution for the remaining instance after adding $Q$ to the solution, i.e., the optimal solution to~\eqref{eq:SetSelection:param}.
	Using a standard set cover argument, we can observe that $\frac{A(Q)}{\OPT_Q} \le A(Q) - A(Q \cup \{I^*\})$ as the optimal solution satisfies the remaining constraints at cost $\OPT_Q$, but a single interval can satisfy at most $A(Q) - A(Q \cup \{I^*\})$ constraints.
	\jnew{Note that this argument only holds because all left-hand side coefficients are at least as large as the right-hand sides. Otherwise, adding an interval $I_i$ later, i.e., after $Q' \supset Q$ has already been added to the solution, could satisfy more constraints, i.e., $A(Q)-A(Q\cup \{I_i\}) < A(Q')-A(Q'\cup \{I_i\})$. This is one of the reasons why the offline greedy algorithm uses two greedy criteria.}
	
	By assumption and definition of $\gs$, we have $\alpha \cdot (A(Q)-A(Q\cup G)) \ge \max_{I_i \in \ui\setminus Q} A(Q)-A(Q \cup\{I_i\})$ and, therefore, $\frac{A(Q)}{\OPT_Q} \le \alpha \cdot (A(Q)-A(Q\cup G))$.
	Rearranging the latter inequality, we obtain 
	$A(Q\cup G) \le A(Q) \cdot \left(1-\frac{1}{\alpha \OPT_Q}\right)$.
	Since $\OPT \ge \OPT_Q$ for the optimal solution $\OPT$ of the complete instance, we get $A(Q\cup G) \le A(Q) \cdot \left(1-\frac{1}{\alpha \OPT}\right)$.
	This implies that $G$ satisfies the first condition of~\Cref{def:greedyalpha}.
	
	Next, assume $b'(Q) \ge 1$ and consider a set $G \subseteq \ui \setminus Q$ with
	$\gc(Q,G) \ge \frac{1}{\alpha} \cdot \max_{I_i \in \ui\setminus Q} \gc(Q,I_i)$.
	
	Let $I^* = \arg\max_{I_i \in \ui\setminus Q} b'(Q) - b'(Q \cup \{I_i\})= \arg\max_{I_i \in \ui\setminus Q} \gc(Q,I_i)$.
	Observe that $\frac{b'(Q)}{\OPT_Q} \le b'(Q) - b'(Q \cup \{I^*\})$  as the optimal solution covers the remaining constraints at cost $\OPT_Q$, but a single interval can decrease the total slack between left-hand and right-hand sides of~\eqref{eq:SetSelection:param}  by at most $b'(Q) - b'(Q \cup \{I^*\})$.
	By assumption and definition of $\gc$, we have $\alpha \cdot (b'(Q)-b'(Q\cup G)) \ge \max_{I_i \in \ui\setminus Q} b'(Q)-b'(Q \cup\{I_i\})$ and, therefore, $\frac{b'(Q)}{\OPT_Q} \le \alpha \cdot (b'(Q)-b'(Q\cup G))$.
	Rearranging the latter inequality, we obtain 
	$b'(Q\cup G) \le b'(Q) \cdot \left(1-\frac{1}{\alpha \OPT_Q}\right)$.
	Since $\OPT \ge \OPT_Q$ for the optimal solution $\OPT$ of the complete instance, we get $b'(Q\cup G) \le b'(Q) \cdot \left(1-\frac{1}{\alpha \OPT}\right)$.
	This and the assumption $b'(Q) \ge 1$ imply that $G$ satisfies the second condition of~\Cref{def:greedyalpha}.
	Note that the argument for the second case does not use that all non-zero coefficients are at least one. Thus, the statement also holds if there are  coefficients $0 < a_i' < 1$.
\end{proof}

With the following lemma, we bound the number of iterations $j$ in which $G_j$ $\alpha$-approximates the current greedy choice via an adjusted set cover greedy analysis.

\begin{restatable}{lemma}{greedyapproxa}
	\label{lem:greedy:approx1}
	Consider an arbitrary algorithm for~\eqref{eq:SetSelection} that scales the coefficients by factor $\gamma$ and iteratively adds disjoint subsets $G_1,\ldots,G_h$ of $\ui$  to the solution until the instance is solved.
	The number of groups $G_j$ that $\alpha$-approximate the current greedy choice (after adding $Q_j = \bigcup_{1 \le j' \le j-1} G_{j'}$) is at most
	$\alpha \cdot \grset(\gamma) \cdot \OPT.$
\end{restatable}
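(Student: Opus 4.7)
The plan is to mirror the standard Chvatal/Dobson potential-function analysis of the greedy covering algorithm, but under the relaxation that each tracked iteration makes only \emph{approximate} greedy progress as captured by~\Cref{def:greedyalpha}. The key observation is that every $\alpha$-approximating group $G_j$ multiplies one of the two potentials $A(Q_j)$ or $b'(Q_j)$ by a factor of at most $\bigl(1-\frac{1}{\alpha \OPT}\bigr)$, so the total number of such groups is controlled by a standard \enquote{how many halvings before the potential falls below its stopping threshold} argument.

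Concretely, I would partition the $\alpha$-approximating iterations into $\mathcal{J}_2$, those satisfying the second condition of~\Cref{def:greedyalpha} (so $b'(Q_j) \ge 1$ and $b'(Q_{j+1}) \le \bigl(1-\frac{1}{\alpha \OPT}\bigr) b'(Q_j)$), and $\mathcal{J}_1$, the remaining $\alpha$-approximating iterations, which must then satisfy the first condition. A convenient structural feature is that $A(Q)$ and $b'(Q)$ are both monotonically nonincreasing in $Q$: non-tracked or $\mathcal{J}_1$-type iterations can only help the $b'$-potential, and symmetrically $\mathcal{J}_2$-type iterations cannot hurt the $A$-potential. Hence the two bounds decouple cleanly, and no cross-talk between the two phases has to be tracked.

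For $|\mathcal{J}_2|$, iterating the factor $\bigl(1-\frac{1}{\alpha \OPT}\bigr)$ and using $\bigl(1-\frac{1}{n}\bigr)^n \le 1/e$ gives $b'(Q) \le b'(\emptyset) \cdot e^{-|\mathcal{J}_2|/(\alpha \OPT)}$ by the end of all $\mathcal{J}_2$-iterations. Combining the precondition $b'(Q) \ge 1$ of the final $\mathcal{J}_2$-iteration with the scaling identity $b'(\emptyset) = \sum_{S \in \fs} b_S / s_{\min} \le \gamma \cdot m \cdot \max_S(w^* - L_S)$ yields $|\mathcal{J}_2| \le \alpha \OPT \lceil \ln(\gamma \, m \max_S(w^* - L_S)) \rceil$. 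An exactly parallel argument for $|\mathcal{J}_1|$ uses that $A(Q)$ is integer-valued and must stay $\ge 1$ as long as the instance is unsolved, together with $A(\emptyset) \le m$, to produce $|\mathcal{J}_1| \le \alpha \OPT \lceil \ln m \rceil$. Summing the two bounds and invoking the definition of $\grset(\gamma)$ gives the claimed $|\mathcal{J}_1| + |\mathcal{J}_2| \le \alpha \cdot \grset(\gamma) \cdot \OPT$.

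The step I expect to need the most care is the bookkeeping at the boundary of each phase: for the last $\mathcal{J}_2$-iteration one only has the precondition $b'(Q) \ge 1$ yet may end with $b'(Q \cup G)$ strictly below $1$, and symmetrically for the last $\mathcal{J}_1$-iteration with $A(\cdot)$. One must verify that the resulting additive $+1$ in each count is absorbed by the ceilings in $\grset(\gamma)$, or alternatively tightened by a sharper exponential estimate. This is a careful application of $-\ln(1-x) \ge x$ together with the integrality of $A(\cdot)$; no new structural idea beyond the standard greedy analysis is needed.
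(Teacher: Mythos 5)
Your proposal is correct and follows essentially the same route as the paper's proof: the same split of the $\alpha$-approximating iterations by which condition of~\Cref{def:greedyalpha} they satisfy, the same geometric-decay potential argument using monotonicity of $A(\cdot)$ and $b'(\cdot)$ across interleaved iterations, and the same termination criteria ($b'(Q)\ge 1$ for the second condition, integrality of $A$ and unsolvedness for the first) together with the bounds $b'(\emptyset)\le \gamma\, m\, \max_S(w^*-L_S)$ and $A(\emptyset)\le m$. The boundary bookkeeping you flag is indeed absorbed by the ceilings in $\grset(\gamma)$, exactly as in the paper.
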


\begin{proof}
	We first show that the number of iterations $j$ with $b'(Q_j) \ge 1$ and $b'(Q_j \cup G_j) \le (1 - \frac{1}{\alpha \cdot \OPT}) \cdot b'(Q_j)$, i.e., the number of iterations that satisfy the second condition of~\Cref{def:greedyalpha}, is at most 
	$\alpha \lceil\ln(\gamma \cdot m\cdot \max_S (w^*-L_{S}))\rceil \cdot \OPT$.
	
	Let $\bar{G}_1,\ldots, \bar{G}_k \subseteq \ui$ denote the sets that are added to the solution by the algorithm \emph{and} satisfy the second condition of~\Cref{def:greedyalpha}.
	Assume that the sets are indexed in the order they are added.
	For each $j \in \{1,\ldots,k\}$, let $\bar{Q}_j \subseteq \ui$ denote the set of intervals that are added to the solution \emph{before} $\bar{G}_j$.
	Note that $\{\bar{G}_1,\ldots,\bar{G}_{j-1}\} \subseteq \bar{Q}_j$, but $\bar{Q}_j$ might contain additional added groups that just do not satisfy the second condition of~\Cref{def:greedyalpha}.
	
	By assumption, $b'(\bar{Q}_j \cup \bar{G}_j) \le (1 - \frac{1}{\alpha \cdot \OPT}) \cdot b'(\bar{Q}_j)$.
	A recursive application of this inequality and the fact that $(1-x) < e^{-x}$ for all $x \in \mathbb{R}\setminus \{0\}$ implies:
	$$
	b'(\bar{Q}_j \cup \bar{G}_j) \le b'(\emptyset) \cdot \left(1-\frac{1}{\alpha \OPT}\right)^j < b'(\emptyset) \cdot e^{-\frac{j}{\alpha \OPT}}.
	$$
	Thus, after $j = \alpha \cdot \OPT \cdot \lceil\ln b'(\emptyset)\rceil$ iterations that satisfy the second condition of~\Cref{def:greedyalpha}, we have $b'(\bar{Q}_j \cup \bar{G}_j) < b'(\emptyset) \cdot e^{- \ln b'(\emptyset)} = 1$.
	But if $b'(\bar{Q}_j \cup \bar{G}_j) < 1$, then there can be no further iteration that satisfies the second condition of~\Cref{def:greedyalpha}.
	Thus, the number of such iterations is at most $\alpha \cdot \OPT \cdot \lceil\ln b'(\emptyset)\rceil$. 
	Since $b'(\emptyset)$ is upper bounded by $\gamma \cdot m\cdot \max_S (w^*-L_{S})$ as we have $m$ constraints with scaled right-hand side values of at most $\gamma \cdot \max_S (w^*-L_{S})$, 
	the number of such iterations is at most $\alpha \cdot \lceil\ln(\gamma \cdot m\cdot \max_S (w^*-L_{S})\rceil \cdot \OPT$.
	
	Next, we show that the number of iterations $j$ with $A(Q_j \cup G_j) \le (1 - \frac{1}{\alpha \cdot \OPT}) \cdot A(Q_j)$, i.e., the number of iterations that satisfy the first condition of~\Cref{def:greedyalpha}, is at most 
	$\alpha \lceil\ln(m)\rceil \cdot \OPT$.
	
	Let $\bar{G}_1,\ldots, \bar{G}_k \subseteq \ui$ denote the sets that are added to the solution by the algorithm and satisfy the first condition of~\Cref{def:greedyalpha}.
	Assume that the sets are indexed in the order they are added.
	For each $j \in \{1,\ldots,k\}$, let $\bar{Q}_j \subseteq \ui$ again denote the set of intervals that are added to the solution \emph{before} $\bar{G}_j$.
	Note that $\{\bar{G}_1,\ldots,\bar{G}_{j-1}\} \subseteq \bar{Q}_j$, but $\bar{Q}_j$ might contain additional sets that just do not satisfy the first condition of~\Cref{def:greedyalpha}.
	
	By assumption, $A(\bar{Q}_j \cup \bar{G}_j) \le (1 - \frac{1}{\alpha \cdot \OPT}) \cdot A(\bar{Q}_j)$.
	A recursive application of this inequality and the fact that $(1-x) < e^{-x}$ for all $x \in \mathbb{R}\setminus \{0\}$ implies:
	$$
	A(\bar{Q}_j \cup \bar{G}_j) \le A(\emptyset) \cdot \left(1-\frac{1}{\alpha \OPT}\right)^j < A(\emptyset) \cdot e^{-\frac{j}{\alpha \OPT}}.
	$$
	Thus, after $j = \alpha \cdot \OPT \cdot \lceil\ln A(\emptyset)\rceil$ such iterations, we have $A(\bar{Q}_j \cup \bar{G}_j) < A(\emptyset) \cdot e^{- \ln A(\emptyset)} = 1$.
	But if $A(\bar{Q}_j \cup \bar{G}_j) < 1$, then $A(\bar{Q}_j \cup \bar{G}_j) = 0$ and the instance is solved and no further iteration is executed.
	Since $A(\emptyset)$ is upper bounded by the number of constraints $m$, the number of iterations that satisfy the first condition of~\Cref{def:greedyalpha} is at most $\alpha \lceil\ln(m)\rceil \cdot \OPT$.
	
	In total, at most $\alpha \lceil\ln(m)\rceil \cdot \OPT$ iterations satisfy the first condition of~\Cref{def:greedyalpha} and at most
	$\alpha \cdot \lceil\ln(\gamma \cdot m\cdot \max_S (w^*-L_{S})\rceil \cdot \OPT$ iterations satisfy the second condition of~\Cref{def:greedyalpha}. 
	In summation, there are at most $\alpha \cdot (\lceil\ln(\gamma \cdot n\cdot \max_{e \in E} b_e)\rceil + \lceil\ln(n)\rceil) \cdot \OPT = \alpha \cdot \grset(\gamma) \cdot \OPT$ iterations that satisfy~\Cref{def:greedyalpha}.
\end{proof}

The lemma states that the number of groups $G_j$ that $\alpha$-approximate their greedy choice is within a factor of $\alpha$ of the performance guarantee $\grset(\gamma)$ of the offline greedy algorithm.
If each $G_j$ $\alpha$-approximates its greedy choice, the iterative algorithm achieves an approximation factor of  $\max_j |G_j| \cdot \alpha \cdot \grset(\gamma)$.
Thus, approximating the greedy choices by a constant factor using a constant group size is sufficient to only lose a constant factor compared to the offline greedy algorithm.

This insight gives us a framework to solve \minset under uncertainty.
Recall that the $w_i$'s (and by extension the $a_i$'s and $b_S$'s) are uncertain and only revealed once we irrevocably add an $I_i \in \ui$ to the solution. 
We refer to a revealed $w_i$ as a \emph{query result}, and to a fixed set of revealed $w_i$'s for all $I_i \in \ui$ as a \emph{realization of query results}.
Consider an iterative algorithm.
The sets $G_j$ can be computed and queried adaptively and are allowed to depend on (random) query results from previous iterations. Hence, 
$X_j = |G_j|$ is a random variable. 
Let $Y_j$ be an indicator variable denoting whether the algorithm executes iteration $j$ ($Y_j = 1$) or terminates beforehand ($Y_j = 0$). 
We define the following class of iterative algorithms and show that algorithms from this class achieve certain guarantees.

\begin{definition}
	\label{def:alphagreedy}
	An iterative algorithm is \alphagreedy if it satisfies:
	\begin{enumerate}
		\item For every realization of query results; each $G_j$ $\alpha$-approximates the greedy choice as characterized by $Q_j$ on the instance with coefficients scaled by~$\gamma$.
		\item $\EX[X_j \mid Y_j = 1] \le \beta$ holds for all iterations $j$.
	\end{enumerate} 
\end{definition}


\begin{theorem}
	\label{theo:framework:minset}
	Each \alphagreedy algorithm for \minset under uncertainty achieves a competitive ratio of $\alpha \cdot \beta \cdot \grsetu(\gamma) \in \mathcal{O}(\alpha\cdot\beta\cdot \log(m))$.
\end{theorem}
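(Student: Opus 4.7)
The plan is to decompose $\EX[\ALG]$ into contributions per iteration and separately use the two conditions of Definition~\ref{def:alphagreedy}. Since the algorithm adds disjoint groups $G_1, G_2, \ldots$ until the instance is solved, the total query cost is $\ALG = \sum_j X_j$, where $X_j = 0$ whenever $Y_j = 0$. By the law of total expectation,
\begin{equation*}
    \EX[\ALG] \;=\; \sum_j \EX[X_j] \;=\; \sum_j \PR[Y_j = 1]\cdot \EX[X_j \mid Y_j = 1].
\end{equation*}
Condition~(2) of Definition~\ref{def:alphagreedy} yields $\EX[X_j \mid Y_j = 1] \le \beta$, so $\EX[\ALG] \le \beta \cdot \sum_j \PR[Y_j = 1]$, and the sum on the right is exactly the expected number of executed iterations.

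Next I would bound the expected number of executed iterations by $\alpha \cdot \grsetu(\gamma) \cdot \EX[\OPT]$. Fix an arbitrary realization of query results. Condition~(1) of Definition~\ref{def:alphagreedy} guarantees that \emph{every} executed $G_j$ $\alpha$-approximates the current greedy choice with respect to $Q_j$ on the scaled instance. Thus Lemma~\ref{lem:greedy:approx1} applies and certifies that the (realized) number of executed iterations is at most $\alpha \cdot \grset(\gamma)\cdot \OPT$ for that realization. Taking expectations,
\begin{equation*}
    \sum_j \PR[Y_j = 1] \;=\; \EX\bigl[\,\text{\# executed iterations}\,\bigr] \;\le\; \alpha \cdot \EX\bigl[\grset(\gamma)\cdot \OPT\bigr].
\end{equation*}

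Now I would remove the randomness inside $\grset(\gamma)$. The only random quantity that appears inside $\grset(\gamma) = \lceil\ln(\gamma m \max_S (w^* - L_S))\rceil + \lceil\ln m\rceil$ is $w^*$, and by definition of $\grsetu$ we have the pointwise bound $\grset(\gamma) \le \grsetu(\gamma)$, where $\grsetu(\gamma)$ is deterministic. Hence $\EX[\grset(\gamma)\cdot \OPT] \le \grsetu(\gamma)\cdot \EX[\OPT]$, and combining this with the previous estimate gives
\begin{equation*}
    \EX[\ALG] \;\le\; \alpha\cdot \beta \cdot \grsetu(\gamma)\cdot \EX[\OPT],
\end{equation*}
which is the claimed competitive ratio; the bound $\grsetu(\gamma) \in \mathcal{O}(\log m)$ (for the constant-upper-limit regime) then yields the stated $\mathcal{O}(\alpha\beta\log m)$.

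There is no substantial obstacle beyond carefully combining the two conditions: the only subtlety is that $\grset(\gamma)$ depends on $w^*$ and is therefore random, and that both $\OPT$ and the per-realization iteration bound from Lemma~\ref{lem:greedy:approx1} are random, so one must pass from a pointwise inequality to one in expectation. Using the deterministic upper bound $\grsetu(\gamma)$ exactly resolves this, and the rest is bookkeeping via linearity of expectation and the conditional-expectation argument above.
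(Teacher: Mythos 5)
Your proposal is correct and matches the paper's own proof essentially step for step: the same decomposition $\EX[\ALG]=\sum_j \PR[Y_j=1]\,\EX[X_j\mid Y_j=1]$, the same use of Condition~(2) to get the factor $\beta$, and the same per-realization application of \Cref{lem:greedy:approx1} followed by replacing the random $\grset(\gamma)$ with the deterministic bound $\grsetu(\gamma)$ before taking expectations. No differences worth noting.
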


\begin{proof}
	Consider an \alphagreedy algorithm $\alg$ for \minset.
	The expected cost of $\ALG$ is $\EX[\ALG] = \sum_j \EX[X_j]$.
	Using the total law of expectations, we get
	\begin{align*}
		\EX[\ALG] &= \sum_j \PR[Y_j = 1] \EX[X_j \mid Y_j = 1] + \PR[Y_j = 0] \EX[X_j \mid Y_j = 0]\\ &= \sum_j \PR[Y_j = 1] \EX[X_j \mid Y_j = 1],
	\end{align*}
	where the last inequality holds as $\EX[X_j \mid Y_j = 0] = 0$ by definition (if the algorithm terminates before iteration $j$, then it adds no more elements to the solution and, thus, $X_j = 0$).	
	By the second property of~\Cref{def:alphagreedy}, this implies $\EX[\ALG] \le \beta \cdot \sum_j \PR[Y_j = 1]$.
	
	Thus, it remains to bound $\sum_j \PR[Y_j = 1]$, which corresponds to the expected number of iterations of $\ALG$. 
	Consider a fixed realization of query results, then, by the first property of \textsc{($\alpha,\beta,\gamma$)-Greedy}, each $G_j$ $\alpha$-approximates its greedy choice for the~\eqref{eq:SetSelection} instance of the realization scaled by factor $\gamma$.
	Then, \Cref{lem:greedy:approx1} implies that the number of iterations is at most $\alpha \cdot \grset(\gamma)\cdot \OPT$, which is upper bounded by $\alpha \cdot \grsetu(\gamma) \cdot \OPT$.
	As this upper bound on the number of iterations holds for every realization and $\OPT$ is the only random variable of that term (since we substituted $\grset$ with $\grsetu$), we can conclude $\sum_j \PR[Y_j = 1] \le \alpha \cdot \grsetu(\gamma) \cdot \EX[\OPT]$, which implies $\EX[\ALG] \le \alpha \cdot \beta \cdot \grsetu(\gamma) \cdot \EX[\OPT]$.
\end{proof}

\section{\minset with Deterministic Right-Hand Sides and More Covering Problems}
\label{sec:mincover}

We first introduce an algorithm for a variant of \minset under uncertainty, where the right-hand sides $b_S$ of the ILP representation~\eqref{eq:SetSelection} are deterministic and explicit part of the input.
Afterwards, we generalize the algorithm for more general covering problems and a different balancing parameter.

\subsection{\minset With Deterministic Right-Hand Sides}

We consider a variant of \minset under uncertainty, where the right-hand sides $b_S$ of the ILP representation~\eqref{eq:SetSelection} are deterministic and explicit part of the input.
Thus, only the coefficients $a_i = (w_i - L_i)$ remain uncertain within the interval $(0,U_i - L_i)$.
For this problem variant, it can happen that the instance has no feasible solution. In that case, we require every algorithm (including $\OPT$) to reduce the covering requirements as much as possible.
As we consider the stochastic problem variant, recall that the balancing parameter is defined as $\tau = \min_{I_i \in \ui} \tau_i$ for $\tau_i = \PR[w_i \ge \frac{U_i+L_i}{2}]$. 



\begin{algorithm}[tb]
	\KwIn{Instance of \minset with deterministic right-hand sides.}
	$Q = \emptyset$;
	Scale $a$ and $b$ by $\frac{2}{s_{\min}}$ to $a'$ and $b'$ for $s_{\min} = \min_{I_i \in \ui \colon U_i - L_i > 0} U_i - L_i$\;
	\While{the problem is not solved}{
		\leIf{$b'(Q) \ge 1$}{$g = \ogc$}{$g=\ogs$}\label{algcov:greedydef1}
		\Repeat{the problem is solved or $w_i -L_i \ge \frac{1}{2} \cdot (U_i-L_i)$\label{algocov:cancel} 
		}{
			$I_i \gets \arg\max_{I_j \in \ui \setminus Q} g(Q, I_j)$; Query $I_i$; $Q \gets Q \cup \{I_i\}$\;
		}	
	}
	\caption{\minset with deterministic right-hand sides.}
	\label{alg:covering1}
\end{algorithm}

\begin{theorem}
	\label{theo:covering}
	There is an algorithm for \minset under uncertainty with deterministic right-hand sides and a competitive ratio of $\frac{2}{\tau} \cdot \grset(\gamma) \in \mathcal{O}(\frac{1}{\tau} \cdot \log  m)$ with $\gamma = 2/s_{\min}$ for $s_{\min} = \min_{I_i \in \ui \colon U_i - L_i > 0} U_i - L_i$.
\end{theorem}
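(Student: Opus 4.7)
The plan is to show that Algorithm~\ref{alg:covering1} fits the \alphagreedy template of \Cref{def:alphagreedy} with parameters $\alpha=2$, $\beta=\tfrac{1}{\tau}$, and $\gamma=\tfrac{2}{s_{\min}}$, and then invoke \Cref{theo:framework:minset} (in its deterministic-right-hand-side version, where $\grsetu$ coincides with $\grset$) to get the ratio $\tfrac{2}{\tau}\cdot\grset(\gamma)\in\mathcal{O}(\tfrac{1}{\tau}\log m)$. Write $G_j$ for the intervals queried during outer iteration $j$ and $Q_j=\bigcup_{j'<j}G_{j'}$. The second \alphagreedy condition is immediate: the inner repeat loop terminates on any query returning $w_i-L_i\ge\tfrac{1}{2}(U_i-L_i)$, which happens independently with probability at least $\tau_i\ge\tau$, so $|G_j|$ is stochastically dominated by a geometric random variable with success probability $\tau$, giving $\EX[|G_j|\mid Y_j=1]\le\tfrac{1}{\tau}=\beta$.

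The technical core is to check that, for \emph{every} realization of query outcomes, the group $G_j$ is a $2$-approximate greedy choice at $Q_j$ in the instance scaled by $\gamma=\tfrac{2}{s_{\min}}$. Write $G_j=\{I_{i_1},\dots,I_{i_k}\}$ in query order and $Q_{j,t}=Q_j\cup\{I_{i_1},\dots,I_{i_t}\}$. The scaling forces $\bar a'_i\ge 2$ on every non-trivial $I_i$, and the termination rule gives $a'_{i_k}\ge\tfrac{1}{2}\bar a'_{i_k}\ge 1$ on the last queried interval. In the $\ogc$-branch ($b'(Q_j)\ge 1$) I combine three estimates: (i) the elementary inequality $\min(cx,y)\ge c\min(x,y)$ for $c\in[0,1]$, applied termwise, yields $\gc(Q_{j,k-1},I_{i_k})\ge\tfrac{1}{2}\ogc(Q_{j,k-1},I_{i_k})$; (ii) the inner-loop greedy choice yields $\ogc(Q_{j,k-1},I_{i_k})\ge\ogc(Q_{j,k-1},I^*)$ for any $I^*\in\ui\setminus Q_{j,k-1}$; and (iii) a Lipschitz-type bound $\ogc(Q_j,I^*)-\ogc(Q_{j,k-1},I^*)\le b'(Q_j)-b'(Q_{j,k-1})$, proved by a case analysis of the summands $\min(\bar a'_{I^*},b'_S(\cdot))$. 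Combining these with $\ogc(Q_j,I^*)\ge\gc(Q_j,I^*)$ and the telescope $\gc(Q_j,G_j)=(b'(Q_j)-b'(Q_{j,k-1}))+\gc(Q_{j,k-1},I_{i_k})$ yields $\gc(Q_j,G_j)\ge\tfrac{1}{2}\gc(Q_j,I^*)$; the residual case $I^*\in G_j\setminus\{I_{i_k}\}$ is trivial by monotonicity, so \Cref{lem:greedy:alpha} delivers condition~2 of \Cref{def:greedyalpha}.

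In the $\ogs$-branch ($b'(Q_j)<1$) all constraint slacks are below $1$ while $a'_{i_k}\ge 1$, so $I_{i_k}$ saturates every still-open constraint that contains it, and $\ogs$ and $\gs$ agree on non-trivial intervals in this regime. An analogous (and slightly easier) telescoping plus monotonicity argument yields $\gs(Q_j,G_j)\ge\max_{I}\gs(Q_j,I)$, so \Cref{lem:greedy:alpha} gives condition~1 of \Cref{def:greedyalpha}. With both branches established, the \alphagreedy machinery produces the bound $\alpha\cdot\beta\cdot\grset(\gamma)=\tfrac{2}{\tau}\grset(\gamma)$. The main obstacle will be the $\ogc$ case: because the state $Q_{j,t}$ drifts away from $Q_j$ during a single outer iteration, the greedy-value guarantee at the terminating query $I_{i_k}$ cannot be compared to $\gc(Q_j,I^*)$ directly, and the Lipschitz estimate on $\ogc$ is precisely what converts the $b'$-reduction already accumulated within the iteration into the slack needed to absorb that drift.
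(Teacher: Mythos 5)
Your proposal is correct and follows the paper's overall plan exactly: show that Algorithm~\ref{alg:covering1} is \alphagreedy with $\alpha=2$, $\beta=\tfrac{1}{\tau}$, $\gamma=\tfrac{2}{s_{\min}}$, and invoke \Cref{theo:framework:minset}; the verification of the second property (geometric termination of the repeat loop) is identical. Where you diverge is in verifying the first property for the $\ogc$-branch. The paper never compares $\gc(Q_j,G_j)$ against $\max_I\gc(Q_j,I)$ at the start-of-iteration state at all: it observes that only the \emph{last} query $I_{\bar{j}}$ of the iteration matters, applies \Cref{lem:greedy:alpha} at the drifted state $Q_j\cup\bar G_j$ to get $b'(Q_j\cup G_j)\le(1-\tfrac{1}{2\OPT})\,b'(Q_j\cup\bar G_j)$, and then uses the monotonicity $b'(Q_j\cup\bar G_j)\le b'(Q_j)$ to transfer the multiplicative decrease back to $Q_j$ — exploiting that \Cref{def:greedyalpha} is phrased as a ratio inequality on $b'$ rather than as a greedy-value comparison. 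Your route instead proves the stronger pointwise statement $\gc(Q_j,G_j)\ge\tfrac12\max_I\gc(Q_j,I)$ via the telescope plus the Lipschitz bound $\ogc(Q_j,I^*)-\ogc(Q_{j,k-1},I^*)\le b'(Q_j)-b'(Q_{j,k-1})$ (which is valid, since each summand $\min\{\gamma(U_{I^*}-L_{I^*}),b'_S(\cdot)\}$ drops by at most the drop in $b'_S$), and then applies \Cref{lem:greedy:alpha} at $Q_j$ itself. Both arguments are sound; the paper's is shorter because it avoids the Lipschitz lemma entirely, while yours is more robust in that it would survive a version of \Cref{def:greedyalpha} that demanded the greedy-value comparison at $Q_j$. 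Your treatment of the $\ogs$-branch (coefficients $\ge1$ versus slacks $<1$, so $\ogs=\gs$ on the terminating query, yielding a $1$-approximation) matches the paper's Case~(2).
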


The algorithm of the theorem loses only a factor $\frac{2}{\tau}$ compared to the greedy approximation factor $\grset(\gamma)$ on the corresponding offline problem.
We show the theorem by proving that Algorithm~\ref{alg:covering1} is an \alphagreedy algorithm for $\alpha = 2$, $\beta = \frac{1}{\tau}$ and $\gamma = \frac{2}{s_{\min}}$ with $s_{\min} = \min_{I_i \in \ui \colon U_i - L_i > 0} U_i - L_i$.
Then, \Cref{theo:framework:minset} implies the theorem.
We remark that we scale by $\frac{2}{s_{\min}}$ instead of $\frac{1}{s_{\min}}$ because of technical reasons that will become clear in the proof of the theorem.

The algorithm scales the coefficients by factor $\gamma$; we use $a'$ and $b'$ to refer to the scaled coefficients.
The idea of Algorithm~\ref{alg:covering1} is to execute the greedy Algorithm~\ref{alg_verification_covering_1} under the assumption that $a_i = U_i-L_i$ (and $a'_i = \gamma a_i$) for all $I_i \in \ui$ that were not yet added to the solution.
As $a_i = (w_i-L_i) \in (0,U_i-L_i)$, this means that we assume $a_i$ to have the largest possible value.
Consequently, $s_{\min}$ is the smallest (non-zero) coefficient $a_i$ under this assumption.
The algorithm computes the greedy choice based on the \emph{optimistic greedy values} $$\ogc(Q,I_i) = {\sum_{S \in \fs \colon I_i \in S} b'_S(Q) - \max\{0, b'_S(Q)-\gamma(U_i-L_i)\}}$$ (if $b'(Q) \ge 1$) and $$\ogs(Q,I_i) = {|\{S\in \fs \colon I_i \in S \mid  b'_S(Q) > 0 \land b'_S(Q) - \gamma(U_i-L_i) \le 0\}|}$$ (otherwise).
That is, the greedy values under the assumption $a_i = U_i-L_i$.
We call these values optimistic as they might overestimate but never underestimate the actual greedy values. 
For subsets $G \subseteq \ui$, we define $\ogs(Q,G)$ and $\ogc(Q,G)$ analogously.

In contrast to $\gs$ and $\gc$, Algorithm~\ref{alg:covering1} has sufficient information to compute $\ogs$ and $\ogc$, and, therefore, the best greedy choice based on the optimistic greedy values.
The algorithm is designed to find, in each iteration, an element $I_i$ with $\gc(Q,I_i) \ge \frac{1}{2} \cdot \ogc(Q,I_i)$ for the current $Q$ (or analogously for $\ogs$ and $\gs$).
We show that (i) this ensures that each iteration $2$-approximates the greedy choice and (ii) that finding such an element takes only $\frac{1}{\tau}$ tries in expectation.

\begin{proof}[Proof of \Cref{theo:covering}]
	Let $j$ be an arbitrary iteration of the outer while-loop, $X_j$ denote the number of queries during the iteration, and $Y_j$ indicate whether the algorithm executes iteration $j$ ($Y_j = 1$) or not ($Y_j = 0$).
	
	Assuming $Y_j = 1$, the algorithm during iteration $j$ executes queries to elements $I_i$ until either $w_i-L_i \ge \frac{1}{2} (U_i-L_i)$ or the problem is solved. 
	Since $w_i \ge \frac{(U_i+L_i)}{2}$ implies $w_i-L_i \ge \frac{1}{2} (U_i-L_i)$ and $\PR[w_i \ge \frac{(U_i+L_i)}{2}] \ge \tau$ holds by assumption,
	the number of attempts until the current $I_i$ satisfies the inequality follows a geometric distribution with success probability at least $\tau$. Hence, $\EX[X_j \mid Y_j = 1] \le \frac{1}{\tau}$; proving Property $2$~of~\Cref{def:alphagreedy}.
	
	We continue by proving Property $1$~of~\Cref{def:alphagreedy}.
	Consider a fixed realization. 
	Let $\bar{G}_j$ denote the queries of iteration $j$ \emph{except} the last one and let $I_{\bar{j}}$ denote the last query of iteration $j$.
	Then $G_j = \bar{G}_j \cup \{I_{\bar{j}}\}$ is the set of queries during the iteration.
	Finally, let $Q_j$ denote the set of queries before iteration $j$.
	We show that $G_j$ $2$-approximates the greedy choice of the scaled instance, which implies  Property~$1$ of~\Cref{def:alphagreedy}. 
	
	If the iteration solves the problem, then $G_j$ clearly $1$-approximates the greedy choice and we are done. 
	Thus, assume otherwise.
	We distinguish between the two cases (1) $b'(Q_j) \ge 1$ and (2) $b'(Q_j) < 1$.
	
	\textbf{Case (1):} We show first that $G_j$ $2$-approximates the greedy choice if $b'(Q_j) \ge 1$.
	In this case, we have $g = \ogc$ (cf.~Line~\ref{algcov:greedydef1}).
	By choice of $I_{\bar{j}}$, we have $\ogc(Q_j \cup \bar{G}_j,I_{\bar{j}}) = \max_{I_i \in \ui \setminus (Q_j \cup \bar{G}_j)} \ogc(Q_j \cup \bar{G}_j,I_i)$, i.e., $I_{\bar{j}}$ has the best optimistic greedy value when it is chosen.
	
	As the iteration does not solve the instance, we have $(w_{\bar{j}}-L_{\bar{j}}) \ge \frac{1}{2}(U_{\bar{j}}-L_{\bar{j}})$ by Line~\ref{algocov:cancel}.
	This directly implies that the real greedy value of $I_{\bar{j}}$ is at least half the optimistic greedy value, i.e., $\gc(Q_j \cup \bar{G}_j,I_{\bar{j}}) \ge \frac{1}{2} \cdot \ogc(Q_j \cup \bar{G}_j,I_{\bar{j}})$. 
	
	Since the best optimistic greedy value is never smaller than the best real greedy value, we get $\gc(Q_j \cup \bar{G}_j,I_{\bar{j}}) \ge \frac{1}{2}\cdot \max_{I_i \in \ui \setminus (Q_j \cup \bar{G}_j)} \gc(Q_j \cup \bar{G}_j,I_i)$.
	This allows us to apply~\Cref{lem:greedy:alpha} to get  
	$b'(Q_j \cup \bar{G}_j \cup \{I_{\bar{j}}\}) \le (1- \frac{1}{2\OPT}) \cdot b'(Q_j \cup \bar{G}_j)$.
	Using  $b'(Q_j \cup \bar{G}_j)\le b'(Q_j)$ and $G_j = \bar{G}_j \cup \{I_{\bar{j}}\}$, we can conclude $b'(Q_j \cup G_j) \le (1- \frac{1}{2\OPT}) \cdot b'(Q_j)$, which shows that $G_j$ satisfies Condition~$2$ of~\Cref{def:greedyalpha}.

	\textbf{Case (2):} Next, we show that $G_j$ $1$-approximates the greedy choice if $b'(Q_j) < 1$.
	In this case, we have $g = \ogs$ (cf.~Line~\ref{algcov:greedydef1}).
	Similar to the previous case, we have $\ogs(Q_j \cup \bar{G}_j,I_{\bar{j}}) = \max_{I_i \in \ui \setminus (Q_j \cup \bar{G}_j)} \ogs(Q_j \cup \bar{G}_j,I_i)$, i.e., $I_{\bar{j}}$ has the best optimistic greedy value when it is chosen.
	
	From $b'(Q_j) = \sum_{S \in \fs} b'_S(Q_j) < 1$ follows $b_S'(Q_j) < 1$ for all $S \in \fs$.
	Furthermore, every element $I_i$ with $w_i - L_i \ge \frac{1}{2} U_i-L_i$ satisfies $a_i = w_i - L_i \ge \frac{s_{\min}}{2}$ and, therefore $a_i' = \gamma a_i = \frac{2}{s_{\min}} \cdot a_i \ge 1$.
	This means that adding $I_i$ to the solution satisfies \emph{all} constraints for sets $S$ with $I_i \in S$ that are previously not satisfied.
	Thus, the optimistic greedy value $\ogs(Q_j,I_i)$ and the real greedy value $\gs(Q_j,I_i)$ are the same, i.e., $\ogs(Q_j,I_i)= \gs(Q_j,I_i)$, as adding $I_i$ cannot satisfy more constraints even if the coefficient $a_i$ was $U_i-L_i$.
	This observation is crucial for the remaining proof and the reason we scale with $\gamma = \frac{2}{s_{\min}}$ instead of $\frac{1}{s_{\min}}$.
	
	Since the iteration does not solve the instance by assumption, we have $(w_{\bar{j}}-L_{\bar{j}}) \ge \frac{1}{2}(U_{\bar{j}}-L_{\bar{j}})$ by Line~\ref{algocov:cancel}.
	As argued above, this implies that $I_{\bar{j}}$ has the best optimistic \emph{and} actual greedy value when it is added to the solution, i.e., $\gs(Q_j \cup \bar{G}_j,I_{\bar{j}}) = \ogs(Q_j \cup \bar{G}_j,I_{\bar{j}}) = \max_{I_i \in \ui \setminus (Q_j \cup\bar{G}_j)} \ogs(Q_j \cup \bar{G}_j,I_i)$. 
	Thus, even under the assumption that all elements $I_i$ of $\mathcal{I}\setminus(Q_j \cup \bar{G}_j)$ have coefficients $a_i = (U_i-L_i)$, interval $I_{\bar{j}}$ achieves the best greedy value.
	
	Let $LB$ denote the optimal solution value for the remaining instance after querying $Q_j \cup \bar{G}_j$ under exactly this assumption that $a_i = (U_i-L_i)$ and $a_i' = \gamma(U_i-L_i)$ for all $I_i \in \mathcal{I}\setminus(Q_j \cup \bar{G}_j)$.
	Clearly $LB \le \OPT$.
	
	Under the assumption that $a'_i = \gamma(U_i-L_i)$ for all $I_i \in \mathcal{I}\setminus(Q_j \cup \bar{G}_j)$, the instance is scaled (i.e., it satisfies that all non-zero coefficients are at least one).
	Thus, we can apply~\Cref{lem:greedy:alpha} under the assumption
	to get
	$A(Q_j \cup \bar{G}_j \cup \{I_{\bar{j}}\}) \le (1- \frac{1}{LB}) \cdot A(Q_j \cup \bar{G}_j)$.
	Since $LB \le \OPT$, this gives us $A(Q_j \cup \bar{G}_j \cup \{I_{\bar{j}}\}) \le (1- \frac{1}{\OPT}) \cdot A(Q_j \cup \bar{G}_j)$.
	Using  $A(Q_j \cup \bar{G}_j)\le A(Q_j)$ and $G_j = \bar{G}_j \cup \{I_{\bar{j}}\}$, we can conclude $A(Q_j \cup G_j) \le (1- \frac{1}{\OPT}) \cdot A(Q_j)$, which shows that $G_j$ satisfies Condition~$2$ of~\Cref{def:greedyalpha} for $\alpha = 1$.
\end{proof}

\subsection{Multiset Multicover under Stochastic Explorable Uncertainty}

We extend the result of the previous section to a stochastic variant of the \emph{multiset multicover problem} (see, e.g.,~\cite{rajagopalan1998}), which generalizes the classical set cover problem.
We are given a universe $\mmu$ of $n$ elements and a family $\mms$ of $m$ multi-sets with $M \subseteq \mmu$ for all $M \in \mms$.
For each $e \in \mmu$, we are given a deterministic covering requirement $b_e \in \mathbb{R}_+$.
Each multi-set $M \in \mms$ contains a number of (fractional) copies of element $e \in \mmu$, denoted by $a_{M,e} \in \mathbb{R}_+$. 
The goal is to select a subset $C \subseteq \mms$ of minimum cardinality that satisfies all covering requirements:
\begin{equation}
	\label{eq:multiset}
	\tag{{\sc MinCoverIP}}
	\begin{array}{llll}
		\min &\sum_{M \in \mms} x_M\\
		\text{s.t. }& \sum_{M \in \mms} x_{M} \cdot a_{M,e} &\ge b_e &\forall e \in \mmu\\
		& \hfill x_M & \in \{0,1\}& \forall M \in \mms.
	\end{array}
\end{equation}
In our stochastic variant,
\mincover under uncertainty, the constraints are uncertain; more precisely, the coefficients $a_{M,e}$ are initially unknown. We are given uncertainty intervals $I_{M,e} = (L_{M,e},U_{M,e})$ with $a_{M,e} \in I_{M,e}$. We may query a set $M \in \mms$, which reveals the precise values $a_{M,e}$ for set $M$ and \emph{all} $e \in \mmu$. In contrast to \minset, we study \mincover in a \emph{query-commit model}, which means that we can only add a set $M \in \mms$ to the solution if we query it, and we \emph{have} to add all queried sets to the solution.  
In a sense, we solve~\eqref{eq:multiset} with uncertainty in the coefficients $a_{M,e}$ and irrevocable decisions:
Once we add a set $M$ to the solution ($x_M = 1$), all $a_{M,e}$ are revealed and we can never remove $M$ from the solution.
Irrevocably adding a set corresponds to querying it.
The goal is to find a feasible solution to the ILP with a minimal number of queries. 

From the problem description alone, the problem of solving \mincover under uncertainty is quite similar to solving~\eqref{eq:SetSelection} under uncertainty. 
The difference is that in~\eqref{eq:SetSelection}, each interval $I_i$ has the same coefficient $a_i$ in all constraints where the variable occurs, i.e., where the variable has a non-zero coefficient. This value is drawn from a single distribution over $I_i$.
For \mincover under uncertainty on the other hand, each coefficient $a_{M,e}$ of a multiset $M$ can be different and has its own uncertainty interval $I_{M,e}$.
In contrast to \minset under uncertainty, we still have deterministic right-hand sides.

\subsubsection{Offline \mincover.} All our results for offline~\minset directly translate to offline \mincover, where we assume full knowledge of all coefficients $a_{M,e}$.
To see this, we again define an ILP parametrized by the set $Q \subseteq \mms$ of multisets that have already been added to the solution and adjust the right-hand sides to the remaining covering requirement after adding $Q$.
Here, $b_e(Q) =  \max\{b_e - \sum_{M \in Q} a_{M,e} ,0\}$ and $b(Q) = \sum_{e \in \mmu} b_e(Q)$.
\begin{equation}
	\tag{{\sc MinCoverIP-Q}}
	\label{eq:multiset:param}
	\begin{array}{llll}
		\min &\sum_{M \in \mms\setminus Q} x_M\\
		\text{s.t. }& \sum_{M \in \mms\setminus Q} x_M \cdot a_{M,e} &\ge b_e(Q) &\forall e \in \mmu\\
		& x_M & \in \{0,1\}& \forall M \in \mms \setminus Q\\
	\end{array}
\end{equation}
Based on this ILP, we define an offline greedy algorithm for \mincover in the same way as for offline \minset, only with slightly different greedy value definitions.
The algorithm scales the coefficients such that all non-zero left-hand side coefficients are at least $1$, and then greedily adds the set to the solution that reduces the right-hand sides the most, i.e., the set $M \in \mms$ that maximizes $\gc(Q,M) = b'(Q) - b'(Q\cup\{M\})$ ($a'$ and $b'$ indicate scaled coefficients). 
For a set $G \subseteq \mms$, we define $\gc(Q,G) = b'(Q) - b'(Q \cup G)$.

After $b_e'(Q) < 1$ for all $e \in \mmu$, we can exploit that all scaled non-zero coefficients $a'_{M,e}$ are at least one. 
This means that adding a set $M$ satisfies all remaining constraints of elements $e$ with $a'_{M,e} > 0$.
Thus, the remaining problem reduces to a \setcover instance, which can be solved by using the greedy algorithm by Chvatal~\cite{chvatal1979}, which greedily adds the set $M$ that maximizes $\gs(Q,M) = A(Q) - A(Q \cup \{M\})$ with $A(Q) = |\{e \in \mmu \mid b_e(Q) > 0\}|$, i.e., the set that satisfies the largest number of constraints that are not already satisfied by $Q$.
For a set $G \subseteq \mms$, we define $\gs(Q,G) =  A(Q) - A(Q \cup G)$.
We call $\gc(Q,M)$ and $\gs(Q,M)$  the \emph{greedy values} of $M$ after adding $Q$.

This offline algorithm achieves an approximation factor of $\rho'(\gamma) = \lceil\ln(\gamma \cdot n \cdot \max_{e \in \mmu} b_e)\rceil + \lceil\ln(n)\rceil$ with $s_{\min} = \min_{e \in \mmu, M \in \mms \colon a_{M,e} > 0} a_{M,e}$, $\gamma = 1/s_{\min}$ and $n = |\mmu|$ (follows from~\cite{dobson1982}).
It is easy to see that~\Cref{def:greedyalpha},~\Cref{lem:greedy:alpha},~\Cref{lem:greedy:approx1},~\Cref{def:alphagreedy} and~\Cref{theo:framework:minset} all transfer to \mincover using $\rho'$ instead of $\rho$ (or $\grsetu$) and using the adjusted greedy value definitions. 
This is because none of the proofs actually uses the special case properties of~\eqref{eq:SetSelection} that~\eqref{eq:multiset} does not have.
In the following, we will use these definitions, lemmas, and theorems to prove algorithmic results for stochastic \mincover.

\subsubsection{\mincover under uncertainty.} Similar to \minset with deterministic right-hand sides, it can happen that the instance has no feasible solution.
In that case, we require every algorithm (including $\OPT$) to reduce the covering requirements as much as possible.

\jnew{Our algorithm again relies on the optimistic greedy values $\ogc(Q,M) = b'(Q)- \sum_{e \in \mmu} \max\{0,b'_e(Q) - \gamma U_{M,e}\}$ (if $b'(Q) \ge 1$) and $\ogs(Q,M) = A(Q) - |\{e \in \mmu \mid b_e'(Q) - \gamma U_{M,e} > 0\}|$ (otherwise).
	For sets $G \subseteq \mms$, we define $\ogs(Q,G)$ and $\ogc(Q,G)$ analogously.
	Furthermore, the algorithm again scales the left-hand side coefficients by 
	$\gamma = 2/s_{\min}$ for $s_{\min} = \min_{e \in \mmu, M \in \mms \colon U_{M,e} > 0} U_{M,e}$.
}

In contrast to \minset under uncertainty, we need a different balancing parameter to take the more general nature of the coefficients into account.
\jnew{
	For \minset under uncertainty\todo{There was a mistake with the original balancing parameter, so I had to change quite a lot here. (Non of this was part of the IPCO paper). If it is too artificial now, we can think about removing 4.2 completely}, the special structure of the coefficients for an interval $I_i$ allowed us to relate the event $w_i \ge \frac{U_i+L_i}{2}$ to the actual and optimistic greedy value of $I_i$ after some set $Q$ has already been queried. For \mincover the greedy values of a multiset $M$ depend on the outcomes of several random variables, one for each $e \in E$ with $U_{M,e} > 0$. In particular, the greedy value $\gs(Q,M)$ now depends on the number of random variables $a_{M,e}$ with a sufficiently large value that satisfies the corresponding constraint. Similar, the greedy value $\gc(Q,M)$ now depends on a sum of random variables. Even if $a_{M,e} \ge \frac{U_{M,e}+L_{M,e}}{2}$ holds with probability $\tau$ for each $M$ and $e$, this property does not necessarily translate to the greedy values.
	
	For these reasons, we define the balancing parameter $\tau'$ based on the greedy values for a scaled instance.
	We characterize the distributions for an $M \in \mms$ by $\tau'_M = \min\{\tau'_{M,1} , \tau'_{M,2} \}$,  where $\tau'_{M,1}$ is the minimum probability $\Pr[\gc(M,Q) \ge \frac{1}{2} \ogc(M,Q)]$ over all $Q \subseteq \mms\setminus\{M\}$ \emph{and} all realizations of the elements of $Q$.
	Similar, $\tau'_{M,2}$ is the minimum probability $\Pr[\gs(M,Q) \ge \frac{1}{2} \ogs(M,Q)]$ over all $Q \subseteq \mms\setminus\{M\}$ \emph{and} all realizations of the elements of $Q$ with $b_e'(Q) \le 1$ for all $e \in \mmu$.
	For the complete instance, we use parameter $\tau' = \min_{M \in \mms} \tau'_M$.
	
	In contrast to the parameter for \minset, the parameter $\tau'$ is defined in a more artificial and restrictive way.
	Nevertheless, we argue that small values of $\frac{1}{\tau'}$ still capture interesting distributions.
	For example, if the distribution of each $a_{M,e}$ is symmetric over the center of $I_{M,e}$, then $\tau'=0.5$.
}
Our main result is the following theorem.

\begin{algorithm}[tb]
	\KwIn{Instance if \mincover under stochastic explorable uncertainty.}
	$Q \gets \emptyset$;
	Scale coefficients by $\gamma = (\frac{2}{s_{\min}})$ to $a'$ and $b'$ with $s_{\min} = \min_{e \in \mmu, M \in \mms \colon U_{M,e} > 0} U_{M,e}$\;
	\While{the problem is not solved}{
		\leIf{$b'(Q) \ge 1$}{$g = \ogc$}{$g=\ogs$}\label{algcov:greedydef1c}
		\Repeat{the problem is solved \jnew{or $\gc(Q,M) \ge \frac{1}{2} \ogc(Q,M)$ (if $g= \ogc$) or $\gs(Q,M) \ge \frac{1}{2} \ogs(Q,M)$ (if $g = \ogs$)}}{
			$M \gets \arg\max_{M' \in \mms \setminus Q} g(Q, M')$; Query $M$; $Q \gets Q \cup \{M\}$\;
		}	
	}
	\caption{Algorithm \mincover under stochastic explorable uncertainty.}
	\label{alg:covering1c}
\end{algorithm}

\begin{theorem}
	\label{theo:coveringc}
	There exists an algorithm for \mincover under uncertainty with a competitive ratio of $\frac{2}{\tau'} \cdot \rho'(\gamma) \in \mathcal{O}(\frac{1}{\tau'} \cdot \log  n)$ with $\gamma = 2/s_{\min}$ for $s_{\min} = \min_{e \in \mmu, M \in \mms \colon U_{M,e} > 0} U_{M,e}$.
\end{theorem}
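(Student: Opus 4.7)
The plan is to show that Algorithm~\ref{alg:covering1c} fits the \alphagreedy framework with parameters $\alpha=2$, $\beta=\frac{1}{\tau'}$, $\gamma=\frac{2}{s_{\min}}$, and then invoke the \mincover-analogue of~\Cref{theo:framework:minset} (which the paper notes transfers to \mincover using $\rho'$ in place of $\grsetu$). This yields a competitive ratio of $\alpha\cdot\beta\cdot\rho'(\gamma)=\frac{2}{\tau'}\cdot\rho'(\gamma)$, which is the claim.

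\textbf{Bounding the expected number of queries per iteration ($\beta=\frac{1}{\tau'}$).} Fix an iteration $j$ with $Y_j=1$, and condition on the state $Q_j$ together with the realizations of all multisets in $Q_j$. Inside iteration $j$ the algorithm repeatedly picks the argmax under $\bar g\in\{\ogc,\ogs\}$ and queries it, stopping as soon as the query $M$ satisfies $g(Q,M)\ge\tfrac12\bar g(Q,M)$ (or the problem is solved). By the definition of $\tau'_M$, and hence of $\tau'=\min_M\tau'_M$, for every reachable intermediate state $Q'\supseteq Q_j$ and every reachable realization of $Q'$, each fresh query satisfies its termination inequality with probability at least $\tau'$, independently of past randomness inside this iteration (since newly queried multisets are independent of the conditioning). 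Thus the number of queries in iteration $j$ is dominated by a geometric random variable with success probability $\tau'$, giving $\EX[X_j\mid Y_j=1]\le\frac{1}{\tau'}$, which is Property~2 of~\Cref{def:alphagreedy}.

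\textbf{Verifying the $2$-approximation of the greedy choice ($\alpha=2$).} Fix a realization of query results and an iteration $j$ that does not solve the problem. Let $\bar G_j$ be the queries of iteration $j$ except the last, $M$ the last queried multiset, and $G_j=\bar G_j\cup\{M\}$. In Case~1 ($b'(Q_j)\ge 1$, $g=\ogc$), $M$ is an argmax of $\ogc(Q_j\cup\bar G_j,\cdot)$ and satisfies $\gc(Q_j\cup\bar G_j,M)\ge\tfrac12\ogc(Q_j\cup\bar G_j,M)$ by the termination condition. Since $\ogc\ge\gc$ pointwise, this gives $\gc(Q_j\cup\bar G_j,M)\ge\tfrac12\max_{M'}\gc(Q_j\cup\bar G_j,M')$. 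Applying the \mincover-analogue of~\Cref{lem:greedy:alpha}(2) and the monotonicity $b'(Q_j\cup\bar G_j)\le b'(Q_j)$ gives $b'(Q_j\cup G_j)\le(1-\tfrac{1}{2\,\OPT})b'(Q_j)$, so $G_j$ satisfies condition~2 of~\Cref{def:greedyalpha}.

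\textbf{Case 2 and the main subtlety.} In Case~2 ($b'(Q_j)<1$, $g=\ogs$) the analogous argument goes through, but one must be careful: the instance with actual coefficients $a_{M,e}$ is not necessarily scaled after multiplication by $\gamma=\frac{2}{s_{\min}}$, since $a_{M,e}$ can be arbitrarily small relative to $s_{\min}=\min_{M,e}U_{M,e}$. This is exactly the obstacle handled in the proof of~\Cref{theo:covering}, and I plan to mirror that trick: work in the \emph{optimistic} instance, in which every unqueried multiset $M'$ is replaced by its upper bound coefficients $U_{M',e}$. In this instance all non-zero scaled coefficients are $\ge 2\ge 1$, so it is scaled, and the function $A(\cdot)$ is unchanged (it only depends on actual coefficients of already-added multisets). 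In the optimistic instance, $\gs(Q_j\cup\bar G_j,M)=\ogs(Q_j\cup\bar G_j,M)$ for the already-queried $M$, while for any unqueried $M'$ its optimistic greedy is $\ogs(Q_j\cup\bar G_j,M')$. Combining $\gs(Q_j\cup\bar G_j,M)\ge\tfrac12\ogs(Q_j\cup\bar G_j,M)$ with the argmax property $\ogs(Q_j\cup\bar G_j,M)\ge\ogs(Q_j\cup\bar G_j,M')$, the \mincover-version of~\Cref{lem:greedy:alpha}(1) applied in the optimistic instance yields $A(Q_j\cup G_j)\le(1-\tfrac{1}{2\,\OPT^{\mathrm{opt}}})A(Q_j)$; since $\OPT^{\mathrm{opt}}\le\OPT$ (the optimistic instance is easier), $G_j$ satisfies condition~1 of~\Cref{def:greedyalpha} with $\alpha=2$. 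Combining the two cases gives Property~1 of~\Cref{def:alphagreedy}, completing the verification that Algorithm~\ref{alg:covering1c} is \alphagreedy, and thus the theorem.
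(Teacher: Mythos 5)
Your proposal is correct and follows essentially the same route as the paper: it shows Algorithm~\ref{alg:covering1c} is \alphagreedy with $\alpha=2$, $\beta=\frac{1}{\tau'}$, $\gamma=\frac{2}{s_{\min}}$, handles Case~1 via the \mincover analogue of~\Cref{lem:greedy:alpha}, and resolves Case~2 by comparing against the optimum of the optimistic remaining instance (your $\OPT^{\mathrm{opt}}$ is exactly the paper's $LB$). The only cosmetic difference is that you invoke the lemma in the optimistic instance where the paper redoes the set-cover counting argument inline; the content is identical.
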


Thus, the algorithm of the theorem loses only a factor $\frac{2}{\tau'}$ compared to the greedy approximation factor $\rho'(\gamma)$ on the corresponding offline problem.
We show the theorem by proving that Algorithm~\ref{alg:covering1c} is an \alphagreedy algorithm for $\alpha = 2$, $\beta = \frac{1}{\tau'}$ and $\gamma = \frac{2}{s_{\min}}$.
Then, \Cref{theo:framework:minset} implies the theorem.
The algorithm scales the coefficients by factor $\gamma$; we use $a'$ and $b'$ to refer to the scaled coefficients.
Algorithm~\ref{alg:covering1c} uses the exact same idea as Algorithm~\ref{alg:covering1} for \minset with deterministic right-hand sides. That is, it executes the greedy algorithm by Dobson~\cite{dobson1982} under the assumption that $a_{M,e} = U_{M,e}$ (and $a'_{M,e} = \gamma U_{M,e}$) for all $e \in \mmu$ and all not yet queried $M \in \mms$.

In contrast to $\gs$ and $\gc$, Algorithm~\ref{alg:covering1c} has sufficient information to compute the optimistic greedy values $\ogs$ and $\ogc$ and, therefore, can compute the best greedy choice based on the optimistic greedy values.
The algorithm is designed to find, in each iteration, one set $M$ with $\gc(Q,M) \ge \frac{1}{2} \cdot \ogc(Q,M)$ for the current $Q$ (or analogously for $\ogs$ and $\gs$).
We show that this ensures that each iteration $2$-approximates the greedy choice for the scaled instance.

\begin{proof}[Proof of \Cref{theo:coveringc}]
	Let $i$ be an arbitrary iteration of the outer while-loop, $X_i$ denote the number of queries during the iteration, and $Y_i$ indicate whether the algorithm executes iteration $i$ ($Y_i = 1$) or not.
	Assuming $Y_i = 1$, the algorithm, during $i$, executes queries to sets $M$ \jnew{until either $\gs(Q,M) \ge \frac{1}{2} \ogs(Q,M)$ (if $g=\ogs$) or $\gc(Q,M) \ge \frac{1}{2} \ogc(Q,M)$  (if $g = \ogc$) or the problem is solved. 
		Thus, the iteration terminates at the latest when it finds an $M$ with $\gs(Q,M) \ge \frac{1}{2} \ogs(Q,M)$ or $\gc(Q,M) \ge \frac{1}{2} \ogc(Q,M)$ depending on $g$. By definition of $\tau'$,}
	the number of attempts until this happens follows a geometric distribution with success probability at least $\tau'$. Hence, $\EX[X_i \mid Y_i = 1] \le \frac{1}{\tau'}$; proving Property $2$~of~\Cref{def:alphagreedy}.
	
	We continue by proving the first property of \alphagreedy.
	Consider a fixed realization. 
	Let $\bar{G}_i$ denote the queries of iteration $i$ \emph{except} the last one and let $\bar{M}$ denote the last query of iteration $i$.
	Then $G_i = \bar{G}_i \cup \{\bar{M}\}$ is the set of queries during the iteration.
	Finally, let $Q_i$ denote the set of queries before iteration $i$.
	We show that $G_i$ $2$-approximates the greedy choice of the scaled instance, which implies  Property~$1$ of~\Cref{def:alphagreedy}. 
	If the iteration solves the problem, then $G_i$ clearly $1$-approximates the greedy choice and we are done. 
	Thus, assume otherwise.
	
	We first show that $G_i$ $2$-approximates the greedy choice 
	if $b'(Q_i) \ge 1$.
	If $b'(Q_i) \ge 1$, we have $g = \ogc$ (cf.~Line~\ref{algcov:greedydef1c}).
	By choice of $\bar{M}$, we have $\ogc(Q_i \cup \bar{G}_i,\bar{M}) = \max_{M \in \mms \setminus (Q_i \cup \bar{G_i})} \ogc(Q_i \cup \bar{G}_i,M)$.
	\jnew{Then, $\gc(Q_i \cup \bar{G}_i,\bar{M}) \ge \frac{1}{2} \cdot \ogc(Q_i \cup \bar{G}_i,\bar{M})$  holds by definition of the algorithm as $g = \ogc$ and iteration $i$ does not solve the instance. This implies 
		$\gc(Q_i \cup \bar{G}_i,\bar{M}) \ge \frac{1}{2}\cdot \max_{M \in \mms \setminus (Q_i \cup \bar{G}_i)} \gc(Q_i \cup \bar{G}_i,M)$.}
	By~\Cref{lem:greedy:alpha}, 
	$b'(Q_i \cup \bar{G_i} \cup \{\bar{M}\}) \le (1- \frac{1}{2\OPT}) \cdot b'(Q_i \cup \bar{G_i})$ and, thus, 
	$b'(Q_i \cup G_i) \le (1- \frac{1}{2\OPT}) \cdot b'(Q_i)$, which implies the property.	
	
	Next, we show that $G_i$ $2$-approximates the greedy choice if $b'(Q_i) < 1$.
	By assumption, we have $b'_e(Q_i) < 1$ for all $e \in \mmu$.
	Recall that we scale by factor $\gamma = \frac{2}{s_{\min}}$ with $s_{\min} = \min_{e \in \mmu, M \in \mms \colon U_{M,e} > 0} U_{M,e}$.
	If a set $M$ satisfies $a_{M,e} \ge \frac{1}{2} \cdot U_{M,e}$ for some $e \in \mmu$ with $U_{M,e} > 0$, then $a'_{M,e} = \gamma a_{M,e} \ge \frac{2}{s_{\min}} \cdot \frac{1}{2} \cdot U_{M,e} \ge 1$.
	Thus, adding $M$ to the solution satisfies the constraint of element $e$ (if it was not already satisfied).
	
	By choice of $\bar{M}$, we have $\ogs(Q_i \cup \bar{G}_i,\bar{M}) = \max_{M \in \mms \setminus (Q \cup \bar{G}_i)} \ogs(Q_i \cup \bar{G}_i,M)$.
	Let $LB$ denote the optimal solution value for the remaining instance under the assumption that $a_{M,e} = U_{M,e}$ (and $a'_{M,e} = \gamma U_{M,e}$) for all $e \in \mmu$ and all not yet queried $M \in \mms \setminus (Q_i \cup \bar{G}_i)$.
	Clearly $LB \le \OPT$.
	
	Under this assumption, the remaining instance is again just a \setcover instance.
	Similar to the proof of~\Cref{lem:greedy:alpha}, we can argue that $\frac{A(Q_i \cup \bar{G}_i)}{LB} \le \ogs(Q_i\cup \bar{G}_i,\bar{M})$ as the optimal solution satisfies the remaining constraints at cost $LB$, but a single multiset can satisfy at most $\ogs(Q_i\cup \bar{G}_i,\bar{M})$ constraints.
	
	\jnew{	Observe that $\gs(Q_i\cup\bar{G}_i,\bar{M}) \ge \frac{1}{2} \ogs(Q_i\cup\bar{G}_i,\bar{M})$ holds by definition of the algorithm as $\bar{M}$ is the last added set in the iteration with $g = \ogs$ and iteration $i$ does not solve the instance.}
	By definition of $\gs$, this implies $A(Q_i \cup \bar{G}_i) - A(Q_i \cup \bar{G}_i \cup \bar{M}) = A(Q_i \cup \bar{G}_i) - A(Q_i \cup G_i) \ge \frac{1}{2} \cdot \ogs(Q_i \cup \bar{G}_i,\bar{M})$.
	In combination with $\frac{A(Q_i \cup \bar{G}_i)}{LB} \le \ogs(Q_i\cup \bar{G}_i,\bar{M})$, we get $A(Q_i \cup \bar{G}_i) - A(Q_i \cup G_i) \ge \frac{A(Q_i \cup \bar{G}_i)}{2 \cdot LB}$.
	This implies $A(Q_i \cup G_i) \le (1 - \frac{1}{2 \cdot LB}) \cdot A(Q_i \cup \bar{G}_i)$.
	As $\OPT \ge LB$ and $A(Q_i) \ge A(Q_i \cup \bar{G}_i)$, we can conclude $A(Q_i \cup G_i) \le (1 - \frac{1}{2 \cdot \OPT}) \cdot A(Q_i)$, which implies that $G_i$ $2$-approximates the greedy choice and, thus, satisfies the first property of \textsc{($\alpha,\beta,\gamma$)-Greedy}.
\end{proof}

\section{\minset under uncertainty}
\label{sec:minset}
We consider the general \minset under uncertainty.
In contrast to the previous section, we now also have uncertainty in the right-hand sides of~\eqref{eq:SetSelection}.
Since we consider the stochastic problem variant, recall that the balancing parameter is $\tau = \min_{I_i \in \ui} \tau_i$ with $\tau_i = \PR[w_i \ge \frac{U_i+L_i}{2}]$. 
Our goal is to iteratively add intervals from $\ui$ to the solution until it becomes feasible for~\eqref{eq:SetSelection}.
To that end, we prove the following main result.

\begin{theorem}
	\label{theo:minset}
	For $\tau>0$.
	There is an algorithm for \minset under uncertainty with a competitive ratio of $\mathcal{O}(\frac{1}{\tau} \cdot \log m \cdot \grsetu(\gamma)) \subseteq \mathcal{O}(\frac{1}{\tau} \cdot \log^2 m)$ with $\gamma = 2/s_{\min}$ for $s_{\min} = \min_{I_i \in \ui \colon (U_i-L_i) > 0} (U_i-L_i)$.
\end{theorem}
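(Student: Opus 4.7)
The goal is to design an algorithm that fits the \alphagreedy framework of~\Cref{def:alphagreedy} with parameters $\alpha = \mathcal{O}(\log m)$, $\beta = \mathcal{O}(1/\tau)$, and $\gamma = 2/s_{\min}$, so that~\Cref{theo:framework:minset} directly yields the claimed competitive ratio $\mathcal{O}(\alpha \cdot \beta \cdot \grsetu(\gamma)) = \mathcal{O}(\frac{1}{\tau} \cdot \log m \cdot \grsetu(\gamma))$. The new difficulty compared to~\Cref{sec:mincover} is that the right-hand sides $b_S = w^* - L_S$ of~\eqref{eq:SetSelection} are now uncertain as well, because $w^*$ is unknown; hence the optimistic instance driving the greedy step of~\Cref{alg:covering1} is no longer computable.

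\textbf{Guessing $w^*$.} After any set of queries $Q$, the true value $w^*$ must satisfy $\underline{w}(Q) \le w^* \le \bar{w}(Q)$ with $\underline{w}(Q) = \min_{S \in \fs} L_S(Q)$ and $\bar{w}(Q) = \min_{S \in \fs} U_S(Q)$. The plan is to consider a geometric sequence of $K = \mathcal{O}(\log m)$ candidate values $\hat{w}_1 < \dots < \hat{w}_K$ covering $[\underline{w}(Q), \bar{w}(Q)]$, such that for the true $w^*$ there is some index $k^\star$ with $\hat{w}_{k^\star}$ within a constant multiplicative factor of $w^*-\underline{w}(Q)$ (equivalently, approximating every $b_S = w^* - L_S$ up to a constant factor). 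For each guess $\hat{w}_k$, define pseudo right-hand sides $\hat{b}_{S,k}(Q) = \max\{0,\hat{w}_k - L_S(Q)\}$ and, combined with the coefficient upper bounds $U_i-L_i$, compute the corresponding optimistic greedy values $\ogc$ and $\ogs$ exactly as in~\Cref{sec:mincover}. Let $I_{i_k}$ be the element maximizing the optimistic greedy value for guess~$k$, and let $k^\dagger$ be the guess whose best element attains the overall largest optimistic value.

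\textbf{Algorithm and analysis.} In each outer iteration, the algorithm queries $I_{i_{k^\dagger}}$ and, mirroring~\Cref{alg:covering1}, retries the selection procedure until either the queried interval satisfies $w_i - L_i \ge \tfrac{1}{2}(U_i-L_i)$ or the problem is solved. Since $\Pr[w_i - L_i \ge \tfrac12(U_i-L_i)] \ge \tau$, the expected number of queries per outer iteration is at most $1/\tau$, giving $\beta = \mathcal{O}(1/\tau)$. For the $\alpha$-approximation property, by construction the guess $\hat{w}_{k^\star}$ approximates the true $w^*$ up to a constant factor, so the optimistic greedy value for guess $k^\star$ is within a constant factor of the true greedy value of the best element; since $I_{i_{k^\dagger}}$ maximizes the optimistic value across all $K$ guesses, its true greedy value is within a factor $\mathcal{O}(\log m)$ of the best true greedy value (losing one $\log m$ to bound the worst guess against $k^\star$ via~\Cref{lem:greedy:alpha}). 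This yields $\alpha = \mathcal{O}(\log m)$ and the first property of~\Cref{def:alphagreedy}, after which~\Cref{theo:framework:minset} concludes the proof.

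\textbf{Main obstacle.} The subtle point is integrating the two greedy regimes of~\Cref{def:greedyalpha} when the mode-switching condition $b'(Q) \ge 1$ itself depends on the unknown $w^*$. The plan is to sidestep this by evaluating, within each iteration and for each of the $\mathcal{O}(\log m)$ guesses, \emph{both} $\ogc$- and $\ogs$-type optimistic values, picking the element that maximizes either quantity, and invoking the appropriate case of~\Cref{lem:greedy:alpha}: the inequality needed for condition~1 of~\Cref{def:greedyalpha} already holds as soon as the scaling trick of~\Cref{sec:mincover} applies (i.e., once $a'_i \ge 1$ is triggered), whereas condition~2 is handled by the $\ogc$-side of the analysis. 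Showing that across the $\mathcal{O}(\log m)$ guesses one always identifies a choice satisfying one of the two conditions, independently of which regime the true instance is in, is the technical heart of the argument.
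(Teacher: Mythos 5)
Your high-level plan --- fit the \alphagreedy framework, enumerate candidate values of $w^*$, use optimistic greedy values, and retry until $w_i-L_i\ge\frac12(U_i-L_i)$ --- is in the same spirit as the paper, but the specific mechanism you propose has two genuine gaps. First, a geometric grid of $\mathcal{O}(\log m)$ guesses $\hat{w}_1<\dots<\hat{w}_K$ on $[\underline{w}(Q),\bar{w}(Q)]$ does not approximate every right-hand side $b_S=w^*-L_S$ up to a constant factor: these are differences against \emph{different} $L_S$, and for a set with $L_S$ close to $w^*$ the quantity $w^*-L_S$ is tiny, so a small absolute error in the guess changes $\hat{w}_k-L_S$ by an unbounded multiplicative factor (or its sign). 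Moreover the range that a multiplicative grid must cover is governed by the interval lengths, not by $m$, so $K=\mathcal{O}(\log m)$ is not justified. This matters because the $\ogs$-type greedy value is a threshold count ($b'_S(Q,w)-\gamma(U_i-L_i)\le 0$) that is not robust to constant-factor perturbations of the right-hand sides and is not even monotone in $w$ in general.

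Second, and more fundamentally, your selection rule ``query the element whose optimistic value is largest over \emph{all} guesses'' breaks the chain of inequalities you need for condition~1 of~\Cref{def:greedyalpha}. The optimistic values $\ogc(Q,\cdot,w)$ grow with $w$, so the maximizing guess $k^\dagger$ tends to overshoot $w^*$; the retry mechanism only gives you $\gc(Q,I_i,w)\ge\frac12\ogc(Q,I_i,w)$ \emph{for the same $w$}, and an element with a huge optimistic value at $\hat{w}_{k^\dagger}\gg w^*$ can have true greedy value zero at $w^*$ (the true residual right-hand sides may already be nearly satisfied). Your parenthetical ``losing one $\log m$ to bound the worst guess against $k^\star$'' has no supporting argument, and I do not see how to supply one. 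The paper avoids exactly this trap by being adaptive \emph{within} each outer iteration: it maintains a threshold $d$ on the achieved greedy value, always queries the best optimistic choice for the \emph{minimum} $w$ reaching $d$, and resets $d$ to the realized greedy value of the successful queries. This yields a clean $2$-approximation at the true $w^*$ (by crediting only queries made at values $w\le w^*$, \Cref{lem:main:claim2b,lem:main:claim2a}) and a geometric growth of $d$ that bounds the number of inner steps (\Cref{lem:main:claim1}) --- i.e., the paper puts the $\log$ factor into $\beta$ with $\alpha=2$, whereas your static grid with $\alpha=\mathcal{O}(\log m)$ does not go through as written.
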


Exploiting~\Cref{theo:framework:minset}, we prove the statement by \nnew{providing} Algorithm~\ref{alg:selection1} \nnew{and showing that it} is an \alphagreedy algorithm for $\alpha = 2$, $\gamma = 2/s_{\min}$ and $\beta = \frac{1}{\tau} (\lceil\log_{1.5}(m \cdot (2/s_{\min}) \cdot \max_{I_i \in \ui} (U_i -L_i))\rceil  + \lceil\log_2(m)\rceil )$.
Note that $\alpha$ and $\gamma$ are defined as in the previous section for \minset with deterministic right-hand sides and will be used analogously.
For $\beta$ on the other hand, we require a larger value to adjust for the additional uncertainty in the right-hand sides $b_S = w^* - L_S$ for the uncertain $w^*$.
Notice that we do not have sufficient information to just execute Algorithm~\ref{alg:covering1} for \minset with deterministic right-hand sides as we need the right-hand side values to compute even the optimistic greedy values.

To handle this additional uncertainty, we want to ensure that each iteration of our algorithm $\alpha$-approximates the greedy choice for \emph{each} possible value of $w^*$.
To do so, we compute and query the best optimistic greedy choice for several carefully selected possible values of $w^*$.

To state our algorithm, we define a parametrized variant of~\eqref{eq:SetSelection} that states the problem under the assumptions that $w^* = w$ for some $w$ and that the set $Q \subseteq \ui$ has already been queried.
The coefficients are scaled
to $a'_i = (2/s_{\min})(w_i - L_i)$ and $b'_S(Q,w) =\max\{(2/s_{\min})(w - L_{S}) - \sum_{I_i \in Q \cap S} a'_i,0\}$.
As before, let $b'(Q,w)= \sum_{S \in \fs} b'_S(Q,w)$ denote the sum of right-hand sides.

\begin{equation}\tag{{\sc MinSetIP-Qw}}\label{eq:setsel}
	\begin{array}{llll}
		\min &\sum_{I_i \in \ui\setminus Q} x_i\\
		\text{s.t. }& \sum_{I_i \in S\setminus Q} x_i \cdot a'_i &\ge b'_S(Q,w) &\forall S \in \fs\\
		& x_i & \in \{0,1\}& \forall I_i \in \ui\\
	\end{array}
\end{equation}

As the right-hand sides are unknown, we define the greedy values for every possible value $w$ for $w^*$.
To that end, let $\gc(Q,I_i,w) = b'(Q,w)-b'(Q\cup \{I_i\},w)$ and $\gs(Q,I_i,w) = A(Q,w)-A(Q\cup \{I_i\},w)$, where $A(Q,w) = |\{S \in \fs \mid b'_S(Q,w) > 0\}|$ denotes the number of constrains in~\eqref{eq:setsel} that are not yet satisfied.
As before, $\gc(Q,I_i,w)$ and $\gs(Q,I_i,w)$ describe how much adding $I_i$ to the solution reduces the sum of right-hand sides and the number of non-satisfied constraints, respectively; now under the assumption that $w^* = w$.
For subsets $G \subseteq \ui\setminus Q$, we define the greedy values in the same way, i.e., $\gs(Q,G,w) = A(Q,w)-A(Q\cup G,w)$ and $\gc(Q,G,w) = b'(Q,w)-b'(Q\cup G,w)$.

Since our algorithm again does not have sufficient information to compute the precise greedy values $\gs(Q,I_i,w)$ and $\gc(Q,I_i,w)$ even for a fixed $w$, we again use the optimistic greedy values defined in the same way as in the previous section. That is
$$\ogc(Q,I_i,w) = {\sum_{S \in \fs \colon I_i \in S} b'_S(Q,w) - \max\{0, b'_S(Q,w)-\gamma(U_i-L_i)\}}$$ and $$\ogs(Q,I_i,w) = {|\{S\in \fs \colon I_i \in S \mid  b'_S(Q,w) > 0 \land b'_S(Q,w) - \gamma(U_i-L_i) \le 0\}|}.$$
For subsets $G \subseteq \ui\setminus Q$, the optimistic greedy values are defined analogously.


Similar to Algorithm~\ref{alg:covering1}, we would like to repeatedly compute and query the best optimistic greedy choice until the queried $I_i$ satisfies $w_i-L_i \ge \frac{U_i-L_i}{2}$ (cf. the repeat-statement).
However, we cannot decide which greedy value, $\ogc$ or $\ogs$, to use as deciding whether $b_S'(Q,w^*) < 1$ depends on the unknown $w^*$.
Instead, we compute and query the best optimistic greedy choice for both greedy values (cf.\ the for-loop). 
Even then, the best greedy choice still depends on the unknown right-hand sides.
Thus, we compute and query the best optimistic greedy choice for several carefully selected values $w$ (cf.\ the inner while-loop) to make sure that the queries of the iteration approximate the greedy choice for every possible $w^*$. Additionally, we want to ensure that we use at most $\beta$ queries in expectation within an iteration of the outer while-loop.

\begin{algorithm}[tb]
	\KwIn{Instance of \minset under uncertainty.}
	Scale all coefficients with $\gamma = 2/s_{\min}$ for $s_{\min} = \min_{I_i \in \ui \colon (U_i-L_i) > 0} (U_i-L_i)$\;
	$Q \gets \emptyset$, $w_{\min}$ $\gets$ minimum possible value $w^*$ (keep up-to-date)\;
	\While{the problem is not solved}{
		\ForEach{$g$ from the ordered list $\ogc,\ogs$}{
			$d \gets 1$; $Q' \gets Q$\;\label{line:selection1:startfirst}
			\lIf{$g=\ogc$}{$w_{\max} \gets$ max possible value $w^*$\label{line:setgreedymax1}}
			\lElse{$w_{\max} \gets$ max $w$ s.t.~$b_S'(Q,w) < 1$ for all $S \in \fs$}
			\While{$\exists w_{\min} \le w \le w_{\max}$ such that $\max_{I_h \in \ui\setminus Q} g(Q,I_h,w) \ge d$\label{line:selection1:while}}{
				\Repeat{$w_i - L_i \ge \frac{U_i - L_i}{2}$ or $\nexists w \le w_{\max} \colon \max_{I_h \in \ui\setminus Q} g(Q,w,I_h) \ge d$ \label{line:selection1:repeat}}{
					$w \gets$ min $w_{\min} \le w \le w_{\max}$ s.t.~$\max_{I_h \in \ui\setminus Q} g(Q,w,I_h) \ge d$ 
					\;\label{line:selection1:fsi1}
					$I_i \gets \arg\max_{I_h \in \ui\setminus Q} g(Q,I_h,w)$;
					Query $I_i$; $Q \gets Q \cup \{I_i\}$\;\label{line:selection1:query1}
					$Q_{1/2} \gets \{I_j \in Q\setminus Q' \mid w_j -L_j \ge \frac{U_j - L_j}{2}\}$\label{line:selection1:notation}\;
					\leIf{$g = \ogc$}{	$d \gets \gc(Q',Q_{1/2}, w)$}{$d \gets \gs(Q',Q_{1/2}, w)$\label{line:selection1:d}}	
				}	
			}
		}				
	}
	\caption{Algorithm for \minset under uncertainty.}
	\label{alg:selection1}
\end{algorithm}

To illustrate the ideas of the algorithm, consider an iteration of the outer while-loop. In particular, consider the for-loop iteration with $g=\ogs$ within this iteration.
Let $Q'$ denote the set of queries that were executed before the start of the iteration. 
Since we only care about the greedy value $\gc$ if there exists some $S \in \fs$ with $b_S'(Q') \ge 1$ (otherwise we use $\ogs$ and $\gs$ instead), we assume that this is the case. 
If not, we use a separate analysis for the for-loop iteration with $g = \ogs$.

Our goal for the iteration is to query a set of intervals $\bar{Q}$ that $2$-approximates the best greedy choice $I^*$ after querying $Q'$, i.e., it has a greedy value $\gc(Q',\bar{Q},w^*)\ge \frac{1}{2}\gc(Q',I^*,w^*)$ and, thus, satisfies~\Cref{lem:greedy:alpha}.
To achieve this for the unknown $w^*$, the algorithm uses the parameter $d$, which is initialized with $1$ (cf.~Line~\ref{line:selection1:startfirst}), the minimum possible value for $\ogc(Q',I^*,w^*)$ under the assumption that there exists some $S \in \fs$ with $b_S'(Q') \ge 1$.  
In an iteration of the inner while-loop, the algorithm repeatedly picks the minimal value $w$ such that the best current optimistic greedy choice has an optimistic greedy value of at least $d$ (cf.~Line~\ref{line:selection1:fsi1}). 
If no such value exists, then the loop terminates (cf.~Lines~\ref{line:selection1:while},~\ref{line:selection1:repeat}).
Afterwards, it queries the corresponding best optimistic greedy choice $I_i$ for the selected value $w$ (cf.~Line~\ref{line:selection1:query1}).
Similar to the algorithms of the previous section, this is done repeatedly until $w_i-L_i \ge (U_i - L_i)/2$.

The key idea to achieve the $2$-approximation with an expected number of queries that does not exceed $\beta$, is to always reset the value $d$ to $\gc(Q',Q_{1/2}, w)$, where $Q_{1/2}$ is the subset of all intervals $I_j$ that have already been queried in the current iteration of the outer while-loop and satisfy $w_j-L_j \ge (U_j - L_j)/2$ (cf.~Lines~\ref{line:selection1:notation},~\ref{line:selection1:d}).
This can be seen as an implicit doubling strategy to search for an unknown value. It leads to an exponential increase of $d$ over the iterations of the inner while-loop, which will allow us to bound their number.

With the following lemma, we prove that this choice of $d$ also ensures that the queries of the iteration indeed $2$-approximate the best greedy choice for $w^*$ if there exists a
$S \in \fs$ with $b_S'(Q',w^*) \ge 1$.
If there is no such set, we can use a similar proof w.r.t.~greedy value $\gs$.
For an iteration $j$ of the outer while-loop, let $G_j$ be the set of queries during the iteration and let $Q_j = \bigcup_{j' < j} G_j$ denote the queries before the iteration (cf.~$Q'$ in the algorithm).

\begin{lemma}
	\label{lem:main:claim2b}
	If there is an $S \in \fs$ with $b_S'(Q_j,w^*) \ge 1$, then $G_j$ $2$-approximates the greedy choice for the scaled instance with $w = w^*$ after querying $Q_j$.
\end{lemma}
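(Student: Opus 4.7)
The plan is to reduce to a single inequality that matches the format of Lemma~\ref{lem:greedy:alpha} (part 2), then exploit the termination condition of the inner while-loop together with the doubling structure of $d$. Let $\bar{G} \subseteq G_j$ denote the queries made during the for-loop iteration with $g = \ogc$. Since $\gc(Q_j, \cdot, w^*)$ is monotone non-decreasing in the added set, it suffices to prove $\bar{G}$ alone $2$-approximates the greedy choice, and by Lemma~\ref{lem:greedy:alpha} this reduces to showing $\gc(Q_j, \bar{G}, w^*) \ge \tfrac{1}{2} \max_{I_i \in \ui \setminus Q_j} \gc(Q_j, I_i, w^*)$.

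Let $I^*$ attain this maximum. If $I^* \in \bar{G}$, monotonicity already gives a $1$-approximation. Otherwise $I^* \in \ui \setminus (Q_j \cup \bar{G})$, and I would use the additive decomposition
\[
\gc(Q_j, \bar{G} \cup \{I^*\}, w^*) = \gc(Q_j, \bar{G}, w^*) + \gc(Q_j \cup \bar{G}, I^*, w^*),
\]
combined with $\gc(Q_j, \bar{G} \cup \{I^*\}, w^*) \ge \gc(Q_j, I^*, w^*)$ (again monotonicity), to obtain
\[
\gc(Q_j, \bar{G}, w^*) \ge \gc(Q_j, I^*, w^*) - \gc(Q_j \cup \bar{G}, I^*, w^*).
\]
The termination condition of the inner while-loop in the $g = \ogc$ phase ensures that for \emph{every} $w \in [w_{\min}, w_{\max}]$, the maximum optimistic greedy value over $\ui \setminus (Q_j \cup \bar{G})$ is strictly less than the final value~$d^*$ of $d$. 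Since $w_{\max}$ is chosen as the maximum possible value of $w^*$ (Line~\ref{line:setgreedymax1}) and $w_{\min} \le w^*$ by maintenance, $w^*$ lies in this range, and by $\gc \le \ogc$ we get $\gc(Q_j \cup \bar{G}, I^*, w^*) < d^*$. Hence $\gc(Q_j, \bar{G}, w^*) > \gc(Q_j, I^*, w^*) - d^*$.

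The remaining and main obstacle is to show $d^* \le \gc(Q_j, \bar{G}, w^*)$, which together with the previous inequality yields $2\gc(Q_j, \bar{G}, w^*) > \gc(Q_j, I^*, w^*)$, completing the $2$-approximation. By construction, $d^* = \gc(Q_j, Q_{1/2}^{\mathrm{last}}, w^{\mathrm{last}})$ for the last selected $w^{\mathrm{last}}$ and the corresponding set $Q_{1/2}^{\mathrm{last}} \subseteq \bar{G}$ of good queries at that moment. Since $\gc(Q_j, \cdot, w)$ is monotone non-decreasing in both $w$ (as $b'_S(Q_j, w)$ is non-decreasing in $w$) and the added set, the easy case $w^{\mathrm{last}} \le w^*$ gives $d^* \le \gc(Q_j, Q_{1/2}^{\mathrm{last}}, w^*) \le \gc(Q_j, \bar{G}, w^*)$ directly.

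The delicate case is $w^{\mathrm{last}} > w^*$, where the algorithm was forced to skip $w^*$ because no $w \le w^*$ satisfied the greedy-value threshold. I would handle this by tracing back through the repeat-loop history to the latest iteration whose chosen $w$ was still $\le w^*$ (such an iteration exists because $d$ starts at $1$ and $w_{\min} \le w^*$), argue that the corresponding $d$-value already satisfies the desired bound via the monotone-in-$w$ case above, and then show that in every subsequent repeat iteration the update $d \gets \gc(Q_j, Q_{1/2}, w)$ can only increase $d$ by at most the greedy contribution of the newly added good interval at $w^*$, which is itself bounded by $\gc(Q_j, \bar{G}, w^*) - \gc(Q_j, Q_{1/2}^{\mathrm{prev}}, w^*)$. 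Summing these increments telescopes to the desired bound $d^* \le \gc(Q_j, \bar{G}, w^*)$.
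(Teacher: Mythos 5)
Your overall skeleton is the same as the paper's: write $\gc(Q_j,I^*,w^*)\le \gc(Q_j,\bar G,w^*)+\gc(Q_j\cup\bar G,I^*,w^*)$, bound the residual term by the threshold $d$ via the selection rule of Line~\ref{line:selection1:fsi1}, and then argue $d\le\gc(Q_j,\bar G,w^*)$ to get the factor~$2$ via \Cref{lem:greedy:alpha}. The gap is in where you anchor this argument. You take $\bar G$ to be \emph{all} queries of the $g=\ogc$ phase and $d^*$ to be the \emph{final} value of $d$, which forces you into the case $w^{\mathrm{last}}>w^*$, and your telescoping argument for that case does not go through. The update in Line~\ref{line:selection1:d} recomputes $d$ as $\gc(Q',Q_{1/2},w)$ at the \emph{current} $w$; since $\gc(Q',\cdot,w)$ is a sum of terms $\min\{b'_S(Q',w),\sum_{I_i}a'_i\}$ and $b'_S(Q',w)$ grows with $w$, each recomputation at $w>w^*$ can exceed the previous value by strictly more than the marginal contribution of the newly added good interval \emph{evaluated at $w^*$} (the caps are looser at larger $w$), and $d$ can even increase when $Q_{1/2}$ does not change at all, simply because the selected $w$ increased. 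So the increments do not telescope to $\gc(Q_j,\bar G,w^*)$, and the key inequality $d^*\le\gc(Q_j,\bar G,w^*)$ is not established; indeed it can fail when the constraints at $w^*$ are nearly covered (making $\gc(Q_j,\bar G,w^*)$ small) while the right-hand sides at $w^{\mathrm{last}}$ are large.

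The paper avoids this case entirely by choosing the anchor differently: it restricts to $\bar G_j\subseteq G_j$, the queries issued while the selected value $w$ was still at most $w^*$, and takes $\bar d_j$ to be the value of $d$ recorded in Line~\ref{line:selection1:d} immediately after the last such query. Then both halves of the argument become your ``easy case'': monotonicity of $\gc$ in $w$ and in the queried set gives $\gc(Q_j,\bar G_j,w^*)\ge\bar d_j$ directly (since $\bar d_j=\gc(Q_j,Q_{1/2},w)$ for some $w\le w^*$ and $Q_{1/2}\subseteq\bar G_j$), and because the algorithm always picks the \emph{minimum} $w$ whose best optimistic greedy value reaches the current threshold, the fact that no further query is issued at any $w\le w^*$ means $\max_{I_h\in\ui\setminus(Q_j\cup\bar G_j)}\ogc(Q_j\cup\bar G_j,I_h,w^*)<\bar d_j\le\gc(Q_j,\bar G_j,w^*)$, which bounds the residual. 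Since $\bar G_j\subseteq G_j$, the $2$-approximation for $\bar G_j$ carries over to $G_j$ by monotonicity. To repair your proof, replace your $\bar G$ and $d^*$ by $\bar G_j$ and $\bar d_j$ as above; the delicate case then disappears.
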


\begin{proof}
	For an arbitrary but fixed realization, 
	consider an iteration $j$ of the outer while-loop such that  there exists a set $S \in \fs$ with $b_S'(Q_j,w^*) \ge 1$.
	
	Consider the subset $\bar{G}_j \subseteq G_j$ of queries that were executed with $g = \ogc$ before the increasing value $w$ (cf.~Line~\ref{line:selection1:fsi1}) surpasses $w^*$.
	That is, $\bar{G}_j$ only contains intervals that were queried for a current value $w \le w^*$.
	Let $\bar{I}_i$ be the element of $\bar{G}_j$ that is queried last.  
	Finally, let $\bar{d}_j$ denote the value $d$ computed by the algorithm in Line~\ref{line:selection1:d} directly after querying $\bar{I}_i$.	
	We continue to show that $G_j$ $2$-approximates the greedy choice of~\eqref{eq:setsel} for $Q= Q_j$ and $w = w^*$.

	Observe that $\gc(Q_j,\bar{G}_j,w^*) \ge  \bar{d}_j$.
	To see this, recall that $\bar{d}_j$ was computed in Line~\ref{line:selection1:d} after $\bar{I}_i$ was queried.
	Thus, $\bar{d}_j = \gc(Q',Q_{1/2}, w)$ for $Q' = Q_j$, $Q_{1/2} = \{I_j \in \bar{G}_j \mid w_j - L_j \ge \frac{U_j - L_j}{2} \}$ and some value $w$ with $w \le w^*$ by assumption.
	Since $w^* \ge w$ and $Q_{1/2} \subseteq \bar{G}_j$, the greedy value $\gc(Q_j,\bar{G}_j,w^*)$ can never be smaller than $\bar{d}_j = \gc(Q',Q_{1/2}, w)$. This implies $\gc(Q_j,\bar{G}_j,w^*) \ge \bar{d}_j$.
	
	We continue by showing that $d^* \le 2 \cdot \gc(Q_j,\bar{G}_j,w^*)$ holds for the best greedy value $d^*$ at the start of the iteration, i.e., $d^* = \max_{I_i \in \ui \setminus Q_j} \gc(Q_j,I_i,w^*)$.
	As $\bar{G}_j \subseteq G_j$, this implies $d^* \le 2 \cdot \gc(Q_j,G_j,w^*)$ and, thus, that $G_j$ satisfies~\Cref{def:greedyalpha}.

	To upper bound $d^*$, first observe that the best optimistic greedy value $d'$ after querying $\bar{G}_j \cup Q_j$ is smaller than $\bar{d}_j$, i.e., $d' = \max_{I_i \in \ui\setminus (Q_j \cup \bar{G}_j)} \ogc(Q_j \cup \bar{G}_j,I_i,w^*) < \bar{d}_j$.
	This follows directly from Line~\ref{line:selection1:fsi1} as $\bar{I}_i$ is the last query for a value $w \le w^*$ by assumption.
	As $\gc(Q_j,\bar{G}_j,w^*) \ge \bar{d}_j$, we get $\gc(Q_j,\bar{G}_j,w^*) \ge d'$.
	
	By definition of $\gc$, the best greedy value after querying $Q_j$ can never be larger than the sum of the greedy value of $\bar{G}_j$ after querying $Q_j$ and the best optimistic greedy value after querying $\bar{G}_j \cup Q_j$.
	Thus, we have $d^* \le \gc(Q_j,\bar{G}_j,w^*) + d' \le 2 \cdot \gc(Q_j,\bar{G}_j,w^*)$. This proves that $\bar{G}_j$ satisfies~\Cref{lem:greedy:alpha} and, thus, concludes this proof.
\end{proof}

Using a similar proof, 
\nnew{we} show the following lemma for the case where $b_S'(Q',w^*) < 1$ for all $S \in \fs$, which together with~\Cref{lem:main:claim2b} implies Property~$1$ of~\Cref{def:alphagreedy}.
While the main arguments remain the same as for the previous lemma, the more discrete nature of the greedy values $\gs$ and $\ogs$ poses several additional technical challenges that need to be taken care of.
For an iteration $j$ of the outer while-loop, let $G_j$ again be the set of queries during the iteration and let $Q_j = \bigcup_{j' < j} G_j$ denote the queries before the iteration (cf.~set~$Q'$ in the algorithm).

\begin{restatable}{lemma}{lemMinSetApprox}
	\label{lem:main:claim2a}
	If $b_S'(Q_j,w^*) < 1$ for all $S \in \fs$, then $G_j$ $2$-approximates the greedy choice for the scaled instance with $w = w^*$ after querying $Q_j$.
\end{restatable}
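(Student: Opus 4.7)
My plan is to mirror the structure of the proof of~\Cref{lem:main:claim2b}, but with $\gs$ (instead of $\gc$) as the relevant greedy measure and with the queries coming from the $g=\ogs$ branch of the for-loop. I will fix a realization satisfying $b_S'(Q_j,w^*) < 1$ for every $S \in \fs$, and aim to apply~\Cref{lem:greedy:alpha}(1), which in this regime only requires showing that $\gs(Q_j,G_j,w^*) \ge \frac{1}{2}\max_{I_i \in \ui\setminus Q_j}\gs(Q_j,I_i,w^*)$.

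For notation, let $\hat Q$ be the state of $Q$ at the start of the $g=\ogs$ iteration (i.e., $Q_j$ plus the queries of the preceding $g=\ogc$ iteration), let $\bar G \subseteq G_j$ consist of those queries made in the $g=\ogs$ iteration for values $w \le w^*$ in Line~\ref{line:selection1:fsi1}, let $\bar I$ be the last such query, and let $\bar d$ be the value of $d$ computed in Line~\ref{line:selection1:d} immediately after $\bar I$ is added. Since $Q_j \subseteq \hat Q$ forces $b_S'(\hat Q,w^*) < 1$ for every $S$, I have $w^* \le w_{\max}$ throughout the iteration; moreover any queried $I_i$ with $w_i-L_i \ge (U_i-L_i)/2$ satisfies $a_i' \ge 1 > b_S'(\hat Q,w^*)$ and therefore \emph{fully} covers every constraint $S \ni I_i$ that $\hat Q$ has not yet satisfied.

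The core of the proof consists of two estimates. First, $\gs(\hat Q,\bar G,w^*) \ge \bar d$: by construction $\bar d = \gs(\hat Q, Q_{1/2}, w)$ for some $w \le w^*$ and a subset $Q_{1/2} \subseteq \bar G$ of intervals with $w_i - L_i \ge (U_i - L_i)/2$. Using the covering observation above, the set of constraints newly satisfied by $Q_{1/2}$ at level $w$ is exactly $\{S : Q_{1/2}\cap S \neq \emptyset,\, b_S'(\hat Q,w) > 0\}$; monotonicity of $b_S'(\hat Q,\cdot)$ in $w$ implies this set only grows when $w$ is replaced by $w^*$, and monotonicity of $\gs$ in the set argument then delivers the claim. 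Second, $\max_{I_h}\ogs(\hat Q \cup \bar G, I_h, w^*) < \bar d$: by the maximality of $\bar I$ within $\bar G$, the very next selection of $w$ in Line~\ref{line:selection1:fsi1} either exceeds $w^*$ or the inner while-loop exits, which by minimality of $w$ (or the exit condition) forces the optimistic greedy value at $w^*$ strictly below the current $d = \bar d$.

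To close, I combine these estimates via a decomposition. Writing $G_j^c = \hat Q \setminus Q_j$ for the queries of the $g=\ogc$ iteration, monotonicity of $\gs$ in the set argument and additivity along $Q_j \subseteq \hat Q \cup \bar G$ yield, for any $I_i$,
\[
\gs(Q_j, I_i, w^*) \le \gs(Q_j, G_j^c \cup \bar G, w^*) + \gs(\hat Q \cup \bar G, I_i, w^*) \le \gs(Q_j, G_j, w^*) + \ogs(\hat Q \cup \bar G, I_i, w^*).
\]
Taking the maximum over $I_i$ and inserting the two estimates gives $\max_{I_i}\gs(Q_j, I_i, w^*) < \gs(Q_j, G_j, w^*) + \gs(\hat Q, \bar G, w^*)$, and a final monotonicity step $\gs(\hat Q, \bar G, w^*) \le \gs(Q_j, G_j, w^*)$ (using $Q_j \subseteq \hat Q$ together with $\hat Q \cup \bar G \subseteq Q_j \cup G_j$) delivers $\max_{I_i}\gs(Q_j, I_i, w^*) \le 2\gs(Q_j, G_j, w^*)$, which together with $b_S'(Q_j,w^*) < 1$ triggers~\Cref{lem:greedy:alpha}(1) with $\alpha = 2$. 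The degenerate case $\bar G = \emptyset$ will be handled separately: integrality of $\ogs$ forces $\max_{I_h}\ogs(\hat Q, I_h, w^*) = 0$ (the inner while-loop either never fires at $w^*$ with $d=1$, or only for $w > w^*$), and the same decomposition then directly yields a $1$-approximation. The main obstacle I anticipate is the monotonicity-in-$w$ step for $\gs$: unlike $\gc$, the value $\gs(\hat Q, \cdot, w)$ is discrete and not automatically monotone in $w$, and it is precisely the interplay of the scaling $\gamma = 2/s_{\min}$, the bound $b_S'(\hat Q, w^*) < 1$, and the half-threshold defining $Q_{1/2}$ that rescues the required inequality.
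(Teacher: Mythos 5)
Your decomposition and your two estimates ($\gs(\hat Q,\bar G,w^*)\ge \bar d$ and $\max_{I_h}\ogs(\hat Q\cup\bar G,I_h,w^*)<\bar d$) track the paper's proof closely and are fine, but the final step --- invoking \Cref{lem:greedy:alpha}(1) once you have $\gs(Q_j,G_j,w^*)\ge\frac{1}{2}\max_{I_i}\gs(Q_j,I_i,w^*)$ --- does not go through. That lemma is stated \emph{for a scaled instance}, i.e., one in which every non-zero left-hand-side coefficient is at least $1$, and its proof rests on the charging bound $\frac{A(Q)}{\OPT_Q}\le\max_{I_i}\gs(Q,I_i)$, which the paper explicitly warns "only holds because all left-hand side coefficients are at least as large as the right-hand sides." After querying $Q_j$ the \emph{realized} coefficients are $a_i'=\gamma(w_i-L_i)$, which can be arbitrarily small (any interval with $w_i-L_i<\tfrac12(U_i-L_i)$ --- in particular every "failed" query in $Q_s$ --- may have $a_i'<1$). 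Concretely, if one remaining constraint has $b_S'=0.9$ and $\OPT$ covers it with two intervals of realized coefficient $0.5$ each, then $\max_{I_i}\gs(Q_j,I_i,w^*)=0$ while $A(Q_j,w^*)/\OPT>0$; your inequality is then vacuously satisfied by a $G_j$ that makes no progress, and the conclusion $A(Q_j\cup G_j,w^*)\le(1-\frac{1}{2\OPT})A(Q_j,w^*)$ does not follow.

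This is precisely the obstacle the paper's proof is built around. Instead of comparing $\OPT$ to $\max_{I_i}\gs(Q_j,I_i,w^*)$, it constructs a relaxed instance $R$ in which the failed queries $Q_s=\bar G_j\setminus Q_{1/2}$ are given cost zero and every interval outside $Q_j\cup P_j\cup\bar G_j$ is given the optimistic coefficient $\gamma(U_i-L_i)\ge 1$; in $R$ every cost-one interval fully covers each constraint it meets, so the set-cover charging argument applies and yields $\hat d\ge A(Q_j\cup P_j,w^*)/LB$ for $\hat d=\gs(Q_j\cup P_j,Q_s,w^*)+d^*$, the best coverage achievable by $Q_s$ plus one paid interval. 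Your two estimates then bound $\hat d$ by $2\gs(Q_j\cup P_j,\bar G_j,w^*)$ exactly as in the paper. So the fix is not cosmetic: the comparison with $\OPT$ must be routed through the optimistic/relaxed instance (as you implicitly did in your sketch of \Cref{lem:main:claim2b} via $d'$ and as Case~(2) of \Cref{theo:covering} does), not through the realized greedy values $\gs(Q_j,\cdot,w^*)$ and \Cref{lem:greedy:alpha}(1). The monotonicity-in-$w$ issue you flagged as the main anticipated obstacle is, by contrast, handled exactly as you propose.
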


\begin{proof}		
	For an arbitrary but fixed realization, 
	consider an iteration $j$ of the outer while-loop such that $b_S'(Q_j,w^*) < 1$ for all $S \in \fs$.
	Our goal is to prove that $G_j$ approximates the greedy choice for the scaled instance with $w=w^*$ after querying $Q_j$ within a factor of two. That is, we have to prove $A(Q_j \cup G_j,w^*) \le (1-\frac{1}{2\cdot \opt}) \cdot A(Q_j,w^*)$. Recall that $A(Q_j,w^*)$ denotes the number of constraints that are not yet satisfied in the~\eqref{eq:setsel} instance for $Q = Q_j$ and $w = w^*$.
	
	Consider the subset $\bar{G}_j \subseteq G_j$ of queries to intervals $I_i$ that were executed with $g = \ogs$ for a current value $w \le w^*$ during iteration $j$ of the outer while-loop.
	Let $P_j \subseteq G_j$ denote the queries of the iteration that were executed before $\bar{G}_j$, i.e., that were executed during the iteration of the for-loop with $g = \ogc$.
	Note that $Q' = Q_j \cup P_j$ is the set of intervals queried before the beginning of the for-loop iteration with $g = \ogs$ during iteration $j$ of the outer while-loop.
	
	\textbf{Proof outline.} We start the proof by making some preliminary observations regarding greedy value $\gs$ and the scaling factor $\gamma$ that will be crucial for the remainder of the proof. Then we proceed by proving that $\bar{G}_j$ approximates the greedy choice of~\eqref{eq:setsel} for $Q= Q_j \cup P_j$ and $w = w^*$ within a factor of $2$. To that end, we first derive a lower bound on the greedy value $\gs(Q_j\cup P_j, \bar{G}_j, w^*)$ and afterwards compare this lower bound with $\opt$.
	Finally, we use the fact that $\bar{G}_j$ $2$-approximates the greedy choice after querying $Q_j \cup P_j$ to show that $G_j$ approximates the greedy choice of~\eqref{eq:setsel} for $Q= Q_j$ and $w = w^*$ within a factor of $2$.
	
	\textbf{Preliminary observations.} Before we start with the proof, recall that an interval $I_i$ with $(w_i-L_i) \ge \frac{1}{2} \cdot (U_i-L_i)$ satisfies $a'_{i} = \frac{2 \cdot (w_i - L_i)}{s_{\min}} \ge \frac{U_i-L_i}{s_{\min}} \ge 1$ by choice of the scaling parameter $\gamma = \frac{2}{s_{\min}}$.
	This implies that adding $I_i$ to the solution satisfies all constraints for sets $S$ with $I_i \in S$ 
	as long as we are considering values $w$ with $b'_S(Q_j,w) \le 1$ for all $S \in \fs$.
	Thus, for such intervals and values $w$, the greedy value $\gs$ of $I_i$ is then equal to the optimistic greedy value $\ogs$ as even under the assumption $a_i' = \gamma (U_i-L_i)$ adding interval $I_i$ cannot satisfy more constraints. 
	By assumption, this in particular holds for all values $w \le w^*$.	
	This also means that the greedy values $\gs$ and $\ogs$ of such intervals $I_i$ only increase with an increasing value $w$, as long as $b'_S(Q_j,w) \le 1$ still holds for all $S \in \fs$.

	\textbf{Lower bound on $\gs(Q_j\cup P_j, \bar{G}_j, w^*)$.} We continue by deriving a lower bound on $\gs(Q_j\cup P_j, \bar{G}_j, w^*)$. 
	Let $\bar{I}_i$ be the element of $\bar{G}_j$ that is queried last and  
	let $\bar{d}_j$ denote the value $d$ computed by the algorithm in Line~\ref{line:selection1:d} directly after querying $\bar{I}_i$. We first observe that $\gs(Q_j\cup P_j, \bar{G}_j, w^*) \ge  \bar{d}_j$.
	To see this, recall that $\bar{d}_j$ was computed in Line~\ref{line:selection1:d} after $\bar{I}_i$ was queried.
	Thus, $\bar{d}_j = \gs(Q',Q_{1/2}, w)$ for $Q' = Q_j \cup P_j$, $Q_{1/2} = \{I_j \in \bar{G}_j \mid w_j - L_j \ge \frac{U_j - L_j}{2} \}$ and some value $w$ with $w \le w^*$ by assumption.
	Since $w^* \ge w$ and $Q_{1/2}\subseteq \bar{G}_j$, the greedy value $\gs(Q_j \cup P_j,\bar{G}_j,w^*)$ can never be smaller than $\bar{d}_j = \gs(Q',Q_{1/2}, w)$ according to the observations stated at the beginning of the proof.
	This implies $\gs(Q_j\cup P_j,\bar{G}_j,w^*) \ge \bar{d}_j$.
	
	Assume for now that the algorithm queried $Q_s = \bar{G}_j \setminus Q_{1/2}$ before $Q_{1/2}$. 
	We consider the best optimistic greedy value $d^*$ after $Q_j \cup P_j \cup Q_s$ has already been queried, i.e, $d^* = \max_{I_i \in \ui \setminus (Q_j\cup P_j \cup Q_s)} \ogs(Q_j\cup P_j,I_i,w^*)$.
	To bound $d^*$, first observe that the best optimistic greedy value $d'$ after querying $\bar{G}_j \cup Q_j \cup P_j$ is smaller than $\bar{d}_j$, i.e., $d' = \max_{I_i \in \ui\setminus (Q_j \cup P_j \cup \bar{G}_j)} \ogs(Q_j \cup P_j \cup \bar{G}_j,I_i,w^*) < \bar{d}_j$.
	This follows directly from Line~\ref{line:selection1:fsi1} as $\bar{I}_i$ is the last query for a value $w \le w^*$ by assumption.
	Since we already showed $\gs(Q_j\cup P_j,\bar{G}_j,w^*) \ge \bar{d}_j$, we get $\gs(Q_j \cup P_j,\bar{G}_j,w^*) \ge d'$.
	
	By definition of $\ogs$, definition of $Q_{1/2}$, and the assumption that we only consider values $w$ with $b'_S(Q',w) < 1$ for all $S \in \fs$, the best optimistic greedy value after querying $Q_j \cup P_j \cup Q_s$ 
	can never be larger than the sum of the optimistic greedy value of $Q_{1/2}$ after querying $Q_j \cup P_j \cup Q_s$ and the best optimistic greedy value after querying $Q_j \cup P_j \cup Q_s \cup Q_{1/2} = Q_j \cup P_j \cup \bar{G}_j$.
	By assumption that we only consider values $w$ with $b'_S(Q',w) < 1$ for all $S \in \fs$, we have $\ogs(Q_j\cup P_j\cup Q_s,Q_{1/2},w^*) = \gs(Q_j\cup P_j\cup Q_s,Q_{1/2},w^*)$ (as argued at the beginning of the proof).	
	Putting it together, we get 
	\begin{align*}
		d^* &\le \ogs(Q_j\cup P_j\cup Q_s,Q_{1/2},w^*) + d' \\  &= \gs(Q_j\cup P_j\cup Q_s,Q_{1/2},w^*) + d' \\  &\le \gs(Q_j\cup P_j\cup Q_s,Q_{1/2},w^*) +  \gs(Q_j\cup P_j,\bar{G}_j,w^*) \\ 
	\end{align*}
	
	\textbf{Proving that $\bar{G}_j$ approximates its greedy choice.} We continue the proof by using the inequality for $d^*$ in order to show that $\bar{G}_j$ approximates its greedy choice.
	
	To that end, we consider a relaxed instance $R$ of~\eqref{eq:setsel} with $Q = Q_j \cup P_j$ and $w = w^*$, i.e., we assume that $Q_j \cup P_j$ has already been queried and consider the right-hand sides $b'_S(Q_j \cup P_j,w^*)$ for all $S \in \fs$.
	We relax the instance by increasing the left-hand side coefficients and decreasing the cost coefficients. Let $\hat{a}$ denote the relaxed coefficients.
	First consider the intervals in $\bar{G_j}$. For these intervals, we use the original scaled coefficients. That is, we set $\hat{a}_i = a_i' = \gamma(w_i-L_i)$. Recall that $Q_{1/2} = \{I_i \in \bar{G}_j \mid w_i-L_i \ge \frac{1}{2} (U_i-L_i)\}$ and $Q_s = \bar{G}_j \setminus Q_{1/2}$. For intervals $I_i \in Q_s$, we set the cost coefficients to zero. Thus, these intervals can be added to any solution without increasing the objective value. All other intervals keep their cost coefficients of one.
	For all other intervals $I_i \in \ui \setminus (Q_j\cup P_j \cup \bar{G}_j)$, we set $\hat{a}_i = \gamma(U_i - L_i)$. That is, we assume that the coefficients of these intervals are slightly larger than their largest possible value. 
	Since instance $R$ compared to the original instance only increases left-hand side coefficients and decreases cost coefficients, it clearly is a relaxation and, thus, $LB \le \opt$ for the optimal objective value $LB$ of the relaxed instance $R$.
	
	Since we assume $b'_S(Q_j \cup P_j,w^*) < 1$ for all $S \in \fs$, the right-hand sides of instance $R$ are all strictly smaller than one. Furthermore, by choice of the scaling parameter $\gamma$, all intervals $I_i$ satisfy $\gamma (U_i-L_i) \ge 1$. Thus, the intervals $I_i$ in $Q\setminus (Q_j \cup P_j \cup \bar{G}_j)$ have coefficients $\hat{a}_i \ge 1$. By the observations at the beginning of the proof, the same holds for the intervals in $Q_{1/2}$. Only the intervals in $Q_s$ can potentially have coefficients smaller than the right-hand sides. This also means that the greedy value $\gs$ of an interval $I_i \in Q\setminus (Q_j \cup P_j \cup Q_s)$ for instance $R$ can never be increased by previously adding intervals of $Q_s$ to the solution, since adding $I_i$ already satisfies all constraints for sets $S$ with $I_i \in S$. For intervals in $Q_s$, it might be the case that previously adding further elements of $Q_s$ to the solution increases the greedy value. This is, because the coefficient of an $I_i \in Q_s$ might be too small to satisfy a constraint $S$ with $I_i \in S$ but previously adding further intervals from $Q_s$ might decrease the right-hand side of $S$ enough such that adding $I_i$ can satisfy the constraint and, thus, increase the greedy value of~$I_i$.
	
	Now, consider the number of constraints that are initially not satisfied in $R$. This number is exactly the same as in the non-relaxed original instance after querying $Q_j \cup P_j$ since $R$ uses the exact same right-hand sides. Thus, the number of not yet satisfied constraints in $R$ is $A(Q_j \cup P_j,w^*)$. Let $\hat{d} = \max_{I_i \in \ui\setminus(Q_j \cup P_j \cup Q_s)} |\{S \in S \mid b'_S(Q_j\cup P_j,w^*) > 0 \land b'_S(Q_j\cup P_j,w^*) - \hat{a}_i - \sum_{I_h \in Q_s} \hat{a}_h  \le 0 \}|$ denote the maximum number of constraints that can be satisfied by adding $Q_s$ and \emph{one} interval $I_i$ in $Q_j \cup P_j \cup Q_s$ to the solution for $R$. In a sense, $\hat{d}$ is the best possible greedy value $\gs$ for instance $R$ that can be achieve by adding such a set $Q_s \cup \{I_i\}$ to the solution. 
	Remember that the elements of $Q_s$ do not incur any costs, so $\hat{d}$ constraints can be satisfied at cost one. 
	Following the arguments in the proof of~\Cref{lem:greedy:alpha}, this implies $\hat{d} \ge \frac{A(Q_j \cup P_j,w^*)}{LB}$ as the optimal solution satisfies $A(Q_j \cup P_j,w^*)$ constraints at cost $LB$ but it is impossible to satisfy more than $\hat{d}$ constraints with cost one. This last implication crucially uses the observation that the greedy values of intervals in $Q\setminus (Q_j \cup P_j \cup Q_s)$ cannot increase by previously adding other intervals to the solution. Because this is the case, even adding an interval later cannot satisfy more than $\hat{d}$ constraints, which gives us $\hat{d} \ge \frac{A(Q_j \cup P_j,w^*)}{LB}$. We remark that this also is the reason why the offline Algorithm~\ref{alg:covering1} starts with greedy value $\gc$ and only switches to greedy value $\gs$ once the instance reduced to a \setcover instance.
	
	Taking a closer look at $\hat{d}$, we can observe that this greedy value is exactly the sum of the greedy values $\gs(Q_j \cup P_j, Q_s,w^*)$ and $d^* = \max_{I_i \in Q\setminus(Q_j \cup P_j\cup Q_s)} \ogs(Q_j\cup P_j \cup Q_s, I_i, w^*)$ for the non-relaxed instance. This is, because the original instance uses the same right-hand sides as $R$, the coefficients of intervals in $Q_s$ are the same in both instances, and the optimistic greedy value $\ogs(Q_j\cup P_j \cup Q_s, I_i, w^*)$ like instance $R$ already assumes that all intervals in $Q\setminus (Q_j\cup P_j \cup Q_s)$ have coefficients larger than all right-hand sides.
	Thus, these three arguments imply $\hat{d} = \gs(Q_j \cup P_j, Q_s,w^*) + d^*$.
	Combining this equality with $\hat{d} \ge \frac{A(Q_j \cup P_j,w^*)}{LB}$ and the previously derived upper bound on $d^*$ gives us:
	\begin{align*}
		\frac{A(Q_j \cup P_j,w^*)}{LB} &\le \hat{d}\\
		&=  \gs(Q_j \cup P_j, Q_s,w^*) + d^*\\
		&\le\gs(Q_j \cup P_j, Q_s,w^*) + \gs(Q_j\cup P_j\cup Q_s,Q_{1/2},w^*) + \gs(Q_j \cup P_j, \bar{G}_j,w^*)\\
		&= 2 \cdot \gs(Q_j \cup P_j, \bar{G}_j,w^*).
	\end{align*}
	Here the second equality uses that
	\begin{align*}
		&\gs(Q_j \cup P_j, Q_s,w^*) + \gs(Q_j\cup P_j\cup Q_s,Q_{1/2},w^*)\\
		=&(A(Q_j \cup P_j, w^*) - A(Q_j \cup P_j \cup Q_s,w^*))\\
		+& (A(Q_j\cup P_j\cup Q_s, w^*)- A(Q_j\cup P_j\cup Q_s \cup Q_{1/2}, w^*))\\
		=& A(Q_j \cup P_j, w^*) - A(Q_j\cup P_j\cup Q_s \cup Q_{1/2}, w^*)\\
		=& A(Q_j \cup P_j, w^*) - A(Q_j\cup P_j\cup \bar{G}_j, w^*)\\
		=& \gs(Q_j\cup P_j,\bar{G}_j,w^*).\\
	\end{align*}
	
	Plugging in the definition of $\gs$ and the inequality $LB \le \OPT$ yields
	\begin{align*}
		&\frac{A(Q_j \cup P_j,w^*)}{LB} \le 2 \cdot (A(Q_j \cup P_j,w^*) - A(Q_j \cup P_j \cup \bar{G}_j,w^*))\\
		\Leftrightarrow& A(Q_j \cup P_j \cup \bar{G}_j,w^*) \le A(Q_j \cup P_j,w^*) \cdot (1-\frac{1}{2\cdot LB})\\
		\Rightarrow&  A(Q_j \cup P_j \cup \bar{G}_j,w^*) \le A(Q_j \cup P_j,w^*) \cdot (1-\frac{1}{2 \cdot \OPT}).\\
	\end{align*}
	Thus, $\bar{G}_j$ approximates the greedy choice of~\eqref{eq:setsel} for $Q = Q_j\cup P_j$ and $w=w^*$ within a factor of $2$. 
	
	\textbf{Concluding the proof.} Remember that our goal was to show that $G_j$ approximates the greedy choice for $w=w^*$ after querying $Q_j$ within a factor of $2$. 
	To that end, observe that $P_j \cup \bar{G}_j \subseteq G_j$ implies $A(Q_j \cup P_j \cup \bar{G}_j,w^*) \ge A(Q_j \cup G_j,w^*)$ and that $Q_j \subseteq Q_j \cup P_j$ implies $A(Q_j,w^*)  \ge A(Q_j\cup P_j,w^*)$. 
	Plugging these inequalities into the previously derived inequality for $A(Q_j \cup P_j \cup \bar{G}_j,w^*)$ yields $A(Q_j \cup G_j,w^*) \le A(Q_j,w^*) \cdot (1-\frac{1}{2 \cdot \OPT})$.
	This means that $G_j$ also $2$-approximates the greedy choice after querying $Q_j$ for $w=w^*$, which concludes the proof.
\end{proof}

Since the~\Cref{lem:main:claim2a,lem:main:claim2b} imply Condition~$1$~of~\Cref{def:alphagreedy}, it remains to show Condition~$2$ in order to apply~\Cref{theo:framework:minset}. 
The proof idea is to show that parameter $d$ increases by a factor of at least $1.5$ in each iteration of the inner while-loop with $g = \ogc$.
As $\ogc(Q,I_i,w)$ is upper bounded by $m (2/s_{\min})\max_{I_i \in \ui} (U_i -L_i)$, this means the inner loop executes at most$\lceil\log_{1.5}(m (2/s_{\min})\max_{I_i \in \ui} (U_i -L_i))\rceil$ iterations for $g = \ogc$.
For $g=\ogs$, we can argue in a similar way that at most $\lceil\log_2(m)\rceil$ iterations are executed.
Similar to the previous section on \minset with deterministic right-hand sides, we can also show that each iteration of the inner while-loop executes at most $\frac{1}{\tau}$ queries in expectation.
Combining these insights, we can bound the expected number of queries during an execution of the outer while-loop by $\beta$.
Formally proving the increase of $d$ by $1.5$ requires to take care of several technical challenges. 
The basic idea is to exploit that the interval $I_i$ queried in Line~\ref{line:selection1:query1} has an optimistic greedy value of at least $d$.
If it satisfies $w_i-L_i \ge \frac{1}{2}(U_i-L_i)$, we show that the actual greedy value is at least $d/2$.
When $d$ is recomputed in Line~\ref{line:selection1:d}, then $I_i$ is a new member of the set $Q_{1/2}$ and leads to the increase of $d/2$. 

We conclude the section by formalizing this proof idea.
For each iteration $j$ of the outer while-loop, let $X_j$ be a random variable denoting the number of queries executed during iteration $j$ and let $Y_j$ be a variable indicating whether the iteration is executed ($Y_j = 1$) or not ($Y_j = 0$). We prove the following lemma, which implies that the algorithm also satisfies  the second condition of~\Cref{def:alphagreedy} and, thus, satisfies~\Cref{theo:minset}.

\begin{lemma}
	\label{lem:main:claim1}
	$\EX[X_j \mid Y_j = 1] \le \beta = \frac{1}{\tau} \cdot (\lceil\log_{1.5}(m \cdot \frac{2 \cdot \max_{I_i \in \ui} (U_i -L_i)}{s_{\min}})\rceil  + \lceil\log_{2}(m)\rceil)$.
\end{lemma}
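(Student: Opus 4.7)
The plan is to split the queries made during iteration $j$ into the two for-loop phases ($g=\ogc$ then $g=\ogs$), and within each phase into its successive repeat-until executions. Each execution of the repeat-until terminates the first time a query satisfies $w_i-L_i \ge (U_i-L_i)/2$; by definition of $\tau$ this happens independently with probability at least $\tau$ per query, so the number of queries per repeat-until is stochastically dominated by a geometric random variable with success probability $\tau$, giving conditional expectation at most $1/\tau$. Hence, conditional on $Y_j=1$, $\EX[X_j\mid Y_j=1]$ is bounded by $1/\tau$ times the total number of inner while-loop iterations in the two phases. It then suffices to show that, for a fixed realization, the number of inner while-loop iterations is at most $\lceil\log_{1.5}(m\cdot\tfrac{2\max_{I_i\in\ui}(U_i-L_i)}{s_{\min}})\rceil$ when $g=\ogc$ and at most $\lceil\log_{2} m\rceil$ when $g=\ogs$.

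The central claim is that the threshold $d$ grows by a factor of at least $1.5$ (respectively $2$) between the start and the end of each inner while-loop iteration. Since $d$ is initialized to $1$ and is upper bounded by the trivial cap on the corresponding greedy value ($b'(\emptyset,w_{\max}) \le m\cdot(2/s_{\min})\cdot\max_i(U_i-L_i)$ for $\ogc$, and $|\fs|=m$ for $\ogs$), the required iteration bounds follow immediately from the growth factor.

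To prove the growth of $d$, fix an inner while-loop iteration in the $\ogc$-phase and denote by $d_{\mathrm{old}}$ the value of $d$ at its start (so $d_{\mathrm{old}}=\gc(Q', Q_{1/2}^{\mathrm{prev}}, w^{\mathrm{prev}})$ for the previous successful-query value $w^{\mathrm{prev}}$) and by $d_{\mathrm{new}}$ its value at the end, set by the terminating successful query $I^*$ at value $w^*$. Two monotonicity facts drive the argument. (i) Within one repeat-until run, $Q_{1/2}$ is frozen until the success occurs, and the induced sequence of $d$-values is non-decreasing, because the next $w$ is chosen as the smallest valid threshold for the current $d$, while both $\ogc(Q,I_h,w)$ and $\gc(Q',Q_{1/2},w)$ are non-decreasing in $w$. (ii) A joint induction over inner while iterations gives $w^* \ge w^{\mathrm{prev}}$: as $Q$ grows the max optimistic greedy at any fixed $w$ can only shrink, pushing the min valid $w$ for any fixed $d$ upward. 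Using the additive decomposition
\[
d_{\mathrm{new}} \;=\; \gc(Q',Q_{1/2}^{\mathrm{prev}},w^*) \;+\; \gc(Q'\cup Q_{1/2}^{\mathrm{prev}},\{I^*\},w^*),
\]
the first summand is at least $\gc(Q',Q_{1/2}^{\mathrm{prev}},w^{\mathrm{prev}})=d_{\mathrm{old}}$ by (ii) and monotonicity in $w$; the second is at least $\gc(Q,\{I^*\},w^*)\ge \tfrac{1}{2}\ogc(Q,I^*,w^*)\ge\tfrac{1}{2}d_{\mathrm{old}}$, using successfulness of $I^*$ (actual greedy value is at least half the optimistic one), the arg-max rule at $w^*$ with the current $d\ge d_{\mathrm{old}}$ guaranteed by (i), and monotonicity of $\gc(\cdot,\{I^*\},w)$ in its first argument. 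Thus $d_{\mathrm{new}}\ge\tfrac{3}{2}\,d_{\mathrm{old}}$. For the $\ogs$-phase, the additional observation used already in the proof of Theorem~\ref{theo:covering} applies: a successful $I^*$ has scaled coefficient $a'_{I^*}\ge 1$ and in this phase $b'_S(Q,w^*)<1$ for all $S$, so $I^*$ by itself closes every still-open constraint containing it, yielding $\gs(Q,\{I^*\},w^*)=\ogs(Q,I^*,w^*)\ge d_{\mathrm{old}}$ and consequently $d_{\mathrm{new}}\ge 2 d_{\mathrm{old}}$ by the same decomposition.

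The main obstacle will be carrying out the monotonicity bookkeeping rigorously, in particular (ii) and the non-decreasing property of $d$ inside one repeat-until, because $d$ is recomputed even after unsuccessful queries (where $Q_{1/2}$ is fixed but the freshly chosen $w$ shifts). This has to be set up as a joint induction with the $d$-growth claim so that the inductive hypothesis $d_{\mathrm{old}}\ge d_{\mathrm{old}}^{\mathrm{prev}}$ can be used to justify $w^*\ge w^{\mathrm{prev}}$. Secondary subtleties include verifying that the trivial upper bounds on $d$ used above are actually valid (which follow from $d \le b'(\emptyset,w_{\max})$ in the $\ogc$-phase and $d \le m$ in the $\ogs$-phase by definition of the two greedy values) and using the total law of expectations to aggregate the per-repeat-until $1/\tau$ bound over the (realization-dependent, but uniformly bounded) number of inner while iterations, which is exactly the computation performed in the proof of Theorem~\ref{theo:framework:minset}.
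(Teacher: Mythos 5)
Your proposal follows essentially the same route as the paper's proof: the same decomposition into the two for-loop phases and their repeat-until runs, the same geometric-distribution bound of $1/\tau$ queries per run, and the same telescoping decomposition of the recomputed $d$ into the old contribution of $Q_{1/2}$ plus the marginal value of the last (successful) query, combined with monotonicity in $w$ and in the base set to obtain the growth factors $1.5$ and $2$. The bookkeeping you flag as the main obstacle (monotonicity of $d$ within a repeat-until run and of $w$ across inner iterations) is exactly the point the paper also addresses, so there is no gap.
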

\begin{proof}
	Consider an iteration $j$ of the outer while-loop of Algorithm~\ref{alg:selection1}.
	In the following, all probabilities and expected values are under the condition $Y_j = 1$.
	
	For iteration $j$, let $A_{jk}$ and $B_{jk}$ be random variables denoting the number of queries in iteration $k$ of the inner while-loop for $g = \ogc$ and $g= \ogs$, respectively.
	Then, $\EX[X_j] = \sum_{k} \EX[B_{jk}] + \sum_{k} \EX[A_{jk}]$.
	
	Let $\bar{A}_{jk}$ and $\bar{B}_{jk}$ be random variables indicating whether iteration $k$ of the inner while-loop with $g = \ogc$ and $g=\ogs$, respectively, is executed ($\bar{A}_{jk} = 1$, $\bar{B}_{jk} = 1$) or not.
	Given that such an iteration is executed, we can exploit that the termination criterion $w_i \ge \frac{1}{2} \cdot (U_i-L_i)$ occurs with probability $\PR[(w_i-L_i) \ge (U_i-L_i)/2] \ge \PR[w_i \ge (U_i+L_i)/2]   \ge \tau$ to show $\EX[A_{jk} \mid \bar{A}_{jk} = 1] \le \frac{1}{\tau}$
	and $\EX[B_{jk} \mid \bar{B}_{jk} = 1]\le \frac{1}{\tau}$.
	Using these bounds and exploiting that $\EX[A_{jk}\mid \bar{A}_{jk}=0]=\EX[B_{jk}\mid \bar{B}_{jk}=0]= 0$, we get
	\begin{align*}
		\EX[X_j]
		&= \sum_k (\EX[A_{jk}] + \EX[B_{jk}])\\
		&= \sum_k \PR[\bar{A}_{jk} = 1] \cdot \EX[A_{jk} \mid \bar{A}_{jk}=1] 
		+ \PR[\bar{B}_{jk} = 0 ] \cdot \EX[B_{jk} \mid \bar{A}_{jk}=0]\\
		&+ \sum_k \PR[\bar{B}_{jk} = 1] \cdot \EX[B_{jk} \mid \bar{B}_{jk}=1]
		+ \PR[\bar{B}_{jk} = 0] \cdot \EX[B_{jk} \mid \bar{B}_{jk}=0]\\
		&= \sum_k \PR[\bar{A}_{jk} = 1] \cdot \EX[A_{jk} \mid \bar{A}_{jk}=1] + 
		\sum_k \PR[\bar{B}_{jk} = 1] \cdot \EX[B_{jk} \mid \bar{B}_{jk}=1] \\
		&\le \frac{1}{\tau} \cdot \left(\sum_k \PR[\bar{A}_{jk} = 1] + \sum_k \PR[\bar{B}_{jk} = 1] \right).
	\end{align*}
	
	Thus, it only remains to bound $\sum_k \PR[\bar{A}_{jk} = 1] + \sum_k \PR[\bar{B}_{jk} = 1]$.
	We do this by separately proving $\sum_k \PR[\bar{A}_{jk} = 1] \le \lceil\log_{1.5}\left(m \cdot (2/s_{\min}) \cdot \max_{I_i \in \ui} (U_i -L_i) \right)\rceil$ and $\sum_k \PR[\bar{B}_{jk} = 1] \le \lceil\log_2(m)\rceil$.
	
	Putting all bounds together, we then get
	$
	\EX[X_j] \le \frac{1}{\tau} \cdot (\sum_k \PR[\bar{A}_{jk} = 1] + \PR[\bar{B}_{jk} = 1 ] )
	\le \frac{1}{\tau} \cdot (\lceil\log_{1.5}(m \cdot \frac{2 \cdot \max_{I_i \in \ui} (U_i -L_i)}{s_{\min}})\rceil+\lceil\log_2(m)\rceil).
	$

	\textbf{Upper bound for $\sum_k \PR[\bar{A}_{jk} = 1]$.}
	Note that $\sum_k \PR[\bar{A}_{jk} = 1]$ is just the expected number of iterations of the inner while-loop with $g = \ogc$ during iteration $j$ of the outer while-loop. 
	We bound this number by showing that the value $d$ increases by a factor of at least $1.5$ in each iteration except possibly the first and last ones.
	As $d$ is upper bounded by $a = (2/s_{\min}) \cdot m \cdot \max_{I_i \in \ui} (U_i - L_i)$, the number of iterations then is at most $\lceil\log_{1.5}\left(a\right)\rceil$ for every realization of values.

	To prove that $d$ increases by a factor of $1.5$ in each iteration of the inner while-loop, consider the last execution of the repeat-statement within an iteration of the inner while-loop with $g=\ogc$ (that is not the first or last one). 
	Let $I_i$ be the interval queried in that last iteration of the repeat-statement, let $Q$ denote the set of $I_i$ and all intervals that were queried before $I_i$, let $Q'$ denote the set of all intervals queried before the current execution of the inner while-loop was started, and let $Q_{1/2} = \{I_j \in Q \setminus Q' \mid w_j -L_j \ge \frac{U_j - L_j}{2}\}$.
	Furthermore, let $w$ denote the value of Line~\ref{line:selection1:fsi1} computed before $I_i$ was queried in the following line.
	Note that this notation matches the one used in the algorithm at the execution of Line~\ref{line:selection1:notation} directly after $I_i$ was queried.
	
	Let $\bar{d}$ denote the value $d$ computed by the algorithm \emph{before} $I_i$ was queried. 
	That is, $\bar{d} = \gc(Q',Q_{1/2} \setminus \{I_i\}, w')$ for the value $w'$ that was computed in the previous execution of Line~\ref{line:selection1:fsi1}.
	By choice of $I_i$, we have $\ogc(Q\setminus \{I_i\},I_i,w) \ge \bar{d}$ and $\gc(Q\setminus \{I_i\},I_i,w) \ge \bar{d}/2$ (since $w_i-L_i \ge (U_i-L_i)/2$).
	
	The value $d$ computed after querying $I_i$ is defined as 
	\begin{align*}
		d = \gc(Q',Q_{1/2}, w) &\ge \gc(Q',Q_{1/2}\setminus\{I_i\}, w) + \gc(Q' \cup Q_{1/2} \setminus \{I_i\}, I_i,w)\\ &\ge \gc(Q',Q_{1/2}\setminus\{I_i\}, w) + \gc(Q \setminus \{I_i\}, I_i,w),
	\end{align*}
	where the first inequality holds by definition of $\gc$ and the second inequality holds because $Q\setminus\{I_i\}\supseteq Q'\cup Q_{1/2} \setminus \{I_i\}$ implies $\gc(Q' \cup Q_{1/2} \setminus \{I_i\}, I_i,w) \ge \gc(Q \setminus \{I_i\}, I_i,w)$.
	
	In an iteration of the for-loop with $g = \gc$, the current value $w$ only increases, which implies $w' \le w$.
	Since the greedy value $\gc$ only increases for increasing values $w$, we get  $\gc(Q',Q_{1/2}\setminus\{I_i\}, w) \ge \gc(Q',Q_{1/2}\setminus \{I_i\}, w')$ and, therefore $\gc(Q',Q_{1/2}, w) \ge \gc(Q',Q_{1/2} \setminus \{I_i\}, w') + \gc(Q\setminus \{I_i\}, I_i,w)$.
	Plugging in $\bar{d} = \gc(Q',Q_{1/2}\setminus \{I_i\}, w')$ and $\gc(Q\setminus \{I_i\} ,I_i,w) \ge \bar{d}/2$ yields $d \ge 1.5 \cdot \bar{d}$.
	
	Note that this only shows that $d$ increases by a factor of $1.5$ compared to $\bar{d}$, the old value computed in the previous execution of the repeat statement.
	Our goal was to show that $d$ increases by factor $1.5$ compared to the previous iteration of the inner-while loop.
	However, since the value $w$ only increases, the greedy values $\gc$ only increase.
	This implies that $\bar{d}$ can only be larger than the corresponding value at the end of the previous iteration of the inner while-loop.
	Thus, we have shown that $d$ increases by a factor of $1.5$ compared to its value at the end of the previous iteration of the inner-while loop. 
	This implies $\sum_k \PR[\bar{A}_{jk} = 1] \le \lceil\log_{1.5}\left(m \cdot (2 \cdot \max_{I_i \in \ui} (U_i -L_i))/s_{\min} \right)\rceil$.
	
	\textbf{Upper bound for $\sum_k \PR[\bar{B}_{jk} = 1]$.}
	Next, we use similar arguments to show $\sum_k \PR[\bar{B}_{jk} = 1] \le \lceil\log_2(m)\rceil$.
	Consider an iteration $j$ of the outer while-loop.
	
	We first show that, assuming $g = \ogs$, the value $d$ increases by a factor of at least $2$ in each iteration of the inner while-loop (except possibly the first and last ones) during iteration $j$ of the outer while-loop.
	
	By Line~\ref{line:setgreedymax1}, iterations with $g = \ogs$ only consider values $w$ with $b'_S(Q',w) < 1$ for all $S \in \fs$, where $Q'$ is the set of intervals queried before the start of the current execution of the inner while-loop.
	For an interval $I_i$ with $(w_i-L_i) \ge \frac{1}{2} \cdot (U_i-L_i)$, this means $a'_{i} = \frac{2 \cdot (w_i - L_i)}{s_{s_{\min}}} \ge \frac{U_i-L_i}{s_{\min}} \ge 1$.
	This implies that adding $I_i$ to the solution satisfies all constraints for sets $S$ with $I_i \in S$ (at least for all values $w$ with $b'_S(Q',w) < 1$ for all $S \in \fs$). 	
	Thus, the greedy value $\gs$ of $I_i$ is then equal to the optimistic greedy value $\ogs$ as even if $a_i = U_i-L_i$ the interval cannot satisfy more constraints.
	
	Furthermore, as long as we only consider values $w$ with $b'_S(Q',w) < 1$ for all $S \in \fs$, the greedy values $\gs$ and $\ogs$ of intervals $I_i$ with  $(w_i-L_i) \ge \frac{1}{2} \cdot (U_i-L_i)$ can only increase with increasing $w$ and never increase for decreasing $w$. The reason for this is that such an increase in $w$ can only lead to \emph{more} constraints becoming \emph{not} satisfied for the right-hand sides $b'_S(Q',w)$. Thus, the number of constraints that adding $I_i$ can satisfy only increases.
	This also means, that the value $w$ as used by the algorithm will never decrease during the current iteration of the outer while-loop with $g= \ogs$.
	
	Using these observations, we can essentially repeat the proof of the previous case to show that the value $d$ increases by a factor of at least $2$ in each iteration of the inner while-loop with $g=\ogs$ (except possibly the first and last ones).
	
	Let $I_i$ be the interval queried in the last iteration of the repeat-statement within such an iteration of the inner while-loop, let $Q$ denote the set of $I_i$ and all intervals that were queried before $I_i$, let $Q'$ denote the set of all intervals queried before the current execution of the inner while-loop was started, and let $Q_{1/2} = \{I_j \in Q \setminus Q' \mid w_j -L_j \ge \frac{U_j - L_j}{2}\}$.
	Furthermore, let $w$ denote the value of Line~\ref{line:selection1:fsi1} computed before $I_i$ was queried in the following line.
	Note that this notation matches the one used in the algorithm at the execution of Line~\ref{line:selection1:notation} directly after $I_i$ was queried.
	
	Let $\bar{d}$ denote the value $d$ computed by the algorithm \emph{before} $I_i$ was queried. 
	That is, $\bar{d} = \gs(Q',Q_{1/2} \setminus \{I_i\}, w')$ for the value $w'$ that was computed in the previous execution of Line~\ref{line:selection1:fsi1}.
	By choice of $I_i$, we have $\ogs(Q\setminus \{I_i\},I_i,w) \ge \bar{d}$ and $\gs(Q\setminus \{I_i\},I_i,w) =  \ogs(Q\setminus \{I_i\},I_i,w) \ge \bar{d}$ (since $w_i-L_i \ge (U_i-L_i)/2$ and as argued above).
	
	The value $d$ computed after querying $I_i$ is defined as 
	\begin{align*}
		d = \gs(Q',Q_{1/2}, w) &\ge \gs(Q',Q_{1/2}\setminus\{I_i\}, w) + \gs(Q' \cup Q_{1/2} \setminus \{I_i\}, I_i,w)\\
		&\ge \gs(Q',Q_{1/2}\setminus\{I_i\}, w) + \gs(Q\setminus \{I_i\}, I_i,w).
	\end{align*}
	Here, the first inequality holds by definition of $\gs$, definition of $Q_{1/2}$, and by the assumption that we only consider values $w$ with $b'_S(Q',w) < 1$ for all $S \in \fs$.
	The second inequality holds because $Q\setminus\{I_i\}\supseteq Q'\cup Q_{1/2} \setminus \{I_i\}$ implies $\gs(Q' \cup Q_{1/2} \setminus \{I_i\}, I_i,w) \ge \gs(Q \setminus \{I_i\}, I_i,w)$. Note that this implication only holds under the assumption that $b'_S(Q',w) < 1$ for all $S \in \fs$ for intervals $I_i$ with $a_i' \ge 1$.
	
	As argued above, the value $w$ used by the algorithm only increases, i.e., $w' \le w$.
	Furthermore, again as argued above, the greedy value $\gs$ of intervals $I_j$ with $a_j' \ge 1$ only increases for increasing values $w$.
	Thus, we get  $\gs(Q',Q_{1/2}\setminus\{I_i\}, w) \ge \gs(Q',Q_{1/2}\setminus \{I_i\}, w')$ and, therefore, $\gs(Q',Q_{1/2}, w) \ge \gs(Q',Q_{1/2} \setminus \{I_i\}, w') + \gs(Q\setminus \{I_i\}, I_i,w)$.
	Plugging in $\bar{d} = \gs(Q',Q_{1/2}\setminus \{I_i\}, w')$ and $\gs(Q\setminus \{I_i\} ,I_i,w) \ge \bar{d}$ yields $d \ge 2 \cdot \bar{d}$.
	This implies that $d$ increased by a factor of at least $2$ compared to its old value at the end of the previous iteration of the inner while-loop.
	As the greedy value $\gs$ is upper bounded by $m$, we get  $\sum_k \PR[\bar{B}_{jk} = 1] \le \lceil\log_2(m)\rceil$.
\end{proof}

The \Cref{lem:main:claim1,lem:main:claim2a,lem:main:claim2b} imply that Algorithm~\ref{alg:selection1} satisfies~\Cref{def:alphagreedy} for $\alpha = 2$, $\gamma = 2/s_{\min}$ and $\beta = \frac{1}{\tau} (\lceil\log_{1.5}(m \cdot 2  (\max_{I_i \in \ui} (U_i -L_i))/s_{\min})\rceil  + \lceil\log_2(m)\rceil )$.
Thus,~\Cref{theo:framework:minset} implies~\Cref{theo:minset}.

\section{Final remarks}

In this paper, we provide the first results for \minset under stochastic explorable uncertainty by exploiting a connection to a covering problem and extending techniques for solving covering problems to our setting with uncertainty. 

Since our results, in expectation, break adversarial lower bound instances for a number of interesting combinatorial problems under explorable uncertainty, e.g., matching, knapsack, solving ILPs~\cite{meissner18querythesis}, we hope that our techniques lay the foundation for solving more general problems. 
In particular if we consider the mentioned problems with uncertainty in the cost coefficients and our goal is to query elements until we can identify an optimal solution to the underlying problem, then all these problems admit the same connection to covering problems as \minset and can also be written as covering ILPs with uncertain coefficients and right-hand sides. The difference to \minset is that the number of constraints in the covering representation for these problems might be exponential in the input size of the underlying optimization problem. In a sense, each feasible solution for the underlying optimization problem would define a constraint in the covering representation. This means that using the results of this paper as a blackbox to solve such optimization problems under explorable uncertainty does not yield sublinear competitive ratios as the number of constraints becomes too large. 
For future research, we suggest to investigate whether one can exploit the additional structure in these constraints to still achieve improved competitive ratios. 

With respect to the results of this paper, we leave open whether the second $\log$ factor in our main result, \Cref{theo:minset}, is necessary. Furthermore, the best competitive ratio achievable in exponential running time also remains open. Finally, we remark that we expect our algorithms to be parameterizable by the choice of the balancing parameter. We defined the parameter as the probability that the precise value is larger than the center of the corresponding interval. Alternatively, one could pick any fraction of the interval, say one third, as a threshold and define the parameter as the probability that the precise value lies outside the first third of its interval. As the only parts of our algorithm that use the definition of the balancing parameter are the termination criteria of the repeat-statements, the definition of the scaling factor $\gamma$ and the definition of the set $Q_{1/2}$ in Algorithm~\ref{alg:selection1}, we expect that adjusting these parts to a different balancing parameter suffices to make our algorithms work for such parameters. The choice of the threshold for the balancing parameter would then influence the constant factor within the competitive ratios. 

%
%
%
%
%
%

%

%
%
\bibliographystyle{plain}
\bibliography{bib}
\appendix

\section{Comparison with Maehara and Yamaguchi~\cite{maehara2020}}
\label{app:yamaguchi}
We consider the framework by Maehara and Yamaguchi~\cite{maehara2020} on the set selection problem.
Since their algorithm is designed for maximization problems, we consider the maximization variant of \minset, i.e., we have to find the set $S \in \fs$ of \emph{maximum} $w(S)$ and determine the corresponding value. We remark that all our results also translate to the maximization variant (cf.~\Cref{app:maxset}).

The algorithm by Maehara and Yamaguchi, in each iteration, solves the LP-relaxation of the \emph{optimistic} version of the given ILP, which assumes $w_i = U_i$ for all $I_i \in \mathcal{I}$.
In this case, the ILP under consideration formalizes the set selection problem (and \emph{not} the query minimization problem as formalized by~\eqref{eq:SetSelection}).
The following LP-relaxation formulates the optimistic LP for a set selection instance $(\ui,\fs)$:

\begin{equation*}
	\begin{array}{lll}
		\max &\sum_{I_i \in \ui} x_i \cdot U_i\\
		\text{s.t. }& \sum_{I_i \in S} x_i \ge y_S \cdot |S| &\forall S \in \fs\\
		& \sum_{I_i \in E} x_i \le \sum_{S \in \fs} y_S \cdot |S|\\
		& \sum_{S \in \fs} y_S = 1 &\\
		& 0 \le x_i \le 1& \forall I_i \in \ui\\
		& 0 \le y_S \le 1& \forall S \in \fs\\
	\end{array}
\end{equation*}
Here variable $y_S$ models whether set $S$ is selected as the set of maximum value ($y_S = 1$) or not ($y_S = 0$) and the third constraint makes sure that, at least integrally, exactly one set is selected.
The variables $x_i$ model whether an interval $I_i$ is part of the selected set or not, and the first two constraints ensure that (integrally) exactly the members of the selected set $S$ (with $y_S = 1$) are selected.
Note that we use this ILP instead of~\eqref{eq:minset} because the algorithm by Maehara and Yamaguchi requires variables that correspond to elements that can be queried. 
The algorithm in each iteration solves the LP-relaxation to obtain an optimal fractional solution $(x,y)$, and queries each $I_i$ with probability $x_i$.

In the following, we give an instance of the set selection problem under stochastic explorable uncertainty for which the algorithm has a competitive ratio of $\Omega(n)$.
Let $\mathcal{I} = \{I_0,\ldots,I_n\}$ with $I_0 = (d \cdot n-\epsilon,d \cdot n+\epsilon)$ and $I_i = (0, d + \epsilon)$, $i>0$, for some large $d$ and some small $\epsilon > 0$.
Let $\fs = \{S_1,S_2\}$ with $S_1 = \{I_0\}$ and $S_2 = \{I_1,\ldots,I_n\}$. 
Assume uniform distributions and consider the algorithm that starts by querying $I_0$ and afterwards queries the intervals of $S_2$ in an arbitrary order. 
In expectation, this algorithm only needs a constant number of queries to solve the instance.

The algorithm by Maehara and Yamaguchi on the other hand, in the first iteration, solves the LP-relaxation and obtains the optimal solution $(x,y)$ with $x_0 = 0$ and $x_i = 1$ for all $i > 1$.
This means that all elements of $S_2$ are queried with a probability of $1$.
Thus, the algorithm queries at least $n$ elements, which implies a competitive ratio of $\Omega(n)$. This means that applying the algorithm by Maehara and Yamaguchi~\cite{maehara2020} to the set selection problem does not improve upon the adversarial lower bound.

\section{Equivalence of \minset and Solving~\eqref{eq:SetSelection}}
\label{app:equivalence}

In this paper, we heavily exploit that solving \minset is equivalent to solving~\eqref{eq:SetSelection}. With the following lemma, we show that this is indeed the case.

\begin{restatable}{lemma}{IPEquivalence}
	Solving \minset is equivalent to solving~\eqref{eq:SetSelection}.
\end{restatable}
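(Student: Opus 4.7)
The plan is to exhibit a cost-preserving bijection between feasible query sets $Q$ for \minset and feasible integral solutions $x \in \{0,1\}^n$ of~\eqref{eq:SetSelection}, namely $x_i = 1$ iff $I_i \in Q$. Since $\sum_i x_i = |Q|$ under this correspondence, establishing equivalence of feasibility in both directions immediately yields equality of optimal values and matching optimal solutions.

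First, I would recall the solvability criterion stated in the excerpt: $Q$ solves \minset iff there exists $S^* \in \fs$ with $U_{S^*}(Q) = L_{S^*}(Q) = w^*$ and $L_S(Q) \ge w^*$ for every $S \in \fs$. Using $L_S(Q) = L_S + \sum_{I_i \in S \cap Q}(w_i - L_i) = L_S + \sum_{I_i \in S \cap Q} a_i$, the condition $L_S(Q) \ge w^*$ is precisely $\sum_{I_i \in S \cap Q} a_i \ge w^* - L_S = b_S$, which is the $S$-th constraint of~\eqref{eq:SetSelection} evaluated at the indicator $x$ of $Q$. So the global condition ``$L_S(Q) \ge w^*$ for all $S$'' is exactly ILP-feasibility.

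Next I would show the reverse direction: any feasible $x$ already certifies the existence of the witness $S^*$ required by the solvability criterion. Pick any $S^* \in \arg\min_{S \in \fs} w(S)$. Then $w(S^*) = w^* = L_{S^*} + \sum_{I_i \in S^*} a_i$, so the right-hand side of the $S^*$-constraint is $b_{S^*} = \sum_{I_i \in S^*} a_i$. Feasibility forces $\sum_{I_i \in S^*} x_i a_i \ge \sum_{I_i \in S^*} a_i$, and since $a_i > 0$ for every non-trivial $I_i \in S^*$ (as the intervals are open), we conclude $x_i = 1$ for all such $I_i$. Hence the associated $Q$ contains every non-trivial interval of $S^*$, giving $U_{S^*}(Q) = L_{S^*}(Q) = w(S^*) = w^*$. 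Combined with the first step, $Q$ solves \minset.

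Conversely, given a query set $Q$ that solves \minset, the condition $L_S(Q) \ge w^*$ immediately yields ILP-feasibility of its indicator $x$ via the translation above (note that for $S = S^*$ this inequality becomes the equality $L_{S^*}(Q) = w^*$, which saturates the constraint). The main subtlety, and the one I would be most careful with, is the case of multiple minimizers: if several sets $S$ satisfy $w(S) = w^*$, the ILP constraint $\sum_{I_i \in S} x_i a_i \ge b_S = \sum_{I_i \in S} a_i$ forces each of them to be fully queried. This is consistent with solvability of \minset, since $L_S(Q) \ge w^* = w(S)$ together with $L_S(Q) \le w(S)$ forces $L_S(Q) = w(S)$ and hence the full querying of all non-trivial intervals of $S$. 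Once this point is handled, the bijection $Q \leftrightarrow x$ preserves both feasibility and cost, and the equivalence follows.
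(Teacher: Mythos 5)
Your proposal is correct and follows essentially the same route as the paper's proof in Appendix~\ref{app:equivalence}: both establish the feasibility-preserving bijection $Q \leftrightarrow x$ by rewriting $L_S(Q) = L_S + \sum_{I_i \in S\cap Q}(w_i - L_i)$ to identify the condition $L_S(Q)\ge w^*$ with the ILP constraint, and both use that $b_{S^*} = \sum_{I_i\in S^*}(w_i-L_i)$ with $w_i - L_i > 0$ for non-trivial intervals forces every non-trivial element of a cheapest set to be queried. Your explicit treatment of multiple minimizers is a small presentational addition but does not change the argument.
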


\begin{proof}
	We show the lemma by proving the following claim: 	A query set $Q \subseteq \ui$ is feasible for \minset if and only if vector $x$, with $x_i = 1$ for all $I_i \in Q$ and $x_i = 0$ otherwise, is a feasible solution for the corresponding~\eqref{eq:SetSelection}.
	
	Let $Q$ be feasible for \minset, and let $x$ be a vector with $x_i = 1$ for all $I_i \in Q$ and $x_i = 0$ otherwise. 
	By definition, each feasible solution for \minset must query all non-trivial elements of some cheapest set $S^*$ with $w(S^*) = w^*$ as this is the only way to determine the value $w^*$.
	Let $N(S^*)$ denote those non-trivial elements, then we can rewrite the initial lower limit of $S^*$ as $L_{S^*} = w^* + \sum_{I_i \in N(S^*)} (L_i - w_i)$ (the lower limits of trivial elements are covered by $w^*$). 
	This implies $w^* - L_{S^*} = \sum_{I_i \in N(S^*)} (w_i - L_i)$. As $N(S^*) \subseteq Q$, we get $\sum_{I_i \in S^*} x_i \cdot (w_i -L_i) = \sum_{I_i \in S^* \cap Q} (w_i -L_i) \ge w^* - L_{S^*}$. Thus, $x$ satisfies the constraint for $S^*$.
	Let $S \not= S^*$. 
	Since $Q$ is feasible, $L_{S}(Q)$ has to be at least $w^*$ as otherwise querying $Q$ would not prove that $w^*$ is the minimum set value.
	After querying $Q$, 
	the lower limit of set $S$ is 
	\begin{align*}
		L_{S}(Q) &= \sum_{I_i \in S \setminus Q} L_i + \sum_{I_i \in S \cap Q} w_i\\
		&= \sum_{I_i \in S} L_i - \sum_{I_i \in S \cap Q} L_i + \sum_{I_i \in S \cap Q} w_i\\
		&= L_{S} + \sum_{I_i \in S \cap Q} (w_i - L_i).
	\end{align*}
	Thus, $L_{S}(Q) =  L_{S} + \sum_{I_i \in S \cap Q} (w_i - L_i) \ge w^*$ must hold, which implies $\sum_{I_i \in S \cap Q} (w_i - L_i) \ge w^* - L_{S}$ and $\sum_{I_i \in S} x_i \cdot (w_i - L_i) \ge w^* - L_{S}$.
	We can conclude that $x$ is feasible.
	
	For the other direction consider a feasible solution $x$ for~\eqref{eq:SetSelection} and the corresponding set $Q = \{I_i \mid x_i = 1\}$.
	Consider some cheapest set $S^*$.
	As $\sum_{I_i \in N(S^*)} (w_i-L_i) = w^* - L_{S^*}$ and $\sum_{I_i \in S^* \setminus N(S^*)} (w_i-L_i) = \sum_{I_i \in S^* \setminus N(S^*)} (w_i - w_i) = 0$, for $x$ to be feasible it must hold $x_i = 1$ for all $I_i \in N(S^*)$. Thus, $Q$ contains all non-trivial elements of some cheapest set $S^*$.
	To show that $Q$ is a feasible solution for \minset, it remains to show that $L_{S}(Q) \ge w^*$ for all $S \not= S^*$.
	Consider an arbitrary $S \not= S^*$.
	As $\sum_{I_i \in S} x_i \cdot (w_i-L_i) \ge (w^* - L_{S})$, we have $\sum_{I_i \in S \cap Q} (w_i - L_i) \ge w^* - L_{S}$, which implies $L_{S}(Q) = L_{S} + \sum_{I_i \in S \cap Q} (w_i - L_i) \ge w^*$.
\end{proof}

\section{Hardness of Approximation}
\label{app:inapprox}

In the following, we show that~\eqref{eq:SetSelection} is not only a special case of the multiset multicover problem but also contains the hard instances of this problem.

Erlebach et al.~\cite{Erlebach2016} showed that offline \minset, for the problem variant where it is not necessary to compute the value $w^*$, is NP-hard via reduction from vertex cover. 
In the reduction by~\cite{Erlebach2016} all intervals of the set $S^*$ with $w(S^*) = w^*$ are trivial and, thus,
the result translates to the problem variant where one has to compute $w^*$.
In the following, we strengthen this result by showing that the offline problem is as hard to approximate as~\setcover.

In~\setcover, we are given a set of elements $U=\{1,\ldots,n\}$ and a family of sets $\bar{\fs}=\{\bar{S}_1,\ldots,\bar{S}_m\}$ with $\bar{S}_j \subseteq U$.
The goal is to find a subset $H \subseteq \bar{\fs}$ of minimum cardinality such that $\bigcup_{\bar{S}_j \in H} \bar{S}_j = U$. 

\begin{theorem}
	\label{thm:verification_hardness_approx}
	There is an approximation-factor preserving reduction from \setcover to offline \minset.
\end{theorem}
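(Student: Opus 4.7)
The plan is to construct, from any \setcover instance $(U,\bar{\fs})$ with $|U|=n$ and $|\bar{\fs}|=m$, an offline \minset instance whose~\eqref{eq:SetSelection} formulation is (up to trivial constraints) exactly the \setcover ILP of $(U,\bar{\fs})$, with the same objective. The key structural hint comes directly from the paper's observation that~\eqref{eq:SetSelection} degenerates into \setcover whenever $a_i=1$ for all $I_i\in\ui$ and $b_S=1$ for all $S\in\fs$; so the construction just has to realize these coefficients.

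Concretely, I would build $\ui$ by taking one interval $I_j$ per set $\bar S_j\in\bar{\fs}$, with $L_j=0$, $U_j=1$, and precise value $w_j=1$ (so $a_j=w_j-L_j=1$), together with one additional trivial interval $I_0$ with $L_0=U_0=w_0=\tfrac12$. The family $\fs$ contains the singleton set $S_0=\{I_0\}$ and, for every element $i\in U$, the set $S_i=\{I_j:i\in\bar S_j\}$. Then $w(S_0)=\tfrac12$, while $w(S_i)=|\{j:i\in\bar S_j\}|\ge 1$ for every $i$ (assuming, as is standard in \setcover, that each element is covered by at least one set). Hence $w^*=\tfrac12$ and $S_0$ is the unique minimum-value set.

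Writing out~\eqref{eq:SetSelection} for this instance, the constraint for $S_0$ has $b_{S_0}=w^*-L_{S_0}=0$ and is vacuous, while for each $S_i$ we get $L_{S_i}=0$, $b_{S_i}=\tfrac12$, and $a_j=1$, so the constraint reads $\sum_{j\colon i\in\bar S_j}x_j\ge\tfrac12$, which for $x_j\in\{0,1\}$ is equivalent to $\sum_{j\colon i\in\bar S_j}x_j\ge 1$, i.e.\ the covering constraint for element $i$. The objective $\sum_j x_j$ coincides with the set-cover objective, so feasible solutions of the two ILPs are in objective-preserving bijection via $\bar S_j\in H\iff x_j=1$. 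Thus $\OPT_{\minset}=\OPT_{\setcover}$, and any $\alpha$-approximate feasible query set $Q$ for the \minset instance yields an $\alpha$-approximate set cover $\{\bar S_j:I_j\in Q\}$ (using the equivalence proved in~\Cref{app:equivalence} between feasible query sets and feasible solutions to~\eqref{eq:SetSelection}).

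The only step requiring care is making sure that the constructed instance is genuinely a \minset instance whose optimum truly equals the set-cover optimum — in particular, that $S_0$ is the unique minimum-value set and that $w^*$ is exposed from the start. Choosing $I_0$ trivial with value strictly less than $1$ handles both: $w^*$ is known without any query, and $w(S_i)\ge 1>w^*$ rules out ties, so an optimal query strategy must exactly raise each $L_{S_i}(Q)$ above $w^*$, which by the constraint analysis above is equivalent to covering every element $i\in U$. This is the main (and essentially only) obstacle, and it is resolved by the small gap between $w^*=\tfrac12$ and the $w(S_i)\ge 1$.
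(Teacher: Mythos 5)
Your construction is essentially the paper's own reduction: one uncertainty interval per \setcover set, one \minset set per \setcover element, plus a trivial singleton set realizing the minimum value $w^*$, so that feasibility of a query set is exactly the covering condition and the bijection preserves objective values. The only slip is that you place $w_j=1=U_j$ on the boundary of the open interval $(0,1)$, which violates the standing assumption $w_j\in(L_j,U_j)$; the paper avoids this by taking $w_i=w_r+\epsilon$ strictly below $U_i=w_r+\delta$, and the identical fix (any $w_j$ with $\tfrac12<w_j<1$) leaves your argument intact.
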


\begin{proof}
	Given an instance $(U,\bar{\fs})$ of \setcover, we construct an offline \minset instance as follows:
	\begin{enumerate}
		\item Add a trivial interval $I_r=\{w_r\}$.
		\item Add a single set $C=\{I_r\}$.
		\item For each $j \in U$, add a set $S_j$.
		\item For each $\bar{S}_i \in \bar{\fs}$:
		\begin{enumerate}
			\item Add an interval $I_i=(L_i,U_i)$ with $L_i = 0$, $U_i = w_r + \delta$ and $w_i = w_r + \epsilon$ for a common $\delta > \epsilon > 0$ and some infinitesimally small $\epsilon > 0$.
			\item For each $j \in \bar{S}_i$, add interval $I_i$ to set $S_j$.
		\end{enumerate}
	\end{enumerate}
	
	This reduction clearly runs in polynomial time. 
	To finish the proof, we show the following claim:
	\emph{There is a \setcover solution $H$ of cardinality $k$ if and only if there is a feasible query set $Q$ for the constructed offline \minset instance with  $|Q|=k$.}

	By definition of the constructed instance, set $C = \{I_r\}$ is the set of minimum value $w^* = w_r$.
	Each feasible query set $Q$ for the offline \minset instance must prove that $L_{S}(Q) \ge w_r$ holds for each $S \in \fs\setminus\{C\}$. Recall that $L_{S}(Q)$ is the lower limit of $S$ after querying $Q$.	
	By definition of the constructed intervals and sets, a query set $Q$ is feasible if and only if $|Q \cap S| \ge 1$ for each $S \in \fs\setminus\{C\}$, i.e., $Q$ has to contain at least one element of each $S \in \fs\setminus\{C\}$.
	
	For the first direction, consider an arbitrary set cover $H$ for the given \setcover instance and construct $Q = \{ I_i \mid \bar{S}_i \in H\}$. 
	Clearly $|Q| = |H|$.
	Since $H$ is a set cover, each $j \in U$ is contained in at least one $\bar{S}_i \in H$.
	If $j \in U$ is contained in $\bar{S}_i$, then, by construction, $I_i$ is contained in $S_j$.
	Thus, as $H$ covers all elements $j \in U$, set $Q$ contains at least one member of each $S_j \in \fs\setminus\{C\}$ and, therefore, is a feasible query set.
	
	For the second direction, consider an arbitrary feasible query set $Q$ of the constructed instance and construct $H = \{\bar{S}_i \mid I_i\in Q\}$.
	Clearly, $|Q| = |H|$.
	Since $Q$ is feasible, it contains at least one member of each $S_j \in \fs \setminus \{C\}$.
	If $I_i \in S_j$ is contained in $Q$, then, by construction, set $\bar{S}_i \in H$ covers element $j$.
	As $Q$ contains at least one member of each $S_j \in \fs \setminus \{C\}$, it follows that $H$ covers $U$.
\end{proof}

Dinur and Steurer~\cite{Dinur2014} showed that it is NP-hard to approximate \setcover within a factor of $(1-\alpha) \cdot \ln n$ for any $\alpha>0$, where $n$ is the number of elements in the instance, via a reduction running in time $n^{1/\alpha}$.
Consider the construction of
\Cref{thm:verification_hardness_approx}.
Since the sets in the constructed offline \minset instance correspond to the elements in the input \setcover instance, the construction implies the following corollary.

\begin{coro}
	\label{coro:hardness}
	For every $\alpha > 0$, it is NP-hard to approximate offline \minset within a factor of $(1-\alpha) \cdot \ln m$, where $m=|\fs|$ is the number of sets. The reduction runs in time $m^{1/ \alpha}$.
\end{coro}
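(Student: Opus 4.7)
The plan is to combine the Dinur--Steurer inapproximability of \setcover~\cite{Dinur2014} with the approximation-preserving reduction of~\Cref{thm:verification_hardness_approx} as a black box, and then do a small amount of parameter bookkeeping to translate $\ln n$-hardness of \setcover (where $n=|U|$) into $\ln m$-hardness of offline \minset (where $m=|\fs|$).

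First I would read off from the construction in the proof of~\Cref{thm:verification_hardness_approx} that the resulting \minset instance contains exactly $m=n+1$ sets: one set $S_j$ for each element $j\in U$ together with the singleton $C=\{I_r\}$. The reduction is clearly polynomial time, and the key claim proved there is objective-value-preserving: \setcover solutions of cardinality $k$ correspond bijectively to feasible query sets of cardinality $k$. Consequently, given a hypothetical polynomial-time $(1-\alpha)\ln m$-approximation for offline \minset, post-composing the reduction would yield a polynomial-time $(1-\alpha)\ln(n+1)$-approximation for \setcover. Since $\ln(n+1)/\ln n\to 1$ as $n\to\infty$, for any fixed $\alpha>0$ we can pick some $\alpha'\in(0,\alpha)$ (for concreteness $\alpha'=\alpha/2$) so that $(1-\alpha)\ln(n+1)\le(1-\alpha')\ln n$ for all sufficiently large~$n$; this contradicts the Dinur--Steurer NP-hardness result invoked for parameter $\alpha'$. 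The finitely many small instances are handled in the standard way by hard-coding.

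The running-time bound then falls out of the same composition: the Dinur--Steurer reduction has size $n^{1/\alpha'}$, and~\Cref{thm:verification_hardness_approx} blows this up by only a polynomial factor, so the total pipeline produces an \minset instance of size $n^{O(1/\alpha')}=m^{O(1/\alpha)}$. The only real subtlety in the entire argument is the mild slack in the hardness parameter needed to trade $\ln n$ for $\ln m$, and I do not anticipate any genuine obstacle, since~\Cref{thm:verification_hardness_approx} already performs all the combinatorial work and the bookkeeping above is essentially asymptotic.
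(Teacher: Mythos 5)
Your proposal is correct and follows essentially the same route as the paper: compose the approximation-preserving reduction of~\Cref{thm:verification_hardness_approx} with the Dinur--Steurer hardness of \setcover, using the fact that the sets of the constructed \minset instance correspond to the elements of the \setcover instance. Your explicit handling of the $\ln(n+1)$ versus $\ln n$ slack (via a slightly smaller hardness parameter $\alpha'$) is a detail the paper glosses over, but it is exactly the right bookkeeping and does not change the argument.
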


We show that \Cref{thm:verification_hardness_approx} and \Cref{coro:hardness} apply also to \minset under stochastic explorable uncertainty, even if the precise value $w_i$ of each $I_i$ is drawn independently and uniformly at random from $(L_i,U_i)$.

\inapprox*

\begin{proof}
	The hardness of approximation for offline \minset follows directly from \Cref{coro:hardness}.
	
	We continue to show the statement on \minset under uncertainty with uniform distributions.
	Consider the reduction of \Cref{thm:verification_hardness_approx} with $w_r$ towards $0$ and/or $\delta$ towards $\infty$.
	With $w_r$ running towards $0$ and/or $\delta$ running towards $\infty$, the probability that it is sufficient to query one $I_i \in S_j$ to show that $w_r \le L_{S_j}(Q)$, for some query set $Q$, goes towards $1$.
	Thus, the probability that any set $Q$ that contains at least one member of each $S \in \fs$ is feasible goes towards one as well.
	Thus, $\lim_{w_r \rightarrow 0} \EX[\OPT] = \lim_{\delta \rightarrow \infty} \EX[\OPT] = |H^*|$, where $H^*$ is the optimal solution for the input \setcover instance.
	Therefore, by~\Cref{thm:verification_hardness_approx},
	in order to be $((1-\alpha) \cdot \ln m)$-competitive, the query strategy has to compute an $((1-\alpha) \cdot \ln n)$-approximation for set cover.
	This implies NP-hardness.
\end{proof}

\section{The Maximization Variant of \minset}
\label{app:maxset}

Consider the set selection problem \maxset, which has the same input as \minset, but now the goal is to determine a set of \emph{maximum value} and to determine the corresponding value.

In \maxset, a set $Q \subseteq \ui$ is \emph{feasible} if all non-trivial elements of a set $S^*$ of maximum value $w^*$ are in $Q$ and $U_{S}(Q) \le w^*$ for all sets $S \not= S^*$.
If $Q$ would not contain all non-trivial elements of some set $S^*$ of maximum value, then the maximum value $w^*$ would still be unknown and the problem would not be solved yet.
If $U_{S}(Q) \ge w^*$ for some $S$, then both $S^*$ and $S$ could still be of maximum value or not and we have not yet determined the set of maximum value.
Thus, the problem would not be solved yet.
Our goal is again to adaptively query a feasible query set and minimize the number of queries.

In the following, we briefly sketch why all our results on \minset transfer to \maxset.
Analogous to \minset, one can show that the following ILP characterizes the problem.
That is, a query set $Q$ is feasible if and only if vector $x$ with $x_i = 1$ if $I_i \in Q$ and $x_i = 0$ otherwise is feasible for the ILP.
\begin{equation}\tag{{\sc MaxSetIP}}\label{eq:maxset}
	\begin{array}{lll}
		\min &\sum_{I_i \in \ui} x_i\\
		\text{s.t. }& \sum_{I_i \in S} x_i \cdot (U_i - w_i) \ge (U_{S} - w^*) &\forall S \in \fs\\
		& x_i \in \{0,1\}& \forall I_i \in \ui
	\end{array}
\end{equation}

In the offline setting, the ILP is the exact same special case of the multiset multicover problem as~\eqref{eq:SetSelection}, which suffices to observe that all our observations on offline \minset translate to offline \maxset.
Under uncertainty, the only difference to \minset is, that, in contrast to~\eqref{eq:SetSelection}, a small value $w_i$ leads to a larger coefficient $(U_i-w_i)$ and a small value $w^*$ leads to larger right-hand sides.
Using the inverse balancing parameter $\bar{\tau} = \min_{I_i \in \ui} \bar{\tau}_i$ with $\bar{\tau}_i = \PR[w_i \le \frac{U_i + L_i}{2}]$, it is not hard to see, that, even under uncertainty, \minset and \maxset are essentially the same problem.
Thus, all our results translate.

\end{document}